\documentclass{article}
\usepackage{amsmath, amsfonts, amsthm, geometry, tikz, xcolor, graphicx, hyperref, cleveref, thmtools, thm-restate}
\usepackage{mathrsfs}

\theoremstyle{plain}
\theoremstyle{definition}
\newtheorem{protocol}{Protocol}
\newtheorem{theorem}{Theorem}[section]
\newtheorem{lemma}[theorem]{Lemma}
\newtheorem{proposition}[theorem]{Proposition}

\theoremstyle{definition}
\newtheorem{definition}[theorem]{Definition}

\theoremstyle{remark}
\newtheorem{remark}[theorem]{Remark}

\newcommand{\poly}{\mathrm{poly}}

\newcommand{\F}{\mathbb F}

\newcommand{\N}{\mathbb N}

\newcommand{\cD}{\mathcal D}
\newcommand{\cC}{\mathcal C}
\newcommand{\cH}{\mathcal H}
\newcommand{\dist}{\mathsf{dist}}

\newcommand{\eps}{\varepsilon}
\renewcommand{\epsilon}{\varepsilon}

\renewcommand{\Pr}{\mathbb{P}}
\newcommand{\Prop}[1]{\Pr\left[#1\right]}
\newcommand{\wt}{\mathrm{wt}}

\title{Fast Bounded-Independence Functions and Their Duals\thanks{This is a full version of~\cite{ITCversion}.}}
\author{Martijn Brehm\thanks{Informatics Institute, University of Amsterdam. \href{m.a.brehm@uva.nl}{m.a.brehm@uva.nl}.} \and Yuval Ishai\thanks{Technion and AWS. \href{yuvali@cs.technion.ac.il}{yuvali@cs.technion.ac.il}.  This work is not associated with Amazon. Research supported by ISF grant 3527/24 and BSF grant 2022370.  } \and Nicolas Resch\thanks{Informatics Institute, University of Amsterdam. \href{n.a.resch@uva.nl}{n.a.resch@uva.nl}. Research supported by an NWO (Dutch Research Council) grant with number C.2324.0590.}}

\begin{document}

\maketitle

\begin{abstract}
We continue the study of {\em fast} functions, computable by linear-size circuits, that share useful properties of random functions. Motivated by cryptographic applications, we generalize and improve on previous results in this area, obtaining the following results:
\begin{itemize}
\item For any constant $t$, we construct a fast $t$-wise independent hash function with algebraic degree $\log_2 t$ (over $\F_2$), simultaneously optimizing both asymptotic circuit size and degree.
\item We simplify and improve a recent construction (ITCS 2026) of a family of fast codes with fast duals, both meeting the Gilbert-Varshamov bound. Unlike the previous construction, our construction has negligible failure probability, can accommodate general fields and rates, supports a systematic encoding, and admits fast universal encoders.
\item We strengthen the above to support stronger random-like properties, such as optimal combinatorial list-decoding. This is achieved by constructing, for any constant $t$, a family of fast linear functions that map any $t$ linearly independent inputs to uniform and statistically independent outputs. Prior to our work, this was only known for $t=1$.

\end{itemize}

We demonstrate the usefulness of the above results to cryptography. This includes the first nontrivial protocols for perfectly secure multiparty computation whose circuit complexity scales linearly with the number of parties, as well as protocols for computing encrypted matrix-vector products with optimal asymptotic circuit complexity.

\end{abstract}

\newpage

\tableofcontents

\newpage

\section{Introduction}

An important trend in cryptographic research is to develop cryptographic schemes realizing functionalities as efficiently as possible. Ideally, the cost of performing a task \emph{with} security should be on the order of the cost of performing the same task \emph{without} security. That is, security just results in a \emph{constant} multiplicative computational overhead. Originating from~\cite{IKOS08}, this has been the theme of a line of work on cryptography with constant computational overhead; see~\cite{BoyleCGIKRS23} and the references therein.

As a concrete example, consider the task of hashing a large amount of data, where one generates a short digest from a long string.
One typically wants the hash function to satisfy certain desirable properties of a random function. Ideally, the cost of performing the hashing should only scale linearly with the size of the data. Which ``random-like'' properties can be satisfied while achieving constant computational overhead compared to just reading the input?

Similarly, one can consider the task of {\em encoding} a large vector in a way that satisfies desirable properties of a random linear code. In this case, one is sometimes also required to apply the encoding function of the {\em dual} code. Is it possible to implement {\em both} the primal and the dual encoding with constant overhead? 

The above hashing and encoding primitives serve as useful building blocks in many algorithmic and cryptographic applications. Minimizing their overhead is relevant to all of these applications. 

\subsection{Our Results} \label{subsec:our-results}

To make progress on the above questions, we provide new constructions that achieve {\em constant computational overhead} for random-like functions. This refers by default to implementations by (fan-in 2) Boolean or arithmetic circuits whose size scales linearly with the input and output length; see~\Cref{subsec:computational} for discussion. We refer to such an implementation as being {\em fast}.

For hash functions, the work of Ishai, Kushilevitz, Ostrovsky, and Sahai~\cite{IKOS08} demonstrated the existence of fast {\em pairwise independent} hash functions with arbitrary input and output length.

We first demonstrate that one can build on this result to instead obtain fast \emph{$t$-wise independent} hash functions. That is, even after seeing $t-1$ hash values, the $t$-th value is completely unpredictable (assuming the input is distinct from the previous $t-1$ inputs). This result has already been discussed informally in talks by authors of~\cite{IKOS08}. Here we spell it out in detail, also achieving near-optimal algebraic degree as a function of $t$, which is useful for cryptographic applications. This result also acts as a nice warmup for our later results. 

\begin{theorem} [{Fast $t$-wise hash function; informal version of \Cref{thm:t-wise-hash}}]
	For any constant $t\ge 2$, finite field $\F_q$, input length $k$, and output length $r$, there exists a construction of a fast $t$-wise independent hash function family $\cH$ of functions $h:\F_q^k \to \F_q^r$. For $q=2$, this can be done with (optimal) algebraic degree of $\log_2 t$. 
\end{theorem}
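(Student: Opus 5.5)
The plan is to build the $t$-wise independent family from the fast pairwise-independent hash functions of~\cite{IKOS08} together with a small ``$t$-wise core'' that only ever acts on short strings. First I would compress the input. Let $h_1:\F_q^k\to\F_q^{m}$ be drawn from a fast family with small collision probability, with $m=O(\log$ of the security/failure parameter$)$ or simply $m=\Theta(r)$. A standard way to get such a family is a fast linear code composed with random linear maps, or the IKOS08 construction itself. On the short strings I would then apply a classical $t$-wise independent family, namely a random polynomial of degree $t-1$ over $\F_{q^{m'}}$, evaluated on $h_1(x)$.

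However, collisions in $h_1$ destroy \emph{perfect} $t$-wise independence. Perfect independence is what I intend to prove, so I would instead use the following additive structure. Output $h(x)=g(x)+\sum_{j}\ldots$; more concretely, for each $t$ and $x$ I need $h(x_1),\dots,h(x_t)$ to be uniform and independent for any distinct $x_i$. I would therefore take
\[
h(x)=A x + \text{(correction)},
\]
where $A$ is drawn from a fast linear family in which any $t$ \emph{linearly independent} inputs map to independent uniform outputs. Distinct inputs need not be linearly independent, however. To fix this, I would first apply a fast injective ``lifting'' $\phi:\F_q^k\to\F_q^{k'}$ with $k'=O(k)$ such that distinct inputs have linearly independent images (a $t$-wise independence-inducing map). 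Over $\F_2$ the natural choice is degree-$\lceil\log_2 t\rceil$ monomials, but only on small blocks, to keep linear size: split $x$ into blocks of constant length $b$, and map each block to all its monomials of degree $\le \log_2 t$. This still fails for inputs differing in several blocks, so I would instead tensor only locally and handle dependencies via a recursive doubling argument: $t$-wise from $t/2$-wise by combining $h'(x)$ and a product term of degree one higher. This is what gives degree $\log_2 t$.

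Concretely, the key steps in order are as follows. (1) Establish a fast linear family that is $t$-wise independent on linearly independent inputs, using IKOS-style expander-based encoding followed by a random sparse layer. (2) Reduce arbitrary distinct inputs to a linearly-independent situation with a degree-raising map, doubling independence per degree level. (3) Verify that the circuit size is $O(k+r)$ and that the degree is $\log_2 t$. I expect step (2) to be the main obstacle: it requires that the nonlinear map keep linear size while guaranteeing linear independence of any $t$ distinct images. What I would try first is a recursion $h_{2s}(x)=h_s(x)\cdot h'_s(x)+h''(x)$, with componentwise products of independent lower-level hashes. The uniformity of the outputs would then be argued by conditioning on collisions of the lower-level functions.
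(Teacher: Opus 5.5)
Your proposal has a genuine gap at exactly the step you flag, step (2), and nothing you propose closes it. The architecture ``lift, then apply a $t$-LUOF'' is sound only if you exhibit a fast map $\phi$ of degree $\log_2 t$ under which any $t$ distinct inputs have linearly independent images. You never construct one.

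The blockwise-monomial map fails, as you note yourself, for inputs that differ in several blocks. The recursion $h_{2s}=h_s\cdot h'_s+h''$ comes with no independence argument, and ``conditioning on collisions'' does not give perfect uniformity. Unless you specify the families, nothing forces the product term to help: if the lower-level families are merely uniform (e.g.\ random constants), the product is a constant and all the work falls on $h''$. As written, the recursion also miscounts degree. A componentwise product of two degree-$d$ maps has degree up to $2d$, so starting from degree $1$ at $s=2$ you end at degree $t/2$, which exceeds $\log_2 t$ once $t\ge 8$.

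The missing idea is to make all nonlinearity \emph{local} by first encoding the input with a fast $\F_q$-additive code $\cC_1\le(\F_q^\beta)^m$ of relative distance $1-\eps$ (\Cref{thm:good-code}). Let $x_1,\dots,x_t$ be distinct with encodings $c_i$. A union bound over the $\binom t2$ nonzero codewords $c_i-c_j$ shows that $c_1^{(\ell)},\dots,c_t^{(\ell)}$ are pairwise distinct on at least a $1-\binom t2\eps$ fraction of coordinates $\ell$.

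The paper's construction then proceeds as follows.
\begin{itemize}
\item It applies independent constant-size $t$-wise hashes $h_\ell$ coordinatewise. These are random polynomials of degree $<t$ over $\F_{q^\beta}$, with $\F_q$-degree $\max_{j<t}|j|_q\le(q-1)\log_q t$.
\item It compresses with $G_2^\top$ for a fast code $\cC_2$ of distance $\delta_2>\binom t2\eps$.
\item By \Cref{lem:inner-products-uniform}, it suffices to handle a nonzero test vector $y=(y_1,\dots,y_t)$. By pigeonhole there is a coordinate $\ell^*$ where some $(y_iG_2)^{(\ell^*)}\neq 0$ and all $c_i^{(\ell^*)}$ are distinct.
\item The $t$-wise independence of $h_{\ell^*}$, together with the independence of the $h_\ell$, then makes $\langle y,(H(x_1),\dots,H(x_t))\rangle$ uniform.
\end{itemize}

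Your own architecture can be repaired with the same trick, without any recursion. Take $\phi$ to be encoding by $\cC_1$ (now only $\binom t2\eps<1$ is needed), followed by the per-coordinate Vandermonde lift $\alpha\mapsto(1,\alpha,\dots,\alpha^{t-1})$ over $\F_{q^\beta}$. At a coordinate where the $c_i^{(\ell)}$ are distinct, the lifted blocks are linearly independent over $\F_{q^\beta}$, hence over $\F_q$. Therefore the $\phi(x_i)$ are linearly independent. Composing with the fast $t$-LUOF of \Cref{thm:t-LUOF} gives a linear-size $t$-wise independent family of the same degree. The cost is hidden constants exponential in $t$ (\Cref{rmk:eps-t-LUOF}), rather than the paper's $\poly(t)$.
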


\paragraph{Fast dual codes.} Partially motivated by 
cryptographic applications, a recent work of Brehm and Resch~\cite{BR26} constructed a family of linear codes $\cC\leq \F_2^n$ of rate $1/2$ with the following properties:
\begin{itemize}
	\item Both the code and its dual admit \emph{fast} encoding;
	\item With probability $\geq 1 - 1/\poly(n)$, the code \emph{and its dual} $\cC^\perp$ achieve distance $\eps$-close to the Gilbert-Varshamov (GV) bound. 
\end{itemize}

Recall that a code $\cC$ of rate $R$ achieves distance $\eps$-close to the GV bound, if its (relative) distance is at least $h_q^{-1}(1-R-\eps)$, where $h_q$ is the $q$-ary entropy function (and $h_q^{-1}$ is its inverse); see \Cref{subsec:coding-theory} for more discussion. The GV bound captures the distance achieved by \emph{uniformly} random linear codes, which is typically the best distance one could reasonably hope for. 

The \cite{BR26}~result was based on a rather involved analysis of a code constructed in a manner inspired by repeat-multiple-accumulate codes~\cite{DJM98}. While it provided good concrete efficiency that can be useful for applications (see~\Cref{rem:efficiency}), this result also has several drawbacks:
\begin{itemize}
	\item The code is only constructed over the binary field $\F_2$;
	\item The code (and its dual) need to have rate exactly $1/2$;
    \item The randomized construction of the code has a non-negligible failure probability;
	\item The encoding map is not systematic (which is a problem for some applications);
    \item The construction does not provide a fast {\em universal circuit} for encoding given both the message and the randomness defining the code.
\end{itemize}

Building on the work of Druk and Ishai~\cite{DI14} (which, in turn, builds on~\cite{IKOS08}), we provide a simpler alternative construction that does not suffer from any of these drawbacks. 

\begin{theorem} [Fast code with fast dual; informal version of \Cref{thm:first_dual}]
	For any $R \in (0,1)$ and any prime power $q$, there is a family of codes $\cC\leq\F_q^n$ of rate $R$ along with their duals $\cC^\perp$, for which:
	\begin{itemize}
		\item Both $\cC$ and $\cC^\perp$ admit fast systematic encoding;
		\item $\forall \eps>0$, $\cC$ and $\cC^\perp$ are both $\eps$-close to the GV bound with probability $\geq 1-2q^{-\eps n}$. 
	\end{itemize}
    Furthermore, the encoding matrices for $\cC$ and $\cC^\perp$ are determined by a linear-sized universal circuit. 
\end{theorem}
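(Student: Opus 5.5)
Let $k = Rn$. I will build $\cC$ with a systematic generator matrix $G = [I_k \mid A]$, where $A \in \F_q^{k\times(n-k)}$ is drawn from the fast linear family of Druk and Ishai~\cite{DI14}. That family extends \cite{IKOS08}: it is a family of linear maps $x \mapsto xA$, computable by linear-size circuits, with the property that for every fixed nonzero $x$ the output $xA$ is uniform over $\F_q^{n-k}$. I will also require the family to be realized by a universal circuit of linear size that takes both $x$ and the randomness for $A$. The dual is then automatically systematic: $\cC^\perp$ has generator matrix $[-A^\top \mid I_{n-k}]$. Its encoding $y \mapsto (-yA^\top, y)$ is fast once $y \mapsto yA^\top$ is fast, and this is the point that needs work.

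The key step is to make the transpose fast, together with its random-like property. Transposing a linear circuit preserves its size up to additive terms linear in the input and output lengths; this is the transposition principle, obtained by reversing the wires and swapping XOR/fan-out gates. So the map $y \mapsto yA^\top$ has a linear-size circuit. Moreover, the universal circuit is linear in $x$ for each fixed randomness, so transposing it gives a universal circuit for the dual.

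What is not automatic is that $yA^\top$ is uniform for every nonzero $y$. My plan is to choose the family to be closed under transposition in distribution. I will take $A$ of the form $A = M_1 D M_2$. Here $M_1$ and $M_2$ are fixed fast linear-size encoders (for instance, good fast codes à la Spielman/\cite{DI14}), and $D$ is a layer of fresh uniform random field elements, or random sparse/accumulate layers. Then $A^\top = M_2^\top D^\top M_1^\top$ has the same shape. Uniformity of $xA$ comes from the random layer: a nonzero $x$ becomes a nonzero vector after the injective map $M_1$, and a uniform random diagonal/mixing layer then makes it uniform after the final map. Concretely, I will take $D$ to be a uniformly random matrix restricted to a structured form, such as a Toeplitz matrix. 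Random Toeplitz maps send any fixed nonzero vector to a uniform output, and the transpose of a Toeplitz matrix is again Toeplitz. Toeplitz multiplication is not linear-size, so I will instead follow \cite{DI14}: combine a fast pairwise-independent hash with the random layer inserted symmetrically, so that both $A$ and $A^\top$ have the uniformity property. I expect this step, designing a fast family whose transpose is also $1$-wise uniform on nonzero inputs, to be the main obstacle.

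Given uniformity, the distance bound follows from a union bound. For each nonzero message $x$, the codeword $(x, xA)$ has weight below $\delta n$, with $\delta = h_q^{-1}(1-R-\eps)$, with probability at most $\mathrm{Vol}_q(n-k, \delta n)/q^{n-k}$; more precisely, I bound this by summing over the weight $w$ of $x$. Summing over the fewer than $q^{k}$ messages gives at most $q^{h_q(\delta)n - (1-R)n} \le q^{-\eps n}$. The same computation applies to $\cC^\perp$ using uniformity of $yA^\top$, with rate $1-R$ and the same $\delta$ thresholds appropriately adjusted. A final union bound over the two events gives failure probability $\le 2q^{-\eps n}$. Both encoders are systematic, and both are determined by the single random string via linear-size universal circuits.
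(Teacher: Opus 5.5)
Your architecture matches the paper's. You take $G=[I_k\mid A]$ with $A$ the Druk--Ishai fast LUOF, use the dual generator $[-A^\top\mid I_{n-k}]$, get fast dual encoding from the transposition principle, and finish with a union bound for each code and a final union bound over the two. The gap is the step you yourself flag as the main obstacle: you never prove that $yA^\top$ is uniform over $\F_q^k$ for every fixed nonzero $y\in\F_q^{n-k}$. Instead you sketch candidate designs without analyzing any of them. One is a random Toeplitz layer, which you then discard as too slow. Another is a ``pairwise-independent hash with the random layer inserted symmetrically'', which cannot be linear, since a linear map always sends $0$ to $0$. So the GV bound for $\cC^\perp$ is left unproved. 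Your heuristic for uniformity of $xA$ is also wrong as stated. Injectivity of $M_1$ only makes $xM_1$ nonzero. If that vector has small support, then after a random diagonal layer it has too little entropy to become uniform under any fixed map $M_2$. The Druk--Ishai analysis needs the distances of the first code and of the second code (with $M_2=G_2^\top$) to sum to more than $1$.

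The missing idea is that no special symmetric design is needed, because the LUOF property passes to the transpose for free. By the $q$-ary XOR lemma (\Cref{lem:inner-products-uniform}), $yA^\top\sim\F_q^k$ as soon as $\langle \xi, yA^\top\rangle\sim\F_q$ for every nonzero $\xi\in\F_q^k$. But $\langle \xi, yA^\top\rangle=\langle \xi A, y\rangle$, and $\xi A\sim\F_q^{n-k}$ because $A$ is LUOF and $\xi\neq 0$. The easy direction of the same lemma, applied with the fixed nonzero vector $y$, then gives $\langle \xi A,y\rangle\sim\F_q$. This is \Cref{lem:transpose} in the paper, and it lets you use the Druk--Ishai family as a black box for both $\cC$ and $\cC^\perp$. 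For that specific family your symmetry intuition can in fact be verified: $A=G_1FG_2^\top$ has transpose $G_2F^\top G_1^\top$, which has the same shape with the two codes swapped, and the distance condition is symmetric. You would still have to carry that out, and the general lemma makes it unnecessary.
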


\begin{remark}
\label{rem:efficiency}
	A significant advantage of the construction from~\cite{BR26} is concrete efficiency: encoding can be done by circuits of size $12n$. In contrast, the current approach involves much bigger hidden constants that we did not attempt to optimize.
	
	As an additional comment, for both constructions the encoding circuit has depth $O(\log n)$, which is asymptotically optimal in the bounded fan-in case. However, here too we expect our construction to have a much bigger hidden constant than the previous construction from~\cite{BR26}. 
\end{remark}

\paragraph{Fast, random-like codes.} 
By building off the concept of \emph{local similarity}~\cite{MRRSW24,GM22,MRSY25} -- which informally allows one to argue that A code ensemble ``behaves like'' a uniformly random linear code -- we can show that we can extend upon this construction, to obtain $\cC$ and $\cC^\perp$ possessing stronger properties. For example, they can be proved to have very good \emph{list-decodability}, up to capacity, with optimal list sizes. Recall that a code $\cC \leq \F_q^n$ is said to be $(\rho,L)$-list-decodable if for all $z \in \F_q^n$ the number of codewords within distance $\rho$ from $z$ is at most $L$. Additionally, the rate $1-h_q(\rho)$ is called the list-decoding capacity: informally, it is the maximum rate at which a code can be $(\rho,L)$-list-decodable with any ``reasonable'' (i.e., sub-exponential) list-size $L$. We say a code is $\eps$-close to list-decoding capacity if it has rate at least $1-h_q(\rho)-\eps$. 

\begin{theorem} [Fast list-decodable code with fast dual; informal version of \Cref{thm:second_dual}]
    For any $R \in (0,1)$ and any prime power $q$, there is a randomized procedure producing a code $\cC \leq \F_q^n$ of rate $R$ for which: 
    \begin{itemize}
        \item Both $\cC$ and $\cC^\perp$ admit fast systematic encoding;
        \item $\forall \eps>0$, both $\cC$ and $\cC^\perp$ are $\eps$-close to capacity, with list-size $O(1/\eps)$ with probability $\geq 1-q^{-\Omega(\eps n)}$. 
    \end{itemize}
    Furthermore, the encoding matrices for $\cC$ and $\cC^\perp$ are determined by a linear-sized universal circuit. 
\end{theorem}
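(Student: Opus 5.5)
The plan is to reduce list-decodability of both $\cC$ and $\cC^\perp$ to a ``$t$-wise linear independence'' property of the random encoding map, and then invoke the local-similarity framework~\cite{MRRSW24,GM22,MRSY25}.

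\textbf{Step 1: the linear family.} Following the abstract, first construct, for a constant $t$, a family of fast linear maps $G:\F_q^k\to\F_q^{n-k}$ with the following property: for any $t$ linearly independent inputs $x_1,\dots,x_t$, the outputs $Gx_1,\dots,Gx_t$ are uniform and statistically independent. I would build $G$ as in the $t$-wise hash construction. Apply a fast lossless-style expansion, i.e.\ a sparse linear map that keeps every set of at most $t$ independent vectors ``spread out'' in the IKOS/Druk--Ishai sense. Then compose with an independent random fast linear layer, for example random diagonal scalings interleaved with fixed fast linear codes. The point of this layer is that any $t$ independent vectors end up with jointly uniform images.

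\textbf{Step 2: both encodings are fast.} Take the systematic code $\cC=\{(x,Gx)\}$. Its dual is $\cC^\perp=\{(-G^\top y,y)\}$. So it suffices that the transpose family $G^\top$ is also fast and also $t$-wise independent on independent inputs. Fastness of $G^\top$ follows from the transposition principle: a linear circuit can be transposed with the same size up to constants. I would design the random layers symmetrically, so that $G^\top$ lies in a family of the same form and the Step~1 analysis applies to it as well. Universality of the encoder follows because the randomness enters only as circuit coefficients.

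\textbf{Step 3: list-decodability via local similarity.} $(\rho,L)$-list-decodability fails only if some ball contains $L+1$ codewords. Each such bad configuration has the following form: there is a set of $L+1$ messages whose span has dimension $d\le L+1$, and whose images lie in a common Hamming ball. By Step~1, with $t=L+1=O(1/\eps)$, the images of any $d\le t$ independent messages are jointly uniform. Hence the probability of any fixed bad configuration is exactly what it would be for a uniformly random linear code. The standard union bound for random linear codes (Guruswami--H\aa stad--Kopparty style, via local similarity) then gives failure probability $q^{-\Omega(\eps n)}$ at rate $1-h_q(\rho)-\eps$ with list size $O(1/\eps)$. Apply the same argument to $\cC^\perp$ with $G^\top$ and take a union bound over the two.

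\textbf{Main obstacle.} I expect Step~1 for $t>1$ to be the crux, particularly ensuring exact uniformity for linearly independent but non-disjoint or dense inputs while keeping $G^\top$ in the same family. My first attempt would be to argue inductively over the layers. Each random diagonal-plus-fast-code layer maps any $t$ independent vectors that are not ``low-weight-degenerate'' to jointly uniform vectors, and an expander-based first layer guarantees non-degeneracy with the required probability. The cases the expander cannot handle cost a $q^{-\Omega(\eps n)}$ failure term, which is absorbed into the final bound.
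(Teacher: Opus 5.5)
\textbf{The gap is Step~1.} It is the only new ingredient this theorem needs. You call it the crux, but you give a speculative plan rather than a construction with a proof, and your concrete suggestion does not support the argument you intend.

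With one random scalar diagonal $D=\mathrm{diag}(s_1,\dots,s_N)$ between two fixed encodings, $A=G_1DG_2^\top$, a test combination $\sum_i\langle y_i,x_iA\rangle$ equals $\sum_j s_j\bigl(\sum_i c_i*d_i\bigr)_j$. Here $c_i=x_iG_1$, $d_i=y_iG_2$, and $*$ is the coordinatewise product. So uniformity requires $\sum_i c_i*d_i\neq 0$ for every such configuration. At each coordinate the $t$ images are multiples of a single random scalar, so for $t\ge2$ they are never locally jointly uniform. The IKOS/Druk--Ishai argument (find a position where the $t$ images are locally jointly uniform and the test vector is nonzero, then extract) therefore has nothing to grab onto, and you give no other reason the sum should be nonzero.

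Stacking several independent random multiplicative layers does not help. Each test combination becomes multilinear in the randomness, and such forms are biased toward $0$: already $\Pr[a^\top Nb=0]=1/q+(1-1/q)q^{-\mathrm{rank}\,N}$ for independent uniform $a,b$. So exact uniformity is lost.

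Finally, ``cases the expander cannot handle'' cannot be absorbed as a $q^{-\Omega(\eps n)}$ failure term, unless that term is a single global bad event after which \emph{every} tuple is handled exactly. The local-similarity criterion needs $\Pr[v_1,\dots,v_b\in\cC]\le q^{-(n-k)b}$ for every linearly independent tuple $v_1,\dots,v_b\in\F_q^n$. A per-tuple additive error of size $q^{-\Omega(\eps n)}$ is far larger than $q^{-(n-k)b}$, and the union bound then multiplies it by exponentially many tuples.

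\textbf{The missing idea} is to work over a block alphabet with full random blocks. Take a fast $\F_q$-additive code over $\F_q^\beta$ (Druk--Ishai) of relative distance $1-\eps$ with $\eps<q^{-t}$. For linearly independent messages, each of the $q^t-1$ nonzero combinations $\sum_i\lambda_ic_i$ vanishes on at most an $\eps$ fraction of blocks. Hence on at least a $1-q^t\eps$ fraction of blocks, the restrictions $c_1^{(\ell)},\dots,c_t^{(\ell)}$ are linearly independent in $\F_q^\beta$. Multiply block $\ell$ by an independent uniformly random $\beta\times\beta$ matrix $F_\ell$, which sends linearly independent vectors to jointly uniform ones. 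Then compress with $G_2^\top$ for a fast code of distance $\delta_2>q^t\eps$. By the XOR lemma, any nonzero test vector has a block where some $d_i^{(\ell)}\neq0$ and the $c_i^{(\ell)}$ are independent. So $A=G_1FG_2^\top$ is an exact $t$-LUOF, with no failure probability and size $O(k+r)$ for constant $q$ and $t=L+1$.

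Two smaller corrections:
\begin{itemize}
\item No symmetric design is needed for the dual. The XOR lemma shows directly that $A^\top$ is a $t$-LUOF whenever $A$ is; for $t\le\min\{k,r\}$, both statements say that $\langle A,W\rangle$ is uniform for every nonzero $W$ of rank at most $t$.
\item In Step~3, the systematic codewords $(x,xA)$ are not uniform in $\F_q^n$, so ``exactly as for a random linear code'' must be stated in codeword space, against the parity-check model. For linearly independent $v_1,\dots,v_b$ with $b\le t$, the probability that all lie in $\cC$ is $q^{-(n-k)b}$ if their message parts are independent, and $0$ otherwise. This is precisely the bound the local-similarity framework consumes.
\end{itemize}
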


We comment here that list-size $O(1/\eps)$ is the best one can ``reasonably'' hope for, in the sense that there is no known construction of a code $\eps$-close to list-decoding capacity with list-size $o(1/\eps)$. For context, prior work had managed to construct linear-time encodable codes with such list-decodability~\cite{GM22}, and also managed to construct a code where it and its dual achieve such list-decodability~\cite{MRSY25}: getting both properties simultaneously is, to the best of our knowledge, novel. 

\paragraph{Cryptographic applications.} While our results are broadly useful in any contexts in which error-correcting codes or hash functions are used, they are particularly motivated by several cryptographic applications. 

In the context of information-theoretic cryptography, any linear code along with a dual code can support {\em secure multiplication} in secure multiparty computation (MPC)~\cite{massey1995some,CDM00}. Our results imply the first perfectly secure MPC protocols in which both the security threshold and the computational complexity (measured by Boolean circuit size) scale linearly with the number of parties. Concretely, for any constant-size function $f$ whose inputs and outputs are owned by a constant number of parties, we get a perfectly secure $n$-party protocol, with a near-optimal (passive) corruption threshold of $t=(1/2-\eps)\cdot n$, in which the total circuit size of all parties is $O(n)$. Our result on fast and minimal-degree $t$-wise hash functions can also be useful for simultaneously minimizing communication and round complexity in distributed MPC implementations.

In the context of complexity-based cryptography, our work (as well as the related~\cite{BR26}) is motivated by a recent technique for computing {\em encrypted matrix-vector products}~\cite{BenhamoudaCHIKM25}. This technique relies on two dual codes, where one code is used for encrypting the matrix and the other is used for encrypting the vector. Our construction of fast dual codes gets around some of the limitations of the previous construction from~\cite{BR26} which are relevant to this use case.

See~\Cref{sec:apps} for details on these and other cryptographic applications. 

\subsection{Technical Overview} \label{subsec:our-techniques}

In this section, we briefly discuss how we obtain our main results. First, we recall the IKOS construction of linear-sized pairwise independent hash function~\cite{IKOS08}, and then the DI construction of a linear-time encodable code getting $\eps$-close to the GV bound~\cite{DI14}, which itself builds off the IKOS construction. 

Firstly, the IKOS construction essentially goes as follows:
\begin{enumerate}
    \item First, apply a fast code with large distance to the input. The alphabet for this code will be large, but constant. 
    \item Next, apply a (constant-sized) pairwise-independent hash function to each of the coordinates.
    \item Lastly, apply an extractor for bit-fixing sources, which is obtained from the transpose of a generator matrix for a good code. In particular, one can take the transpose of the code applied in the first step (this also guarantees this step is fast, thanks to a ``transposition'' principle -- cf. \Cref{lem:fast_tranpose}).
\end{enumerate}
A visual depiction of this construction can be found in Figure~\ref{fig:ikos_construction}. 

The intuition for the pairwise independence is as follows. Given two distinct inputs, the first encoding step produces codewords that have many distinct coordinates. On these distinct coordinates, the constant-sized hash functions map the coordinates to independent, uniform values. The last step then extracts from these many coordinates with independent values a single string of independent values. 

To get the $t$-wise independent hash function family, we simply have to replace the constant-sized pairwise-independent hash functions by constant-sized $t$-wise independent hash functions. The reasoning now is very similar. Suppose now we are given $t$ distinct inputs. If the initial code distance is good enough (greater than $1-1/t^2$), then in many of the coordinates all the values of the encodings will be distinct, allowing the inner hash functions to output independent, uniform randomness which can be extracted in the last step. 

\medskip

We now turn to the constructions of the codes. Here, we make use of the concept of a \emph{linear uniform output family (LUOF)}, as defined in~\cite{DI14}. This is a random variable $A$ distributed over $k \times r$ matrices such that, for any fixed \emph{nonzero} vector $x \in \F_q^k$, $xA$ is distributed uniformly over $\F_q^r$. We recall that if $A$ is uniformly random, then it has this property. In this work -- as in~\cite{DI14} -- we look for distributions over $k \times r$ matrices such that the map $x \mapsto xA$ is \emph{fast} (informally, implementable by a linear-sized circuit). For brevity, we will call such an LUOF \emph{fast}.

The basic idea of~\cite{DI14} is to construct a fast LUOF over $k \times n$ matrices. It is relatively straightforward that this suffices for obtaining codes $\eps$-close to the GV bound.\footnote{Indeed, one can inspect the ``standard'' argument that random linear codes -- i.e., codes defined by sampling a uniformly random generator matrix -- get $\eps$-close to the GV-bound with high probability, and observe that the only property one uses is that uniformly random matrices give an LUOF.} To get the fast LUOF, \cite{DI14} keep the first and third steps the same as in the IKOS construction, but instead of applying pairwise independent hash functions, they simply apply constant-sized matrices to each of the coordinates, which are viewed as length-$\beta$ vectors over the base field $\F_q$ (recall the first encoding step may increase the alphabet size, but only by a constant amount). The reasoning for why this gives an LUOF is very similar. Given a nonzero message vector, the first encoding step gives many coordinates on which it is nonzero, and on these coordinates the small matrices map them to uniform values, which the final step can extract into a single uniform string, as required. 

Now, to get codes where both it and its dual have good distance, we look at the code with systematic encoding map $x \mapsto (x,xA)$, where now $A$ is supported on $k \times r$ matrices. This gives a code $\cC$ of block-length $n:=k+r$ and dimension $k$, and if $A$ is fast, then so is the map $x \mapsto (x,xA)$. Additionally, it is easy to verify that the code defined by the encoding map $y \mapsto (-yA^\top,y)$ is dual to $\cC$, and the transposition principle guarantees that computing $y \mapsto -yA^\top$ is also fast. Furthermore, we show that if $A$ is LUOF, then so is $A^\top$: the crucial ingredient in this argument is Vazirani's XOR Lemma~\cite{ChorGHFRS85,Rao07}. Finally, a simple argument shows then that both $\cC$ and $\cC^\perp$ get $\eps$-close to the GV bound with high probability, as desired. 

\medskip

To consider more sophisticated properties like list-decodability, we observe that it suffices to consider the following generalization of the LUOF property. We say a random matrix $A$ supported on $k \times r$ matrices is $t$-LUOF if, given $t$ \emph{linearly independent} vectors $x_1,\dots,x_t \in \F_q^k$, the vectors $x_1A,\dots,x_tA$ are uniform and independent over $\F_q^r$.\footnote{In this second case, we mean independent in the ``stochastic'' sense.} Again, we observe that if $A$ is a uniformly random matrix, then it has this property for any $t \leq k$ (for $t>k$, the concept is vacuous, as there are no sets of $k+1$ or more linearly independent vectors in $\F_q^k$). 

Our first observation is that, by assuming the first code of the DI construction has sufficiently good distance ($>1-1/q^t$), we can obtain a $t$-LUOF via their construction. The argument is similar to above: now, we are given $t$ linearly independent vectors $x_1,\dots,x_t$, and we need to argue that in many coordinates we still have linearly independent vectors (recall that each coordinate of this first encoding step is a length-$\beta$ vector). This indeed follows assuming the distance is sufficiently large. Then, on these good coordinates, the inner matrices map the linearly independent vectors to independent, uniform values, which are then extracted in the final step. 

Having obtained thus a $t$-LUOF $A$, we can then consider the (random) code $\cC$ with encoding map $x \mapsto (x,xA)$, along with its dual defined by the encoding $y \mapsto (-yA^\top,y)$. That we can then prove these codes achieve sophisticated properties like list-decoding then follows from the theory of local similarity, as developed in recent works~\cite{MRRSW24,GM22,MRSY25}. Morally speaking, one observes that if one wishes to prove a code is $(\rho,L)$-list-decodable, then one is interested in computing quantities like $\Pr[x_1,\dots,x_{L+1} \in \cC]$, where $x_1,\dots,x_{L+1}$ are all within distance $\rho$ of some vector $z$. Suppose one has an argument that shows uniformly random linear codes $\cC'$ are with high probability $(\rho,L)$-list-decodable, where $\cC'$ is defined as the kernel of a uniformly random $(n-k) \times n$ matrix. In this case, we know that $\Pr[x_1,\dots,x_{L+1} \in \cC'] = q^{-(n-k)\dim(\mathrm{span}\{x_1,\dots,x_{L+1}\})}$. One can verify that for the random $\cC$ defined as above, we have $\Pr[x_1,\dots,x_{L+1} \in \cC'] = q^{-(n-k)\dim(\mathrm{span}\{x_1,\dots,x_{L+1}\})}$, assuming $A$ is $(L+1)$-LUOF. That is, if we take $t=L+1$, then we get list-decodability for $\cC$ (or, more precisely, whatever list-decodability is known to hold for $\cC'$ whp, also holds for $\cC$ whp). Also, generalizing the reasoning above we can show that if $A$ is $t$-LUOF, then so is $A^\perp$, in which case we can derive analogous list-decodability guarantees for $\cC^\perp$.

\subsection{Related Work} \label{subsec:related-work}

While we have already mentioned the papers most directly connected to our work, we now mention a few more relevant related works. 

Work by G\'al, Hansen, Kouck\'y, Pudl\'ak and Viola~\cite{GHKPV13} studies circuit sizes for encoding maps of asymptotically good codes over the binary alphabet. However, their model allows for gates with \emph{unbounded} fan-in (whereas we allow only fan-in $2$), and additionally they insist of small depth (constant, or $\log^* n$), whereas we allow for larger depth (say, $O(\log n)$). 

Another related work by Li~\cite{li2026secretsharingsuperconcentrator} considers the \emph{unbounded} arithmetic circuit complexity (i.e., gates are arbitrary and have unbounded fan-in) of secret-sharing schemes. The main result is a demonstration that the graph underlying the circuit computing the shares for such a scheme must be a \emph{superconcentrator}. Based on this fact, lower bounds on the size of such a sharing circuit are derived.

A final related work is due to Fan, Li and Yang~\cite{FanLY22}, which builds a pairwise independent hash function with optimal size of $\approx 2n$ (where $n$ is the input length) in the $B_2$ circuit model (i.e., Boolean circuits with arbitrary fan-in 2 gates).\footnote{They obtain additional bounds for other circuit classes.} Unlike the IKOS construction (and ours), their construction only yields highly compressive mappings (that is, the output length is $\approx n^{0.1}$), it is not \emph{perfectly} pairwise independent (only statistically close), and only achieves linear size when the circuit can depend on the key (rather than having a single universal circuit of linear size that takes both the key and the input).

\section{Preliminaries} \label{sec:prelims}
We use $\N$ to denote the set of positive integers, and for $n \in \N$ we abbreviate $[n]:=\{1,2,\dots,n\}$. Throughout, $q$ denotes a prime power, and $\F_q$ a finite field with $q$ elements. By default, we view vectors $x \in \F_q^n$ as row vectors. We will also use tuple notation to denote concatenation: given $x \in \F_q^n$ and $y \in \F_q^m$, $z=(x,y)$ denotes the length $n+m$ vector such that $z_i = x_i$ for $i \in [n]$, and $z_i=y_{i-n}$ for $i \in \{n+1,\dots,n+m\}$. 

For a distribution $\cD$ over a finite set $U$ (which itself is a function $\cD:U \to [0,1]$ satisfying $\sum_{u \in U}\cD(u)=1$), we denote by $X \sim \cD$ a random variable distributed according to $\cD$, i.e., for all $u \in U$, $\Pr[X=u]=\cD(u)$. For a finite set $S$, we write $X \sim S$ to denote that $X$ is a random variable distributed uniformly at random over the set $S$, i.e., $\Pr[X=u] = \frac{1}{|S|}$ if $u \in S$, and $0$ otherwise. 

For vectors $x,y \in \F_q^n$, their \emph{inner-product} is defined as $\langle x,y\rangle := \sum_{i=1}^n x_iy_i \in \F_q$. We recall that the inner-product thus defined is commutative ($\langle x,y\rangle = \langle y,x\rangle$), and that additionally it is linear in both arguments, e.g., $\langle \alpha x + \beta y,z\rangle = \alpha \langle x,z\rangle + \beta \langle y, z\rangle$, where also $z \in \F_q^n$ and $\alpha,\beta \in \F_q$. For a subspace $V \leq \F_q^n$, we denote its \emph{dual space} by $V^\bot := \{ u \in \F_q^n : \forall v \in V,~\langle u,v\rangle = 0\}$, which we recall is also a subspace and has dimension $\dim(V^\bot) = n - \dim(V)$, and that additionally $(V^\bot)^\bot = V$. Lastly, we recall that given a matrix $M \in \F_q^{a \times b}$ and vectors $x \in \F_q^a$ and $y \in \F_q^b$, we always have $\langle x M,y\rangle = \langle x,yM^\top\rangle$, where $M^\top \in \F_q^{b \times a}$ is the transpose of $M$.

\subsection{Coding Theory} \label{subsec:coding-theory}

An \emph{$\F_q$-linear code} $\cC$ is an $\F_q$-linear subspace of $\F_q^n$ for some $n \in \N$. The \emph{block-length} of $\cC$ is the value $n$. If $\dim(\cC) = k$, then $\cC$'s \emph{rate} is $R(\cC):=\frac kn$. Given two vectors $x,y \in \F_q^n$, the \emph{(relative) Hamming distance} between them is $d(x,y):=\frac{1}{n}\;|\{i \in [n]:x_i \neq y_i\}|$. Additionally, for a vector $x \in \F_q^n$ its \emph{(relative) Hamming weight} is $\wt(x) := \frac{1}{n}\;|\{i \in [n] : x_i \neq 0\}| = \frac{1}{n}\;|\mathrm{supp}(x)|$, where we've also defined the \emph{support} of a vector as the set of nonzero coordinates. Note that $\wt(x) = d(x,0)$ and that $d(x,y) = \wt(x-y)$. Because of this latter equality, for any $\F_q$-linear code we have that the \emph{(relative) minimum distance} $\delta(\cC):=\min\{d(c,c'):c,c' \in \cC,c \neq c'\}$ is also equal to $\min\{\wt(c):c \in \cC \setminus \{0\}\}$. Lastly, for a linear code $\cC$, we refer to $\cC^\perp$ as its \emph{dual code}.

Every $\F_q$-linear code of dimension $k$ admits a generator matrix $G \in \F_q^{k \times n}$, which is a rank-$k$ matrix such that $\cC = \{xG:x \in \F_q^k\}$. It additionally admits a parity-check matrix $H \in \F_q^{(n-k) \times n}$, which is a rank-$n-k$ matrix such that $\cC = \{x \in \F_q^n: Hx^\top=0\}$. It follows that if $H$ is a parity-check matrix for $\cC$, then $H$ is a generator matrix for $\cC^\perp$. Additionally, the matrices satisfy $GH^\top = 0$. If $G$ is of the form $G = \begin{bmatrix} I_k ~|~ A \end{bmatrix}$ where $I_k \in \F_q^{k \times k}$ is the $k \times k$ identity matrix and $A \in \F_q^{k \times (n-k)}$, then $G$ is said to be \emph{systematic}. Note that if $G=\begin{bmatrix} I_k ~|~ A \end{bmatrix}$ is a systematic generator matrix for $\cC$, then $H = \begin{bmatrix} -A^\top ~|~ I_{n-k} \end{bmatrix}$ is a parity-check matrix for $\cC$, as 
\[
    GH^\top  = \begin{bmatrix} I_k ~|~ A \end{bmatrix} \begin{bmatrix} -A \\ I_{n-k} \end{bmatrix} = -A + A = 0 \ .
\]
We will also be interested in codes where the underlying alphabet is in fact a vector space over $\F_q$. Specifically, for $\beta \in \N$ we will refer to an $\F_q$-subspace $\cC \leq \left(\F_q^\beta\right)^n$ as an \emph{$\F_q$-additive code over $\F_q^\beta$}. The block-length is still $n$, while the rate of such a code is now $\frac{k}{\beta n}$, and we define its minimum distance with respect to the alphabet $\F_q^\beta$. That is, given a codeword $c \in \cC$, writing it as $c = (c^{(1)},\dots,c^{(n)})$ with each $c^{(i)} \in \F_q^\beta$, we define $\wt_\beta(c) := \frac{1}{n} |\{i \in [n] : c^{(i)} \neq 0\}|$, and then $\delta(\cC) := \min \{\wt_\beta(c):c \in \cC\setminus \{0\}\}$. Note that all $\F_q$-additive codes over $\F_q^\beta$ admit a generator matrix $G$ of size $k \times \beta n$. 

We often deal not with a single code $\cC$, but rather with an infinite family of codes $\{\cC_i\}_{i \geq 1}$ for an increasing sequence of block-lengths $n_i\in\N$, each with dimension $\{k_i\}_{i \geq 1}$ and distance $\{\delta_i\}_{i \geq 1}$. The rate of the family of codes can then be defined as $R:=\liminf_{i\to\infty} k_i/n_i$ and the relative distance as $\delta:=\liminf_{i\to\infty} \delta_i$. We call a family of codes \textit{asymptotically good} whenever both $R>0$ and $\delta>0$. We assume the field $\F_q$ is the same within a family. 

The $q$-ary Gilbert-Varshamov bound gives a lower bound on the asymptotic relative distance $\delta$ given the rate $R$ for a family of codes over $\F_q$. In particular, it represents the minimum distance achieved by uniformly random (linear) codes (defined, e.g., by sampling a uniformly random generator matrix). Its definition makes use of the \emph{$q$-ary entropy function} $h_q:[0,1-1/q] \to [0,1]$, which itself is defined as
\[
	h_q(x):=x\log_q(q-1) + x\log_q\frac1x + (1-x)\log_q\frac{1}{1-x} 
\]
for $x \in (0,1-1/q]$ and $h_q(0)=0$ at the left endpoint. This function increases continuously on its domain, and therefore admits a continuous inverse $h_q^{-1}:[0,1] \to [0,1-1/q]$. The \emph{Gilbert-Varshamov (GV) bound} states that there exist codes of distance $\geq \delta$ and rate $\geq 1-h_q(\delta)$. In other words, codes of rate $R$ can achieve distance $\geq h_q^{-1}(1-R)$. If a rate $R$ code achieves distance $h_q^{-1}(1-R-\eps)$ for some $\eps>0$, we will say that it is \emph{$\eps$-close to the GV bound}. We recall that uniformly random codes are $\eps$-close to the GV bound with probability $\geq 1-q^{-\eps n}$. 

Lastly, we recall the concept of \emph{list-decodability}. For $\rho \in (0,1-1/q)$ and $L \in \N$, we say that a code $\cC \subseteq \Sigma^n$ (for an arbitrary finite alphabet $\Sigma$) is $(\rho,L)$\emph{-list-decodable} if, for all $z \in \Sigma^n$, we have
\[
    |\{c \in \cC: d(c,z) \leq \rho\} | \leq L \ .
\]
It is known codes of rate $R$ can achieve decoding radius $\rho$ so long as $\rho < h_q^{-1}(1-R)$. More precisely, there exist codes of rate $1-h_q(\rho)-\eps$ that are $(\rho,O(1/\eps))$-list-decodable; however, any code of rate $1-h_q(\rho)+\eps$ is not $(\rho,L)$-list-decodable for any $L \leq q^{o(n)}$. For this reason, $1-h_q(\rho)$ is called \emph{list-decoding capacity}. We will say that a code is \emph{$\eps$-close to list-decoding capacity} if it has rate $1-h_q(\rho)-\eps$. It is known that there exist codes $\eps$-close to list-decoding capacity with list-size $O(1/\eps)$; in fact, random linear codes achieve this with high probability~\cite{GHK11, LW20}. Furthermore, it is not known if codes $\eps$-close to list-decoding capacity exist with list-size $o(1/\eps)$; thus, hoping for list-size $O(1/\eps)$ is a reasonable target. 

\subsection{Computational Model} \label{subsec:computational}
We will be interested in encoding functions and hash functions that admit ``fast'' algorithms. This refers by default to having a uniform family of linear-size arithmetic circuits, but can also refer to other computational models in which Spielman's error correcting codes can be implemented in linear time. See~\cite{Spielman} for discussion. 

More concretely, our default complexity measure refers to the size of an arithmetic circuit over a finite field $\F_q$. Wires carry field elements into gates. Gates have two input wires and output the addition, multiplication or subtraction of the incoming field elements. 
We allow gates to have arbitrary fan-out. We also allow gates with a fan-in of 0, which represent either an input variable, a constant scalar, or (for randomized circuits) an independently uniformly random value from $\F_q$. We will typically consider {\em linear} circuits in which one of the inputs for each multiplication gate is a scalar

The {\em size} of the circuit is defined as the number of wires, the {\em depth} as the length of the longest path from an input to an output, and the {\em algebraic degree} as the highest (total) degree of an output as an $\F_q$-polynomial in the inputs. 

We consider infinite families of circuits $C_k$ over $\F_q$ where $C_k$ has $k$ input variables. We consider uniform families, which means there is a polynomial time algorithm that maps $1^k$ to a description of $C_k$. We call a function $f:\F_q^k\to \F_q^r$ \emph{fast} if there exists a uniform circuit family $C_k$ which computes $f$, where each $C_k$ has size $O(k)$. We will sometimes apply this notion also to families of {\em randomized} functions computed by randomized circuits.  Finally, for a family of {\em linear} functions, we assume by default that the family of linear-size circuits are {\em linear} in the sense defined above. 

\subsection{Hash Functions}
First, we recall the definition of a hash function with bounded independence. 

\begin{definition} [$t$-wise independent hash function] \label{defn:t-wise-hash}
	Let $q,k,r \in \N$ and let $\cH$ denote a distribution over functions from $[q]^k \to [q]^r$. We say that $\cH$ is a \emph{$t$-wise independent hash function family} if, for all $x_1,\dots,x_t \in [q]^k$ \emph{distinct}, we have, over $H \sim \cH$, 
	\[
		(H(x_1),\dots,H(x_t)) \sim [q]^{rt} \ .
	\]
	In other words, for all $y_1,\dots,y_t \in [q]^r$ it holds that
	\[
		\Pr\left[\bigwedge_{i=1}^t H(x_i)=y_i\right] = q^{-rt} \ .
	\]
\end{definition}

We recall the following classical construction of $t$-wise independent hash functions, based on degree $\leq (t-1)$-polynomials~\cite{CarterWegman1979Universal,Vadhan2012Pseudorandomness}.

\begin{proposition} [Construction of $t$-wise independent hash functions]\label{prop:poly-t-wise-hash}
	Let $q$ be a prime power, let $\beta \in \N$ with $q^\beta \geq t$, and let $\cH$ denote the uniform distribution over functions $\F_{q^\beta} \to \F_{q^\beta}$ of the form $h(x) = \sum_{i=0}^{t-1}h_i x^i$. That is, the domain and range are interpreted as the finite field with $q^\beta$ elements, and the sampled functions are all degree $\leq (t-1)$ polynomials. Then $\cH$ is a $t$-wise independent hash function family. 
\end{proposition}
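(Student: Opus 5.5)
We argue directly via polynomial interpolation over the field $\F_{q^\beta}$, writing $Q := q^\beta$. Fix distinct inputs $x_1,\dots,x_t \in \F_Q$ and arbitrary targets $y_1,\dots,y_t \in \F_Q$. Since the coefficient vector $(h_0,\dots,h_{t-1})$ is uniform over $\F_Q^t$, it suffices to show that exactly one coefficient vector satisfies $h(x_j)=y_j$ for all $j\in[t]$. That would give
\[
\Pr\left[\bigwedge_{j=1}^t H(x_j)=y_j\right] = Q^{-t},
\]
which is precisely \Cref{defn:t-wise-hash} with $r=1$ over the alphabet $[Q]$.

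To count solutions, we rewrite the evaluation conditions as a linear system $V h^\top = y^\top$. Here $V\in\F_Q^{t\times t}$ is the Vandermonde matrix with entries $V_{j,i}=x_j^{i}$ for $i=0,\dots,t-1$. Its determinant is $\prod_{j<j'}(x_{j'}-x_j)$, which is nonzero because the $x_j$ are distinct. The hypothesis $Q\ge t$ is what makes $t$ distinct points available in the first place. Hence $V$ is invertible and the system has the unique solution $h^\top=V^{-1}y^\top$. Equivalently, the map $h\mapsto (h(x_1),\dots,h(x_t))$ is a linear bijection $\F_Q^t\to\F_Q^t$, so it pushes the uniform distribution forward to the uniform distribution.

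There is an alternative way to get injectivity without determinants. A nonzero polynomial of degree at most $t-1$ has at most $t-1$ roots, so two coefficient vectors that agree on $t$ distinct points must coincide. Surjectivity then follows from comparing dimensions, or explicitly from Lagrange interpolation.

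No step poses a real obstacle. The only bookkeeping point is to identify $\F_{q^\beta}$ with $\F_q^\beta$, or with $[q]^\beta$, via a fixed basis. This identification matches the ``domain and range'' interpretation in the statement, and it preserves uniformity.
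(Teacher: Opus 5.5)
Your proof is correct. The paper does not prove this proposition itself; it cites the classical Carter--Wegman construction (and Vadhan), and your Vandermonde/interpolation bijection $h \mapsto (h(x_1),\dots,h(x_t))$ is exactly the standard argument behind it. You implicitly equate ``uniform over functions of this form'' with ``uniform over coefficient vectors in $\F_{q^\beta}^t$''; your root-counting remark already justifies this, since $q^\beta \ge t$ makes the coefficient-to-function map injective.
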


Observe that sampling according to $\cH$ as defined above requires $t\beta \log_2 q$ bits of uniform randomness. By considering the entropy of the resulting distribution $(H(x_1),\dots,H(x_t))$, this is optimal, as the uniform distribution over $\F_q^{\beta \cdot t}$ has entropy $\beta t \log_2(q)$, and by the data-processing inequality entropy cannot increase. 

In particular, if $\cH$ only ever output functions represented by polynomials of degree $\leq t'-1$ with $t'<t$, then $\cH$ would be supported on a set of size $\leq q^{\beta t'}$, and therefore $\cH$ would have entropy at most $\log q^{\beta t'} = \beta t' \log(q)$, a contradiction to the above. Hence, guaranteeing that every hash function has degree $\leq t-1$ as a polynomial over $\F_{q^\beta}$ is optimal. 

However, we can also think of the functions as being maps from $\F_q^\beta \to \F_q^\beta$ -- that is, inputs and outputs are length-$\beta$ vectors with coordinates in $\F_q$ -- then the functions can be chosen to have smaller degree. That is, one can fix an $\F_q$-linear isomorphism $\F_q^\beta \to \F_{q^\beta}$ such that, under this identification, the function
\[
	F : \F_{q^\beta} \to \F_{q^\beta}, ~~ x \mapsto \sum_{i=0}^{t-1}F_i \cdot x^i 
\]
is represented by a function $G = (G_1,\dots,G_\beta):\F_q^\beta \to \F_q^\beta$ where each $G_j$ has degree $\leq (q-1)\log_q(t)$~\cite[Section~2]{BC12}. More precisely, one can bound the degree of each $G_j$ by
\[
    \max\{|i|_q:F_i \neq 0\} \ ,
\]
where we've defined $|i|_q$ as $\sum_{j=0}^{\beta-1}x_j$ where $i = \sum_{j=0}^{\beta-1}x_jq^j$. In other words, it is the sum of the coordinates in the base-$q$ expansion of $i$. 

Thus, we can have $t$-wise independent hash functions with degree $(q-1)\log_q(t)$: if $i \leq t-1$, one can verify that $|i|_q \leq (q-1)\log_q(t)$. When $q=2$, this bound is just $\log_2(t)$. We record this fact below. 

\begin{lemma} \label{lem:low-degree-hash}
	Let $q$ be a prime power and let $\beta \in \N$ with $q^\beta \geq t$. There exists a $t$-wise independent hash function family $\cH$ of functions from $\F_q^\beta \to \F_q^\beta$ such that every function $h \in \mathrm{supp}(\cH)$ has algebraic degree $\leq (q-1)\log_q(t)$. 
\end{lemma}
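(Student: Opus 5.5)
The plan is to take the family $\cH$ of \Cref{prop:poly-t-wise-hash}: sample $F_0,\dots,F_{t-1}$ uniformly from $\F_{q^\beta}$ and set $F(x)=\sum_{i=0}^{t-1}F_i x^i$. Fix an $\F_q$-linear isomorphism $\phi:\F_q^\beta\to\F_{q^\beta}$ and define $h=\phi^{-1}\circ F\circ\phi$. Since $\phi$ is a bijection, distinct inputs in $\F_q^\beta$ map to distinct field elements. The joint distribution of $t$ outputs is therefore the image under the bijection $(\phi^{-1})^{\times t}$ of a uniform tuple, which is uniform. So $t$-wise independence transfers directly from \Cref{prop:poly-t-wise-hash}, and the real work is the degree bound.

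For the degree, I would analyse each monomial $x\mapsto c\,x^i$ separately, because the degree of a sum is at most the maximum of the degrees. Write $i=\sum_j a_j q^j$ in base $q$, with $0\le a_j\le q-1$, so that
\[
x^i=\prod_j \bigl(x^{q^j}\bigr)^{a_j}.
\]
Each Frobenius power $x\mapsto x^{q^j}$ is $\F_q$-linear on $\F_{q^\beta}$. Hence, in $\phi$-coordinates, every coordinate of $x^{q^j}$ is a linear form in the coordinates of $x$. Multiplication in $\F_{q^\beta}$ is $\F_q$-bilinear in coordinates, so a product of $m$ elements whose coordinates have degrees $d_1,\dots,d_m$ has coordinates of degree at most $\sum d_\ell$. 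Thus $x^i$ is a product of $|i|_q=\sum_j a_j$ linear factors and has coordinate degree at most $|i|_q$. Multiplying by the constant $F_i$ is $\F_q$-linear and does not raise the degree. Each coordinate $G_j$ of $h$ therefore has degree at most $\max\{|i|_q:F_i\neq 0\}\le\max_{i\le t-1}|i|_q$.

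It remains to show that $|i|_q\le (q-1)\log_q t$ for every $0\le i\le t-1$. This is the step I expect to need the most care, because the bound as stated is not valid for arbitrary $t$. For example, with $q=3$ and $t=3$ we get $i=2$, $|2|_3=2$, but $2\log_3 3=2$, which is fine. With $t=2$ and $q=3$, however, we get $i=1$ with $|1|_3=1$, while $(q-1)\log_q 2=2\log_3 2\approx 1.26$, which is also fine. I would argue as follows. Let $m$ be the number of base-$q$ digits of $i$, so that $q^{m-1}\le i\le t-1$. Then the leading digit $a$ satisfies $a\,q^{m-1}\le i$, and
\[
|i|_q\le a+(m-1)(q-1).
\]
It then suffices to show $a+(m-1)(q-1)\le (q-1)\log_q(a q^{m-1}+1)$. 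This reduces to $a\le (q-1)\log_q(a+1)$ for $1\le a\le q-1$, which holds by concavity of $\log_q(a+1)$ in $a$, since equality holds at the endpoints $a=0$ and $a=q-1$. For $q=2$ this gives degree at most $\log_2 t$, and taking the floor gives the integer statement.
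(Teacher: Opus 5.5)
Your construction, the transfer of $t$-wise independence through $\phi$, and the degree bound $\max\{|i|_q : F_i\neq 0\}$ all match the paper. The paper cites [BC12] for that bound, and your Frobenius-linearity plus bilinearity argument is the standard proof of it. The gap is in the final arithmetic step, which the paper only asserts (``one can verify''). The inequality you reduce to, $a+(m-1)(q-1)\le (q-1)\log_q(aq^{m-1}+1)$, is false in general. For $q=2$, $m=2$, $a=1$ it reads $2\le \log_2 3$.

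The problem is that you pair the largest digit sum among $m$-digit numbers with leading digit $a$ with the smallest such value. The largest digit sum is attained at $i=(a+1)q^{m-1}-1$; the smallest value is $i=aq^{m-1}$. These are never attained by the same $i$. For the same reason, the claimed reduction to $a\le (q-1)\log_q(a+1)$ runs the wrong way: $\log_q(aq^{m-1}+1)\le (m-1)+\log_q(a+1)$, strictly once $m\ge 2$. So the argument fails even for $q=2$, the case of main interest.

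The fix keeps your per-digit concavity inequality but couples digit sum and value correctly. For $i=\sum_j a_j q^j$, show that $i+1\ge \prod_j (a_j+1)$. You can do this by counting the integers in $[0,i]$ whose base-$q$ digits are digitwise at most those of $i$. Alternatively, induct: write $i=a_0+qi'$; then $i+1=(a_0+1)+qi'\ge (a_0+1)(i'+1)$ because $q\ge a_0+1$. For $0\le i\le t-1$ this gives $(q-1)\log_q t\ge (q-1)\log_q(i+1)\ge \sum_j (q-1)\log_q(a_j+1)\ge \sum_j a_j=|i|_q$. The last step is your inequality $a\le (q-1)\log_q(a+1)$ applied digit by digit.

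Contrary to your remark that the bound ``is not valid for arbitrary $t$'', the bound $\max_{0\le i\le t-1}|i|_q\le (q-1)\log_q t$ holds for every $t\ge 1$. Your own examples were consistent with it, and the argument above proves it.
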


\subsection{The $q$-ary XOR Lemma} \label{subsec:fourier-prelims}

Finally, we record a useful lemma, sometimes referred to as Vazirani's XOR Lemma~\cite{ChorGHFRS85}. We will use the generalization to larger fields, for which a proof can be found in Rao's exposition of Bourgain's 2-source extractor~\cite[Lemma~4.2]{Rao07}. In fact, we just need the $\epsilon=0$ version of that statement, which results in the following.

\begin{lemma} \label{lem:inner-products-uniform}
    Let $x$ be a random vector distributed over $\F_q^n$. Then $x \sim \F_q^n$ if and only if for all $\xi \in \F_q^n\setminus\{0\}$, $\langle \xi,x\rangle \sim \F_q$.
\end{lemma}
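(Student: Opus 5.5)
The forward direction is immediate. If $x \sim \F_q^n$ and $\xi \neq 0$, pick a coordinate $j$ with $\xi_j \neq 0$ and condition on all coordinates of $x$ other than $x_j$. The map $x_j \mapsto \xi_j x_j + c$ is then a bijection of $\F_q$, so $\langle \xi, x\rangle$ is uniform conditionally on those coordinates, and hence uniform unconditionally.

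For the converse I will argue by counting over hyperplanes rather than by Fourier analysis; this keeps the argument valid for every prime power $q$. Write $p(u) := \Pr[x=u]$. Fix $\xi \neq 0$ and $c \in \F_q$. The hypothesis gives
\[
\sum_{u:\langle \xi,u\rangle = c} p(u) = 1/q .
\]
Now fix a target $v \in \F_q^n$. Sum the identity above over all $\xi \neq 0$, taking $c = \langle \xi, v\rangle$ each time. The right-hand side totals $(q^n-1)/q$.

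On the left, count how often each $u$ appears, i.e.\ the number of $\xi \neq 0$ with $\langle \xi, u-v\rangle = 0$:
\begin{itemize}
\item if $u = v$, every one of the $q^n-1$ nonzero $\xi$ qualifies;
\item if $u \neq v$, the solutions $\xi$ form a hyperplane, which has $q^{n-1}-1$ nonzero elements.
\end{itemize}
Using $\sum_u p(u) = 1$, this gives
\[
(q^n-1)\,p(v) + (q^{n-1}-1)\bigl(1-p(v)\bigr) = \frac{q^n-1}{q}.
\]
Solving, $p(v)\,(q^n - q^{n-1}) = (q^n-1)/q - q^{n-1} + 1 = (q-1)/q$, so $p(v) = q^{-n}$, as required.

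The only step needing care is the hyperplane count. Since $u - v \neq 0$, the map $\xi \mapsto \langle \xi, u-v\rangle$ is a nonzero linear functional, so its kernel has dimension exactly $n-1$. Beyond that, the argument is routine bookkeeping.

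An alternative is the character-sum route: take a nontrivial additive character $\chi$ of $\F_q$ and show that every Fourier coefficient $\mathbb{E}[\chi(\langle \xi, x\rangle)]$ with $\xi \neq 0$ vanishes. That route is exactly the $\epsilon = 0$ case of \cite[Lemma~4.2]{Rao07}. The counting argument above avoids characters entirely, so I would present it instead.
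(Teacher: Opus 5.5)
Your proof is correct, but it takes a genuinely different route from the paper. The paper does not prove this lemma itself. It cites the $\epsilon=0$ case of the $q$-ary Vazirani XOR lemma from Rao's exposition, which is Fourier-analytic: the hypothesis kills every nontrivial Fourier coefficient $\mathbb{E}[\chi(\langle \xi, x\rangle)]$, and Fourier inversion then forces $x$ to be uniform.

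You instead double count over hyperplanes. You sum the $q^n-1$ identities $\Pr[\langle \xi, x\rangle = \langle \xi, v\rangle]=1/q$, and use that each $u\neq v$ satisfies exactly $q^{n-1}-1$ of the corresponding constraints, which isolates $p(v)$. This is in effect an explicit inversion of the hyperplane (Radon) transform on $\F_q^n$. Your bookkeeping checks out, including the hyperplane count, the solve $p(v)=q^{-n}$, and the degenerate case $n=1$.

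What your route buys is a short, self-contained, character-free proof. One correction to your stated motivation: the character route is not limited to prime $q$. A nontrivial additive character such as $a\mapsto \omega^{\mathrm{Tr}(a)}$ exists over every $\F_q$, so the gain is elementarity, not generality.

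What the Fourier route buys in exchange is robustness. Via Parseval it yields the approximate ($\epsilon>0$) XOR lemma with an $\ell_2$ bound, which is the form the cited source states. Your identity can be perturbed to give a pointwise bound $|p(v)-q^{-n}| = O(\epsilon)$. However, turning that into statistical distance loses roughly a $q^{n}$ factor, versus about $q^{n/2}$ from Parseval. The paper only ever uses the exact case, so your argument fully suffices for its purposes.
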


In words: in order to test if a random vector is uniform over $\F_q^n$, it suffices to check that its inner-product with each (fixed) non-zero vector is uniform over $\F_q$.

\section{Constructions of Fast Codes with Fast Duals}
In this section we concern ourselves with the construction of a ``fast good code" with a ``fast good dual,'' as coined by \cite{BR26}. These are pairs of linear codes that are dual to each other, that are both fast, and that are both asymptotically good with a rate-distance tradeoff at the GV-bound. Applications of such a pair of objects include  protocols for information-theoretic secure multiparty computation (MPC) and encrypted matrix vector products with asymptotically optimal circuit size, as we discuss further in \Cref{sec:apps}.

Recalling from the introduction, such a pair of codes was first given by \cite{BR26}, but their construction had a number of downsides, like being restricted to binary alphabets and both codes needing to have rate $1/2$. We present a new such construction  in \Cref{sec:first_dual} which amends the five drawbacks described in the introduction. The construction relies on a fast {\em linear uniform-output family} (LUOF), a distribution over matrices $A$ such that $xA$ is uniform for any nonzero $x$ (see~\Cref{defn:luof}). The properties of this new construction are summarized in the following theorem. When we refer to a map as being \emph{systematic} we mean that it maps an input vector $x$ to an output vector of the form $(x,y)$ or $(y,x)$, i.e. the output vector contains the input vector.

\begin{theorem}\label{thm:first_dual}
	Let $q$ be a prime power and fix $R,\epsilon \in (0,1)$ with $\eps < \min\{R,1-R\}$. There exist uniform families of randomized linear arithmetic circuits
    $\{C_{i}:\F_q^{R_in_i} \to \F_q^{n_i}\}_{i \in \N}$ and $\{C_i^\perp:\F_q^{(1-R_i)n_i} \to \F_q^{n_i}\}_{i \in \N}$, where $(n_i)_{i \in \N}$ is an increasing sequence of positive integers and $(R_i)_{i \in \N}$ is a sequence of real numbers with each $R_in_i \in \N$, satisfying the following properties. 
	\begin{itemize}
        \item \textbf{Rate:} $\lim_{i \to \infty}R_i = R$ (and hence $\lim_{i \to \infty}(1-R_i) = 1-R$).
        \item \textbf{Efficient encodability:} For each $i \in \N$, $C_{i}:\F_q^{R_in_i} \to \F_q^{n_i}$ is of size $O(n_i)$ with $O(n_i)$ uniformly random field elements. The map is systematic.
        \item \textbf{Dual efficient encodability:} For each $i \in \N$, $C_{i}^\perp:\F_q^{(1-R_i)n_i} \to \F_q^{n_i}$ of size $O(n_i)$ with $O(n_i)$ uniformly random field elements. The map is systematic.  
		\item \textbf{Good distance:} For each $i \in \N$, let $\cC_i$ denote the random code with encoding map determined by $C_i$, where the randomness is over the internal randomness of the circuit. Then $\cC_i$ has rate $R_i$ and distance $h_q^{-1}(1-R_i-\eps)$ except with probability $q^{-\eps n_i}$ (GV-bound).
		\item \textbf{Good dual distance:} For each $i \in \N$, let $\cC_i^\perp$ denote the random code with encoding map determined by $C_i^\perp$, where the randomness is over the internal randomness of the circuit. Then $\cC_i^\perp$ has rate $1-R_i$ and distance $h_q^{-1}(R_i-\eps)$ except with probability $q^{-\eps n_i}$ (GV-bound).
        \item \textbf{Duality:} For each $i\in \N$, the codes $\cC_i$ and $\cC_i^\perp$ are dual to one another, assuming the encoder circuits use the same randomness. 
	\end{itemize}
\end{theorem}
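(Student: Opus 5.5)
The plan is to build a fast LUOF $A$ over $k\times r$ matrices (with $k=R_in_i$, $r=(1-R_i)n_i$), take $\cC_i$ to be the code with systematic encoding $x\mapsto(x,xA)$ and $\cC_i^\perp$ the code with systematic encoding $y\mapsto(-yA^\top,y)$, and then verify the listed properties one at a time.

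\textbf{Construction of $A$ (following \cite{DI14}).} I would proceed in three steps. First, fix a fast $\F_q$-additive code $E:\F_q^k\to(\F_q^\beta)^m$ with constant $\beta$, $m=O(k)$, and relative distance $\delta_E>0$; Spielman-type codes suffice, and their constant-size blocks are viewed as vectors in $\F_q^\beta$. Second, apply to each block $j\in[m]$ an independent uniformly random matrix $M_j\in\F_q^{\beta\times\beta}$; this uses $O(m)=O(n_i)$ random field elements and $O(m)$ gates. Third, apply a fast linear map $T:\F_q^{\beta m}\to\F_q^r$ with the bit-fixing extraction property: for every set $S$ of at least $\delta_E m$ blocks, the restriction of $T$ to the coordinates in $S$ is onto $\F_q^r$. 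To arrange this, I would take $T$ to be the transpose of a fast generator matrix $G_T$ of a code $\F_q^r\to(\F_q^\beta)^m$ whose distance exceeds $1-\delta_E$. Then any $\xi\ne 0$ has $\xi G_T$ nonzero on some block in $S$. By the transposition principle (\Cref{lem:fast_tranpose}), $T$ is fast. Since $r=\Theta(k)$, $m$ must be a large enough constant multiple of $k$ for this outer code to exist with the needed distance.

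\textbf{LUOF property.} Fix $x\ne0$. Then $E(x)$ is nonzero on at least $\delta_E m$ blocks, and on each such block $E(x)_jM_j$ is uniform and independent of the other blocks. Conditioning on the blocks outside $S$, $xA$ equals a fixed vector plus a surjective linear image of a uniform vector, so it is uniform.

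\textbf{Transpose and duality.} To show $A^\top$ is also an LUOF, I would use \Cref{lem:inner-products-uniform}. For $y\ne0$, the vector $yA^\top$ is uniform if and only if $\langle\xi,yA^\top\rangle=\langle\xi A,y\rangle$ is uniform for every $\xi\ne0$. This holds because $\xi A$ is uniform and $y\ne0$. Both maps are then fast: $y\mapsto yA^\top$ is computed by the transposed circuit, which costs $O(n_i)$ and uses the same randomness. Duality follows from $GH^\top=0$, as recorded in \Cref{subsec:coding-theory}. The rates are exact, since both encodings are systematic and hence injective, and $R_i\to R$ by choosing $k,r$ accordingly.

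\textbf{Distance.} For $\cC_i$, a nonzero codeword is $(x,xA)$ with $x\ne0$. If $\wt(x)\cdot k$ is already large, the codeword is heavy. In general, I would union-bound over nonzero $x$: the probability that $\wt((x,xA))<\delta$ equals the fraction of $z\in\F_q^r$ with $|x|+|z|<\delta n$. Summing over $x$ gives exactly the count of low-weight pairs $(x,z)$ divided by $q^r$, so it is at most $q^{h_q(\delta)n-(1-R_i)n}$. With $\delta=h_q^{-1}(1-R_i-\eps)$ this is at most $q^{-\eps n}$. The same argument with $A^\top$ handles $\cC_i^\perp$.

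\textbf{Main obstacle.} I expect the hardest step to be ensuring that the extractor $T$ is simultaneously fast, surjective on every large block set, and compatible with an output length $r=\Theta(n)$. This amounts to choosing a fast outer code of distance above $1-\delta_E$ with rate $r/(\beta m)$. I would first try to obtain it by concatenating or amplifying Spielman codes, using the distance-amplification technique as in \cite{IKOS08,DI14}, possibly over a larger alphabet.
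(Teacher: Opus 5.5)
Your proposal is correct and follows essentially the same route as the paper. Both use systematic generators $[I_k\,|\,A]$ and $[-A^\top\,|\,I_r]$ with $A$ a fast LUOF, the transposition principle for speed, the XOR lemma (\Cref{lem:inner-products-uniform}) to show $A^\top$ is an LUOF, and a union bound over low-weight vectors for the GV bound; summing over messages instead of bounding $\Pr[y\in\cC]\le q^{-r}$ per vector is the same count. Your only extra work is re-deriving the \cite{DI14} LUOF inline (placing the near-$1$ distance requirement on the extractor code rather than the first code), whereas the paper simply cites that LUOF, and the ``main obstacle'' you flag is discharged directly by \Cref{thm:good-code}.
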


The distance analysis of these codes is much simpler than that of the construction of \cite{BR26}. This gives a better prospect of generalizing to other properties; to prove that our code construction doesn't just generate a ``fast good code" with ``fast good dual," but a code with additional interesting properties. We indeed manage to do so, and in \Cref{sec:second_dual} prove that our construction yields a pair of codes that additionally are $t$-\textit{locally similar} to a random linear code (albeit, at some cost in terms of parameters). Intuitively, this means that any $t$-\textit{local property} of random linear codes also applies to this pair of codes, which are features that can be ruled out by a small, size $t$ subset of vectors. This includes distance ($t=1$) and list-of-$L$-decodability ($t=L+1$). Note that this generalizes the construction from the first subsection, which we leave mostly as a warm-up. 

We conclude with the following result, which, to the best of our knowledge, is the first (randomized) construction of a code achieving list-decoding capacity with fast (\emph{linear-size}) encoding. This holds for both the code we construct \emph{and} its dual. In fact, the codes we construct have the best-known list-size of $L = O(1/\eps)$. 

\begin{theorem}\label{thm:second_dual}
	Let $q$ be a prime power and fix $R,\epsilon \in (0,1)$. There is a $c>0$ such that, assuming $\min\{1-R-c\eps,R-c\eps\}>0$, the following holds. There exist uniform families of randomized linear arithmetic circuits
    $\{C_{i}:\F_q^{R_in_i} \to \F_q^{n_i}\}_{i \in \N}$ and $\{C_i^\perp:\F_q^{(1-R_i)n_i} \to \F_q^{n_i}\}_{i \in \N}$, where $(n_i)_{i \in \N}$ is an increasing sequence of positive integers and $(R_i)_{i \in \N}$ is a sequence of real numbers with each $R_in_i \in \N$, satisfying the following properties. 
	\begin{itemize}
        \item \textbf{Rate:} $\lim_{i \to \infty}R_i = R$ (and hence $\lim_{i \to \infty}(1-R_i) = 1-R$).
        \item \textbf{Efficient encodability:} For each $i \in \N$, $C_{i}:\F_q^{R_in_i} \to \F_q^{n_i}$ is of size $O(n_i)$ with $O(n_i)$ uniformly random field elements. The map is systematic.
        \item \textbf{Dual efficient encodability:} For each $i \in \N$, $C_{i}^\perp:\F_q^{(1-R_i)n_i} \to \F_q^{n_i}$ of size $O(n_i)$ with $O(n_i)$ uniformly random field elements. The map is systematic.  
		\item \textbf{Good list-decodability:} For each $i \in \N$, let $\cC_i$ denote the random code with encoding map determined by $C_i$, where the randomness is over the internal randomness of the circuit. Then $\cC_i$ has rate $R_i$ and is $(h_q^{-1}(1-R_i-c\eps),\lceil1/\eps\rceil)$-list-decodable except with probability $q^{-\Omega(n_i)}$ (list-decoding capacity).
		\item \textbf{Good dual list-decodability:} For each $i \in \N$, let $\cC_i^\perp$ denote the random code with encoding map determined by $C_i^\perp$, where the randomness is over the internal randomness of the circuit. Then $\cC_i^\perp$ has rate $1-R_i$ and is $(h_q^{-1}(R_i-c\eps),\lceil 1/\eps\rceil)$-list-decodable except with probability $q^{-\Omega(n_i)}$ (list-decoding capacity).
        \item \textbf{Duality:} For each $i\in \N$, the codes $\cC_i$ and $\cC_i^\perp$ are dual to one another, assuming the encoder circuits use the same randomness. 
	\end{itemize}
\end{theorem}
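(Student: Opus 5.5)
The plan is to reduce \Cref{thm:second_dual} to the existence of a fast $t$-LUOF with $t=\lceil 1/\eps\rceil+1$, and then import list-decodability from uniformly random linear codes via local similarity. First, I construct the $t$-LUOF $A \in \F_q^{k\times r}$, with $k=R_in_i$ and $r=(1-R_i)n_i$, following the Druk--Ishai template. Encode $x$ with a fast $\F_q$-additive code $E:\F_q^k\to(\F_q^\beta)^m$ of distance $>1-q^{-t}$; this is possible for constant $\beta$ and $m=O(k)$, e.g.\ a Spielman code concatenated or folded appropriately. Then apply independent uniform $\beta\times\beta'$ matrices $M_j$ blockwise. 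Finally apply $G^\top$, where $G$ generates a fast code of distance large enough to make $G^\top$ an extractor for bit-fixing-type sources. By \Cref{lem:fast_tranpose}, $G^\top$ is fast.

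For $t$-LUOF-ness, fix linearly independent $x_1,\dots,x_t$. The key counting step is to show that for at least a $\gamma m$ fraction of blocks $j$, the vectors $E(x_1)^{(j)},\dots,E(x_t)^{(j)}$ are linearly independent in $\F_q^\beta$. Block $j$ fails exactly when some nonzero combination $\sum_\ell a_\ell E(x_\ell)^{(j)}$ vanishes. Each of the $q^t-1$ nonzero combinations (counted projectively, $(q^t-1)/(q-1)$) is a nonzero codeword, so it vanishes on at most a $q^{-t}$-ish fraction of blocks. A union bound then leaves a constant fraction of good blocks once the distance exceeds $1-q^{-t}$ with margin. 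On good blocks, uniform $M_j$ send independent inputs to independent uniform outputs. Conditioning on the bad blocks, the final linear map $G^\top$ is applied to a vector whose good coordinates are uniform and independent. By \Cref{lem:inner-products-uniform}, applied to the joint vector $(x_1A,\dots,x_tA)$, it suffices to show that every nonzero test $\sum_\ell\langle\xi_\ell,x_\ell A\rangle$ is uniform. This test equals $\sum_j \langle \cdot\,,M_j\cdot\rangle$ terms, where the relevant vector is $\sum_\ell E(x_\ell)^{(j)\top}(\xi_\ell G)^{(j)}$. This is a nonzero rank-one-sum matrix as soon as some block is good and $(\xi_\ell G)$ is nonzero on it. That holds provided the distance of $G$ beats the bad fraction $1-\gamma$.

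Second, I prove that the transpose of a $t$-LUOF is a $t$-LUOF, again via \Cref{lem:inner-products-uniform}. For independent $y_1,\dots,y_t$, a nonzero test on $(y_1A^\top,\dots,y_tA^\top)$ is $\sum_\ell \langle \xi_\ell A, y_\ell\rangle$. I will rewrite it as $\mathrm{tr}$ of $A$ against the matrix $\sum_\ell \xi_\ell^\top y_\ell$, which has rank $s\ge1$ since the $y_\ell$ are independent. Factoring this matrix as $\sum_{u=1}^s \zeta_u^\top w_u$ with independent $\zeta_u$ gives $\sum_u\langle\zeta_uA,w_u\rangle$, which is uniform because the $\zeta_uA$ are jointly uniform with $s\le t$. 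The codes are $\cC=\{(x,xA)\}$ and $\cC^\perp=\{(-yA^\top,y)\}$, which are systematic and fast, share randomness, and are dual by the calculation in \Cref{subsec:coding-theory}.

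Third, I establish local similarity. For any vectors $v_1,\dots,v_{t}\in\F_q^{n}$, I show $\Pr[v_1,\dots,v_t\in\cC]\le q^{-(n-k)\dim\mathrm{span}(v_\ell)}$ (equality up to handling the systematic part). Take a basis of the span; membership means its last $r$ coordinates equal its first $k$ coordinates times $A$. If the projections to the first $k$ coordinates are independent, the probability is exactly $q^{-r\cdot\dim}$ by $t$-LUOF. Otherwise some nonzero vector in the span has zero systematic part and nonzero tail, so the probability is $0$. Symmetrically for $\cC^\perp$. Then I rerun the standard random-linear-code list-decoding argument with $L=\lceil1/\eps\rceil$ (as in \cite{GHK11,LW20}, or the local-similarity framework of \cite{MRSY25}) using only these $(L+1)$-wise probabilities. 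This yields $(h_q^{-1}(1-R_i-c\eps),L)$-list-decodability with failure $q^{-\Omega(n)}$. The dual is handled the same way with rate $1-R_i$.

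I expect the main obstacle to be this last step: checking that the chosen list-decoding proof uses only $\le L+1$-wise probabilities, with the $\dim$-weighted bound, and that the loss absorbs into the constant $c$. If the sharp $O(1/\eps)$ argument needs more, I would fall back on the simpler union bound over $(L+1)$-tuples. That bound is inside a Hamming ball, in the style of Zyablov--Pippenger, and still gives list size $O(1/\eps)$ at radius $h_q^{-1}(1-R-c\eps)$.
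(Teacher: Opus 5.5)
Your proposal follows the paper's proof essentially step for step. It builds a Druk--Ishai-style $t$-LUOF with $t=\lceil 1/\eps\rceil+1$ whose first code has $\beta$-distance above $1-q^{-t}$ (exactly what \Cref{thm:good-code} supplies), and gives the XOR-lemma argument that $A^\top$ is again a $t$-LUOF (your rank factorization of $\sum_\ell \xi_\ell^\top y_\ell$ is a tidy rephrasing of \Cref{lem:t-luof_transpose}). It then proves the $q^{-r\cdot\dim}$ membership bound for $\{(x,xA)\}$ and, symmetrically, for its dual. The last step needs no rerunning of \cite{GHK11}: Propositions~II.8 and~II.9 of \cite{MRSY25} take precisely your bound for $b\le L+1$ linearly independent vectors as hypothesis and output the claimed list-decodability, which is how the paper concludes.

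Discard your fallback, though. A Zyablov--Pippenger union bound over $(L+1)$-tuples in a Hamming ball can only use the roughly $\log_q(L+1)$ linearly independent vectors such a tuple is guaranteed to contain, so it gives list size $q^{O(1/\eps)}$ rather than $O(1/\eps)$.
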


\subsection{Fast Good Code with Fast Good Dual}\label{sec:first_dual}
Our construction for a ``fast good code" with ``fast good dual" is based on an object known as a linear uniform output family. This is any matrix $A$ which maps any fixed non-zero input vector to a uniformly random vector in the output space. 

\begin{definition}
\label{defn:luof}
(Linear Uniform Output Family, LUOF)
    Let $A$ be a random matrix distributed over $\F_q^{k \times r}$. We call $A$ a \emph{linear uniform output family (LUOF)} if, for all $x \in \F_q^k\setminus \{0\}$, we have $xA \sim \F_q^r$. 
\end{definition}

Classical constructions of LUOF, such as the family of Toeplitz matrices, require circuits of quasilinear size.
Here we use the fast (i.e., linear-size universal circuit) LUOF construction of Druk and Ishai~\cite{DI14}, which works over any finite field $\F_q$ and for any choice of $k$ and $r$. This LUOF can be used to construct a fast good code: simply use the LUOF $A \in \F_q^{k \times n}$ itself as a generator matrix for a linear code. Since the LUOF is fast, so is the code. The code also is asymptotically good, with rate-distance tradeoff at the GV-bound, which follows in a standard way from the fact that each message is mapped to a uniformly random vector in the output space. We briefly recall this argument as we will use a similar argument for our code constructions. 

The probability the code attains some distance $\delta$ can then (by a union bound) be upper bounded by the probability that any message attains weight at most $\delta$, times the number of messages. Since each message ends up as a uniformly random output vector, the probability it has weight at most $\delta$ is equal to the number of weight at most $\delta$ output vectors divided by the size of the output space: $q^{n \cdot h_q(d/n)-n}$, where $h_q$ is the $q$-ary entropy function. Multiplying by the number of messages $q^{k}$ gives an upper bound on the probability the code attains distance $\delta$ of $q^{n(h_q(\delta) - 1 + k/n)}$, which is $q^{-\Omega(n)}$ (negligible in $n$) whenever $k/n < 1-h_q(\delta)$. This tradeoff between the rate $R:=k/n$ and distance $\delta$ is exactly the $q$-ary GV-bound. 

The fast LUOF of Druk and Ishai thus generates a fast good code. But the authors note that the dual of any code generated by a LUOF likewise is good: it also attains the GV-bound. However, it is not clear what the encoding function of such a dual code looks like, and in particular, whether this dual code is fast. We now give a slightly more complicated code construction which does explicate what the dual code looks like, and indeed lets us guarantee that the dual code is fast. 

Instead of having the primal code be generated by a $k \times n$ LUOF, we suppose that we have a $k \times r$ LUOF called $A$. We then consider the linear maps $x \mapsto (x,xA)$ and $x\mapsto (-xA^\top,x)$. These are implemented by the generator matrices $G=[I_k|A]$ and $H=[-A^\top|I_r]$. Here, $I_k$ and $I_r$ are the $k \times k$ and $r \times r$ identity matrices. Note that $G$ is a $k \times (r+k)$ matrix and $H$ is a $r \times (r + k)$ matrix. Let us write $n:=r+k$ for the block length, so that the primal code generated by $G$ has rate $k/n$ and the dual code generated by $H$ has rate $(n-k)/n$.

We now claim that $G$ is a fast good code with a fast good dual $H$ satisfying the properties listed in \Cref{thm:first_dual}. We spend the rest of this subsection verifying each of the properties listed in this theorem. 

\begin{proof}[Proof of \Cref{thm:first_dual}]
    To start, we can choose $G$ to have any rate $R\in(0,1)$, the codes $G$ and $H$ clearly have systematic encoding maps, and $G$ and $H$ generate dual codes as $H$ is the parity-check matrix of $G$:
    \[
        GH^\top = [I_k~|~A]~\begin{bmatrix}
            -A \\ I_r
        \end{bmatrix} = -I_kA+AI_r = -A+A = 0 \ .
    \]
    The two harder properties that remain are the fact that $G$ and $H$ admit fast encoding maps (arithmetic circuits of linear size in the input) and attain the GV-bound. Starting with the fast encoding maps, recall that that $xG=(x,xA)$ and that $xH=(-xA^\top,x)$. Copying $x$ and applying a minus sign can of course both be done fast. If we sample the LUOF $A$ according to the construction of \cite{DI14}, then computing $xA$ is fast too. From this fact it actually follows that computing $xA^\top$ is fast too:
    
    \begin{lemma}[Transposition principle~\cite{bordewijk1956inter}]\label{lem:fast_tranpose}
        There exists a constant $c_4>0$ such that for every finite field $\F$ the following holds. Let $C$ be an arithmetic circuit over $\F$ of size $t$ which consists of only addition and scalar multiplication gates and computes the function $f(x)=xA$ (where $xA$ is a matrix-vector product). Then there exists an arithmetic circuit $C'$ of size at most $c_4t$ that computes the function $f'(x)=xA^\top$.
    \end{lemma}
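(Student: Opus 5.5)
We prove the transposition principle by reversing the circuit, viewing it as a linear computation graph. Since $C$ has only addition and scalar-multiplication gates, every wire carries a linear form in the inputs $x_1,\dots,x_k$. Represent $C$ as a directed acyclic graph $D$ with the following structure. Input nodes correspond to $x_1,\dots,x_k$, output nodes correspond to the $r$ coordinates of $xA$, and each edge $e=(u\to v)$ carries a scalar label $\lambda_e\in\F$. A node's value is then $\sum_{e=(u\to v)}\lambda_e\cdot \mathrm{val}(u)$. To put $C$ in this form, we absorb scalar multiplications into edge labels. We also let each addition gate be a node with in-degree two and labels $1$, or a label $-1$ for subtraction. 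Constant gates contribute nothing to a linear map computing $xA$ exactly, since $f(0)=0$ forces the affine part to vanish. They can therefore be deleted, or treated as nodes with zero contribution. This conversion keeps the number of edges $O(t)$.

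The key identity is the path-sum formula $A_{ij}=\sum_{P}\prod_{e\in P}\lambda_e$, where $P$ ranges over directed paths from input node $i$ to output node $j$. We prove it by a routine induction along a topological order: the value at node $v$ is $\sum_i x_i\sum_{P:i\leadsto v}\prod_{e\in P}\lambda_e$.

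Now form the reversed graph $D^\top$ by reversing every edge while keeping its label. In $D^\top$, the former output nodes are the inputs, fed with $y\in\F^r$, and the former input nodes are the outputs. Paths from $j$ to $i$ in $D^\top$ correspond bijectively to paths from $i$ to $j$ in $D$, and the label products are the same. By the path-sum formula, $D^\top$ therefore computes $y\mapsto yA^\top$.

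The remaining step, which is the main point needing care, is to turn $D^\top$ back into a fan-in-2 circuit of size $O(t)$. In $D$, nodes have in-degree at most 2 but arbitrary fan-out. After reversal, a node of fan-out $d$ becomes a node summing $d$ labelled inputs. We realize it with $d$ scalar-multiplication gates (omitted when the label is $1$) and a binary tree of $d-1$ addition gates, which costs $O(d)$ wires. Conversely, in-degree-2 nodes become fan-out-2 nodes, which is free since fan-out is unrestricted. Summing over nodes, the total cost is $O(\sum_v \deg(v))=O(|E(D)|)=O(t)$. A few boundary cases also need handling:
\begin{itemize}
\item output nodes of $D$ that also feed into other gates, which we handle by adding a fresh sink copy before reversing;
\item inputs of $D$ that are unused, which become constant-$0$ outputs;
\item repeated outputs.
\end{itemize}
Each of these adds only $O(t)$ wires. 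Together these steps give a constant $c_4$ independent of $\F$.
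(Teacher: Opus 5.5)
Your proof is the standard edge-reversal (Tellegen/Bordewijk) argument, which is exactly the principle the paper invokes by citation without proving it, and it is correct. The one claim to tighten is that the boundary fixes cost only $O(t)$. The sink copies and repeated outputs add up to $r$ extra wires, so you are tacitly assuming $r=O(t)$ (true if every output is a distinct gate with an incoming wire); without such a convention the honest bound is $O(t+r)$ --- the statement as literally written even fails for $A=(1,\dots,1)\in\F^{1\times r}$ with $x_1$ designated as all $r$ outputs, since $y\mapsto y_1+\cdots+y_r$ needs $r-1$ additions --- which is harmless for the paper, whose applications only need a bound linear in the input and output lengths.
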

    
    All that remains is to show that $G$ and $H$ attain the GV-bound. We start with the code $G$, which we recall encoded a vector $x$ to $(x,xA)$. What we need to show is that the probability that $G$ fails to have distance $h_q^{-1}(1-R-\epsilon)$ for some $\epsilon>0$ is at most $q^{-\epsilon n}$, where we recall that $n$ is the block length of $G$. Since our code is linear, this event is equivalent to $G$ having a codeword of weight at most $h_q^{-1}(1-R-\eps)$.

    There are at most $q^{nh_q(\delta)}$ length $n$ vectors of weight at most $\delta$. Recall the earlier high-level proof that a random linear code attains the GV-bound (in fact, any generator matrix that is a LUOF). There, we could argue that any message vector is mapped to a uniformly random output vector, so that any specific vector $y \in \F_q^n$ is in the code with probability $q^{k-n}$: there are $q^k$ messages, and each has a probability of $q^{-n}$ of being mapped to a some specific output vector. Applying a union bound over all bad codewords (with weight below the target $d$) yields an upper bound on the probability that such a code fails to have distance $\delta$: 

    $$q^{nh_q(\delta)+k-n}=q^{n\left(h_q(\delta)-1+k/n\right)} = q^{n\left(h_q(\delta) - 1 +R\right)} =q^{n\left(1-R-\epsilon - 1 +R\right)}=q^{-\epsilon n} \ ,$$
    where in the last step we substituted in our target distance of $\delta=h_q^{-1}(1-R-\epsilon)$. Now, we cannot directly apply this reasoning to our code, as our code does not map each message to a uniformly random output vector. The first $k$ entries of our output will be a copy of the message, and only the last $r$ entries will be uniformly random. However, it is not hard to see that we can derive the same upper bound of $q^{k-n}=q^{-r}$ on the probability that a given vector $y \in \F_q^n$ is in our code $\cC$. This then immediately yields the GV-bound by the above derivation. 
    
    Recall that $y\in\cC$ whenever $yH^\top=0$ where $H=[-A^\top|I_k]$ is the parity-check matrix of $\cC$. Let us split up $y = (y^{(1)},y^{(2)})$ with $y^{(1)} \in \F_q^k$ and $y^{(2)} \in \F_q^r$. Then we can write  
    $$ \Pr\left[y\in \cC\right] = \Pr\left[yH^\top =0\right] = \Pr\left[y^{(2)} - y^{(1)}A =0\right] = \Pr\left[y^{(2)} = y^{(1)}A \right] \leq q^{-r} \ , $$
    where the last step is because $A$ is a LUOF, so that the output $y^{(1)}A$ is uniformly random over $\F_q^r$ if $y^{(1)} \neq 0$; if $y^{(1)}=0$, then $y^{(2)}\neq 0$ (since $y=(y^{(1)},y^{(2)})\neq 0)$, so then $\Pr\left[y^{(2)} = y^{(1)}A \right]=0$. Thus, in our code we obtain the same (upper bound on the) probability that a given codeword is in our code as we had above for a random linear code. It follows that we obtain the exact same bound on the probability that our code achieves distance $d$, and hence we recover the GV-bound.

    It remains to prove that the dual code, generated by $H$ with the linear map $x \mapsto (-xA^\top,x)$, also attains the GV-bound. To do this, we prove \Cref{lem:transpose} just below which states that if $A$ is a LUOF, then $-A^\top$ is also a LUOF. The proof that $H$ achieves the GV-bound is then analogous to that of $G$, establishing the theorem. 
\end{proof}
    
\begin{lemma} \label{lem:transpose}
    Let $r,k \in \N$. Let $A$ be a random matrix distributed over $\F_q^{k \times r}$. If $A$ is LUOF, then so is $-A^\top$.
\end{lemma}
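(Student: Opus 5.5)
We use the $q$-ary XOR Lemma (\Cref{lem:inner-products-uniform}) to turn uniformity of a vector into uniformity of its inner products, and then move the transpose across the inner product. Fix a nonzero $y \in \F_q^r$; we must show $y(-A^\top) \sim \F_q^k$. Since $y(-A^\top) = -(yA^\top)$ and negation is a bijection of $\F_q^k$ preserving the uniform distribution, it suffices to show $yA^\top \sim \F_q^k$.

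By \Cref{lem:inner-products-uniform}, $yA^\top \sim \F_q^k$ holds if and only if for every nonzero $\xi \in \F_q^k$ the scalar $\langle \xi, yA^\top\rangle$ is uniform over $\F_q$. Using the identity $\langle x M, z\rangle = \langle x, zM^\top\rangle$ from the preliminaries, with $M = A$, $x=\xi$, $z = y$, together with commutativity of the inner product, we get
\[
\langle \xi, yA^\top\rangle = \langle \xi A, y\rangle .
\]
Since $\xi \neq 0$ and $A$ is an LUOF, $\xi A \sim \F_q^r$.

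It remains to check that $\langle u, y\rangle$ is uniform over $\F_q$ when $u \sim \F_q^r$ and $y \neq 0$. The map $u \mapsto \langle u,y\rangle$ is a surjective linear functional $\F_q^r \to \F_q$: pick $j$ with $y_j \neq 0$ and vary $u_j$. Every fiber is therefore a coset of the kernel, which has size $q^{r-1}$, so the uniform distribution on $\F_q^r$ pushes forward to the uniform distribution on $\F_q$. This is also the easy direction of \Cref{lem:inner-products-uniform} applied in $\F_q^r$, so we may simply invoke that lemma again.

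Combining these steps, every nonzero inner product of $yA^\top$ is uniform, so by \Cref{lem:inner-products-uniform} $yA^\top$ is uniform, and hence $-A^\top$ is an LUOF. There is no real obstacle. The only point needing care is that the XOR lemma is used in both directions: the "if" direction in $\F_q^k$ to conclude uniformity of $yA^\top$, and the "only if" direction (or the direct fiber-counting argument) in $\F_q^r$ to pass from uniformity of $\xi A$ to uniformity of $\langle \xi A, y\rangle$.
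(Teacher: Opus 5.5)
Your proposal is correct and follows essentially the same route as the paper's proof. You reduce to $A^\top$ via negation, apply \Cref{lem:inner-products-uniform} to reduce to inner products, and use $\langle \xi, yA^\top\rangle = \langle \xi A, y\rangle$ with the LUOF property and the converse direction of the lemma. Your fiber-counting justification of that converse direction is just a direct proof of the easy half of the lemma.
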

\begin{proof}
    It is immediate that if $A$ is LUOF, then so is $-A$. Thus, it suffices to prove that if $A$ is LUOF, then so is $A^\top$. That is, we must show that for all $x \in \F_q^r\setminus\{0\}$, we have $xA^\top \sim \F_q^k$. By \Cref{lem:inner-products-uniform}, it suffices to prove that for all $\xi \in \F_q^k\setminus\{0\}$, we have $\langle \xi, xA^\top\rangle \sim \F_q$. Now, $\langle \xi, xA^\top \rangle = \langle \xi A,x\rangle$. Since $\xi \neq 0$ and $A$ is LUOF, we know that $\xi A \sim \F_q^r$, so now by the other direction of \Cref{lem:inner-products-uniform} we have $\langle \xi A, x\rangle \sim \F_q$, as desired. 
\end{proof}

\subsection{Fast List-Decodable Code with Fast List-Decodable Dual}\label{sec:second_dual}
We now generalize the previous construction to yield not just a ``fast good code" with ``fast good dual", but codes that are in addition $t$-\textit{locally similar} to a random linear code. We won't define $t$-local similarity formally, which would require some additional definitions. Instead, we make use of Proposition~II.8 of \cite{MRSY25} (a result which was implicit in earlier works~\cite{MRRSW24,GM22}) which states that a code is $t$-locally similar to a random linear code whenever any selection of $t$ linearly independent vectors are in $\cC$ with probability at most $q^{-n(1-R)t}$, where $R$ is the rate, $q$ is the field size and $n$ is the block length.\footnote{Note that this result requires $\frac{n}{\log_qn} \geq \omega_{n\to\infty}(q^{2t})$, i.e. the block length needs to be large enough compared to $q$ and $t$.} We will establish that (a slight generalization of) our code construction satisfies this property.

As explained in the introduction to this section, this roughly implies that our code shares with random linear codes all features that can be ruled out by size $t$ subsets of vectors. Our goal will be to show that our code construction yields a pair of codes that have good list-decodability. Recall that a code $\cC \subseteq \F_q^n$ is $(\rho,L)$-list-decodable if for all $z \in \F_q^n$, $|\{c \in \cC:d(c,z)\leq \rho\}| \leq L$. Observe that proving a code $\cC$ is $(\rho,L)$-list-decodable amounts to showing that certain sets of vectors of size $L$ are \emph{not} contained in $\cC$: in this specific case, the vectors that must be shown to not lie in $\cC$ are all $L+1$-subsets of some Hamming ball of radius $\rho$. This aligns exactly with the intuitive idea that a local property is one that can be ruled out by a small set of vectors. 

Proposition~II.9 of \cite{MRSY25} makes this precise by showing that any linear code that is $t$-locally similar to a random linear code has good list-decodability. We state it in full below.

\begin{proposition}\label{prop:list_decode}
	Let $q$ be a prime power. Let $k,n,L \in \N$, let $R = k/n$ and let $\eps > 1/(L+1)$. For $n$ sufficiently large compared to $L$, there exists a constant $c>0$ such that the following holds. If $\rho = h_q^{-1}(1-R-c\eps)$ and $\cC \leq \F_q^n$ is a random code of rate $R$ that is $(L+1)$-locally similar to a random linear code, then with probability at least $1-q^{-\eps(n-o_{n\to \infty}(1))}$, $\cC$ is $(\rho,L)$-list-decodable.
\end{proposition}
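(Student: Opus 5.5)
The proof reduces to the standard union-bound proof that a uniformly random linear code is list-decodable. The key observation is that such proofs, as in \cite{GHK11, LW20}, only use probabilities of the form $\Pr[v_1,\dots,v_d \in \cC]$ for linearly independent $v_1,\dots,v_d$. For a random linear code this probability is exactly $q^{-n(1-R)d}$. By the characterization of local similarity recalled above (Proposition~II.8 of \cite{MRSY25}), an $(L+1)$-locally similar code satisfies the same upper bound for every $d \le L+1$. Indeed, any $d$ linearly independent vectors can be extended inside $\F_q^n$ to $L+1$ linearly independent vectors, and monotonicity goes the right way after summing over extensions. Alternatively, one may take the $d$-tuple bound directly as the defining property. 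So the plan is to rerun the random-linear-code argument with ``$=$'' replaced by ``$\le$'' and check that nothing else is used.

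Concretely, $\cC$ fails to be $(\rho,L)$-list-decodable iff there exist $z \in \F_q^n$ and distinct $c_1,\dots,c_{L+1} \in \cC \cap B(z,\rho)$. I will union bound over $z$ and over tuples in $B(z,\rho)$, grouped by $d = \dim \mathrm{span}\{c_1,\dots,c_{L+1}\}$. Given such a tuple, pick $d$ of its members forming a basis of the span. The event that the tuple lies in $\cC$ is contained in the event that these $d$ independent vectors lie in $\cC$, which has probability at most $q^{-n(1-R)d}$. The remaining $L+1-d$ members are combinations of the basis, giving at most $q^{d(L+1-d)} = O_{L,q}(1)$ choices. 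The number of basis choices is at most $|B(z,\rho)|^d \le q^{n h_q(\rho) d}$. With $\rho = h_q^{-1}(1-R-c\eps)$, the contribution from dimension $d$ is at most
\[
q^{n}\cdot q^{n d h_q(\rho)}\cdot q^{-n(1-R)d}\cdot O(1) = O(1)\cdot q^{n(1-c\eps d)} .
\]
This is negligible as soon as $d > 1/(c\eps)$.

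The main obstacle is the low-dimensional regime $d \lesssim 1/(c\eps)$, where many points of a small subspace might cluster in a single Hamming ball. Here I would import the combinatorial lemma used in \cite{GHK11, LW20} (and the ``increasing chain'' refinement of \cite{GM22, MRSY25}). Take a tuple with $d$-dimensional span and $L+1 > q^{\Omega(d)}$ distinct members inside $B(z,\rho)$. One can then choose a sequence of members $c_{i_1},\dots,c_{i_d}$ with each new vector outside the span of the previous ones. The lemma shows that the number of such structured configurations is much smaller than $|B|^d$, essentially because a uniformly random point of a $d$-dimensional affine span rarely lands in the ball. This yields an extra factor $q^{-\Omega(\eps n)}$ per step, which pushes the exponent negative for every $d$ once $L \ge \lceil 1/\eps\rceil$ and $c$ is a large enough absolute constant.

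Since that lemma is purely combinatorial, it depends only on the distribution of $\cC$ through the independent-tuple bound, so it transfers verbatim. Summing over the $L+1$ values of $d$ costs only an $O_L(1)$ factor, absorbed into the $o_{n\to\infty}(1)$ term. This requires $n$ large compared to $L$, matching the hypothesis (and the footnoted condition $n/\log_q n \ge \omega(q^{2t})$). The result is the failure probability $q^{-\eps(n - o(1))}$.
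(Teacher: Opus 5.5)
Your skeleton is sound, and it is a different route from the paper. You reduce everything to upper bounds on $\Pr[v_1,\dots,v_d\in\cC]$ for independent $v_i$; your extension-summing step works up to a $1+o(1)$ factor, because the code always has dimension exactly $k$. You then run a direct union bound, which handles $d\gtrsim 1/(c\eps)$. The paper proves nothing here: it imports the statement from Proposition~II.9 of \cite{MRSY25}, which rests on the local-similarity framework of \cite{MRRSW24,GM22} together with known list-decoding bounds for random linear codes. A direct proof for locally similar codes would be a legitimate, more hands-on alternative. You are also right that a purely combinatorial lemma transfers once only independent-tuple bounds are used.

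\textbf{The gap.} The low-dimensional regime is the whole content of the statement, since the naive union bound only gives list size $q^{O(1/\eps)}$. The mechanism you describe there does not close.

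A saving of $q^{-\Omega(\eps n)}$ for each of the $d$ basis vectors turns $q^{n(1-c\eps d)}$ into $q^{n(1-c'\eps d)}$. It only rescales $c$, and the exponent stays positive for every $d\lesssim 1/\eps$. The trivial constraint $q^d\ge L+1$ only removes $d<\log_q(L+1)$. Likewise, a hypothesis of the form $L+1>q^{\Omega(d)}$ excludes exactly the range that matters, namely $d=\Theta(1/\eps)$ with $L=\Theta(1/\eps)$.

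\textbf{What \cite{GHK11} actually gives.} The saving there is independent of $\eps$. For $X_1,\dots,X_d$ uniform in $B(0,\rho n)$, the probability that $\mathrm{span}(X_1,\dots,X_d)$ contains at least $C_{\rho,q}\,d$ points of $B(0,\rho n)$ is at most $q^{-5n}$, say. The per-step cost inside its proof is $q^{-\Omega_{\rho,q}(n)}$, not $q^{-\Omega(\eps n)}$. To use this lemma you need three adjustments.
\begin{enumerate}
\item Pass to origin-centred balls. Replace the list by the points $c_i-z\in B(0,\rho n)$ and choose a basis $w_1,\dots,w_{d'}$ among them. Then $w_j+z\in\cC$, and these vectors have rank at least $d'-1$, so the independent-tuple bound still gives $q^{-n(1-R)(d'-1)}$.
\item Put the cutoff between the two regimes at $\ell=\Theta(1/(c\eps))$, with enough slack to absorb the $q^n$ centres and the lost dimension.
\item Require $L+1\ge C_{\rho,q}\,\ell$. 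Together with $(L+1)\eps>1$, this forces $c$ to be of order $C_{\rho,q}$ rather than an absolute constant.
\end{enumerate}
With these corrections your argument goes through, and it only uses independent tuples of size $O(1/\eps)\le L+1$. As written, though, the decisive step is asserted through a counting that would not work.
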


We note that the maximum radius up to which one can hope to list-decode is $h^{-1}_q(1-R-\eps)$, and existentially at radius $h^{-1}_q(1-R-\eps)$ lists of size $O(1/\eps)$ are sufficient. Thus, if we can prove that our codes are $t$-locally similar to random linear codes, they essentially obtain the best-known list-decodability. We now turn to proving this, thereby establishing \Cref{thm:second_dual}. 

We recall once more that to prove $t$-local similarity it suffices to prove that any choice of $b\leq t$ linearly independent vectors are in the codes with probability at most $q^{-n(1-R)b}$. Our original code construction used a fast LUOF which guaranteed that any fixed non-zero input vector is mapped to a uniformly random output vector. This property is not sufficient for our current purposes, as it doesn't guarantee that multiple distinct inputs are mapped to uniform outputs independently. 

To obtain this guarantee, we generalize the notion of a LUOF. Instead of requiring a single non-zero input to be mapped to a uniform random output, we require $t$ linearly independent inputs to be mapped to uniformly and independent outputs. 

\begin{definition}
    Let $t,r,k \in \N$. Let $A$ be a random matrix distributed over $\F_q^{k \times r}$. We call $A$ a \emph{t-wise linear uniform output family ($t$-LUOF)} if, for all $x_1,x_2,\dots,x_t \in \F_q^k$ linearly independent, we have $(x_1A,x_2A,\dots,x_tA) \sim \F_q^{tr}$. That is, the collection of random variables $x_iA$ for $i \in [t]$ are all uniform over $\F_q^r$, and independent. 
\end{definition}
\begin{remark}
    Note that if a random matrix is $t$-LUOF, then it is also $s$-LUOF for any $s \leq t$. Indeed, if $(x_1A,x_2A,\dots,x_tA) \sim \F_q^{tr}$, then also $(x_1A, x_2A,\dots,x_sA) \sim \F_q^{sr}$.
\end{remark}

In \Cref{sec:t-luof} we will show that the DI construction of a fast LUOF can be generalized to yield a fast $t$-LUOF as well. For now, we show that our code construction, set up with a $t$-LUOF instead of a $1$-LUOF, gives us $t$-local similarity. Or more specifically, we prove that our construction yields a fast list-decodable code with fast list-decodable dual, as described in \Cref{thm:second_dual}.

\begin{proof}[Proof of \Cref{thm:second_dual}]
    We use the codes $G$ and $H$ according to the code construction outlined in \Cref{sec:first_dual}, but instead of using the fast 1-LUOF of \cite{DI14}, we use the fast $t$-LUOF which we construct in \Cref{sec:t-luof}. We recall that proof of \Cref{thm:first_dual} in that same subsection which shows us that these codes are dual, can be of any rates (as long as they sum to 1, of course), and have fast, systematic encoding maps. But now, instead of proving good distance, we need to argue that our codes are $t$-locally similar to random linear codes, for which we recall it suffices to show that any choice of $b\leq t$ linearly independent vectors are in the codes with probability at most $q^{-n(1-R)b}$. Note that $R=k/n$ and $n=k+r$, so that $n(1-R)=n(1-k/n)=n-k=r$. The probability upper bound then becomes $q^{-rb}$. 
    
    We now show that $G$ and $H$ indeed have probability $\leq q^{-rb}$ that any fixed selection of $b$ non-zero and linearly independent vectors is in our code, establishing the required list-decodability by \Cref{prop:list_decode}. This will require not just that $A$ is $t$-LUOF, but that $A^\top$ is $t$-LUOF as well, which we establish in \Cref{lem:t-luof_transpose}. Then, in \Cref{lem:q-b} we show that $G$ satisfies the earlier probability bound. A completely analogous proof establishes this result for $H$, which in turn completed the proof of our theorem. 
\end{proof}

\begin{lemma}\label{lem:t-luof_transpose}
    Let $t,r,k \in \N$. Let $A$ be a random matrix distributed over $\F_q^{k \times r}$ which is $t$-LUOF. Then $A^\top$ is also $t$-LUOF. 
\end{lemma}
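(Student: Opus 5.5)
We follow the proof of \Cref{lem:transpose} and use \Cref{lem:inner-products-uniform} on the concatenated output vector. Fix linearly independent $x_1,\dots,x_t \in \F_q^r$. We must show that $(x_1A^\top,\dots,x_tA^\top)\sim\F_q^{tk}$. By \Cref{lem:inner-products-uniform} applied in $\F_q^{tk}$, it suffices to check one thing: for every nonzero $\xi=(\xi_1,\dots,\xi_t)$ with each $\xi_j\in\F_q^k$, the scalar $\sum_{j=1}^t\langle \xi_j, x_jA^\top\rangle$ is uniform over $\F_q$. Using $\langle \xi_j,x_jA^\top\rangle=\langle \xi_jA,x_j\rangle$, this scalar equals $\sum_j\langle \xi_jA,x_j\rangle$.

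The key step is to express this in terms of a basis of $\mathrm{span}\{\xi_1,\dots,\xi_t\}$. Let $s\ge 1$ be the dimension of that span (it is positive since $\xi\neq 0$), and note $s\le t$. Pick a basis $u_1,\dots,u_s$ of the span and write $\xi_j=\sum_{\ell=1}^s c_{j\ell}u_\ell$. Then
\[
\sum_j\langle \xi_jA,x_j\rangle=\sum_{\ell=1}^s\Big\langle u_\ell A,\ \sum_j c_{j\ell}x_j\Big\rangle=\sum_{\ell=1}^s\langle u_\ell A, w_\ell\rangle ,
\]
where $w_\ell:=\sum_j c_{j\ell}x_j$.

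Since $A$ is $t$-LUOF, it is also $s$-LUOF by the remark above. Hence $(u_1A,\dots,u_sA)\sim\F_q^{sr}$. The expression above is therefore $\langle (u_1A,\dots,u_sA),(w_1,\dots,w_s)\rangle$, the inner product of a uniform vector in $\F_q^{sr}$ with the fixed vector $w=(w_1,\dots,w_s)$. By the easy direction of \Cref{lem:inner-products-uniform}, this is uniform over $\F_q$ provided $w\neq 0$.

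The main obstacle is verifying $w\neq0$. The coefficient matrix $C=(c_{j\ell})\in\F_q^{t\times s}$ is nonzero, because otherwise every $\xi_j$ would be $0$. So some column $\ell$ is nonzero. For that $\ell$, $w_\ell=\sum_j c_{j\ell}x_j$ is a nontrivial linear combination of the linearly independent vectors $x_1,\dots,x_t$, so $w_\ell\neq 0$ and hence $w\neq 0$. This completes the argument.
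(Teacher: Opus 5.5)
Your proof is correct and is essentially the paper's argument with the variable names swapped: expand the test vectors in a basis of their span, regroup into $\langle (u_1A,\dots,u_sA), w\rangle$, and use the linear independence of the inputs to get $w\neq 0$.

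One caveat applies equally to your proof and to the paper's. Passing from $t$-LUOF to $s$-LUOF via the remark requires extending $u_1,\dots,u_s$ to $t$ independent vectors in $\F_q^k$, which needs $t\le k$. When $k<t\le r$ the hypothesis is vacuous and the statement actually fails: $A=0$ is then vacuously $t$-LUOF, but $A^\top$ is not. The assumption $t\le k$ holds in the application, where $t$ is constant, but it should be stated explicitly.
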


\begin{proof}
    We must show that for all $\xi_1,\dots,\xi_t \in \F_q^r$ linearly independent, we have $( \xi_1A^\top,\dots, \xi_t A^\top) \sim \F_q^{tk}$. By \Cref{lem:inner-products-uniform}, it suffices to argue that for any $x \in \F_q^{tk} \setminus \{0\}$, we have $\langle x,(\xi_1 A^\top,\dots, \xi_tA^\top)\rangle \sim \F_q$. Write $x = (x_1,\dots,x_t)$ with each $x_i \in \F_q^k$, then we need to argue 
    \[
        \langle x,(\xi_1 A^\top ,\dots,\xi_tA^\top )\rangle= \sum_{i=1}^t \langle x_i, \xi_i A^\top \rangle  = \sum_{i=1}^t\langle \xi_i,x_iA\rangle
    \]
    is uniform. Fix a basis $\{y_i:i \in [s]\}$ for the space $\mathrm{span}\{x_i:i \in [t]\}$, and note that $s \geq 1$ since we assumed that not all the $x_i$'s are $0$. We can (uniquely) write each $x_i = \sum_{j=1}^s x_{ij}y_j$ with the $x_{ij} \in \F_q$. So then
    \begin{align*}
        \sum_{i=1}^t\langle \xi_i,x_iA\rangle &= \sum_{i=1}^t \left\langle \xi_i,\left(\sum_{j=1}^s x_{ij}y_j\right)A\right\rangle = \sum_{i=1}^t \sum_{j=1}^s x_{ij} \langle \xi_i, y_jA\rangle  = \sum_{j=1}^s \left\langle \sum_{i=1}^t x_{ij} \xi_i, y_jA\right\rangle \ .
    \end{align*}
    Now, since $y_1,\dots,y_s$ are linearly independent, we know that $y=(y_1A,\dots,y_sA) \sim \F_q^{sr}$. Considering now the vector $\xi = \left(\sum_{i=1}^t x_{i1}\xi_i,\dots,\sum_{i=1}^tx_{is}\xi_i\right)$, if we show that it is nonzero, then by \Cref{lem:inner-products-uniform} we will conclude that 
    \[
        \sum_{j=1}^s \left\langle \sum_{i=1}^t x_{ij} \xi_i, y_j A\right\rangle = \langle \xi,y\rangle \sim \F_q \ ,  
    \]
    as required. So, we must show $\xi \neq 0$. That is, we must show that for some $j \in [s]$, we have $\sum_{i=1}^tx_{ij}\xi_j \neq 0$. Since $\xi_1,\dots,\xi_t$ are linearly independent, we have $\sum_{i=1}^t x_{ij}\xi_i=0 \iff x_{1j}=x_{2j} = \cdots = x_{tj}=0$; this implies that we just must show some $x_{ij} \neq 0$. To see this is true, note that some $x_i \neq 0$, and so since the $y_i$'s form a basis that means some $x_{ij} \neq 0$, as desired. 
\end{proof}

\begin{lemma}\label{lem:q-b}
    Let $b,t,r,k \in \N$ with $b \leq t$, and let $A$ be a random matrix distributed over $\F_q^{k \times r}$ which is $t$-LUOF. Consider systematic linear code $\cC := \{(x,xA):x \in \F_q^k\}$, which has block-length $n:=k+r$. Then, for all $x_1,\dots,x_b \in \F_q^n$ linearly independent, we have $\Pr\left[\forall j \in [b], x_j \in \cC\right] \leq q^{-rb}$. 
\end{lemma}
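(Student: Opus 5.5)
Write each $x_j = (u_j, v_j)$ with $u_j \in \F_q^k$ and $v_j \in \F_q^r$. Membership $x_j \in \cC$ is equivalent to $v_j = u_j A$, exactly as in the proof of \Cref{thm:first_dual}. So the event in question is $\{\forall j\in[b]: u_jA = v_j\}$. The plan is to reduce to a linearly independent subfamily of the $u_j$'s and apply the $t$-LUOF property. The subtle point is that the $u_j$ themselves need not be linearly independent, even though the $x_j$ are.

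First, let $s := \dim \mathrm{span}\{u_1,\dots,u_b\} \le b \le t$, and choose indices $J \subseteq [b]$ with $|J| = s$ such that $\{u_j : j \in J\}$ is a basis of this span. For every $i \notin J$, write $u_i = \sum_{j\in J} \alpha_{ij} u_j$. By linearity, once $u_jA = v_j$ holds for all $j \in J$, we get $u_iA = \sum_{j \in J}\alpha_{ij} v_j$. So the event can hold only if $v_i = \sum_{j\in J}\alpha_{ij}v_j$ for every $i \notin J$.

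Now consider the vectors $x_i - \sum_{j\in J}\alpha_{ij}x_j$ for $i\notin J$. They equal $(0, v_i - \sum_j \alpha_{ij}v_j)$, so they have zero first block. Because the $x$'s are linearly independent, these $b-s$ vectors are nonzero and in fact linearly independent. Hence if $s < b$, some $v_i \neq \sum_j \alpha_{ij}v_j$, and the probability is $0$. If instead $s = b$, all the $u_j$ are linearly independent, and by the $t$-LUOF property (via the Remark, since $b \le t$) the tuple $(u_1A,\dots,u_bA)$ is uniform on $\F_q^{br}$. It therefore equals the fixed tuple $(v_1,\dots,v_b)$ with probability exactly $q^{-rb}$.

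The only step needing care is the case split on whether the $u_j$'s are independent, i.e.\ checking that a dependency among the $u_j$ together with independence of the $x_j$ forces a violated constraint. This is handled by the observation above that the combinations with vanishing first block are nonzero. The bound for $H$ follows identically using \Cref{lem:t-luof_transpose}.
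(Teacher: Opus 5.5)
Your proof is correct and follows essentially the paper's argument: you split on whether the first blocks $u_j$ are linearly independent, apply the $t$-LUOF property (via $b\le t$) in the independent case, and in the dependent case show the event is empty because the dependency would force a nontrivial linear relation among the $x_j$. The only difference is cosmetic: you take a basis $J$ of $\mathrm{span}\{u_j\}$ where the paper takes the minimal dependent index $j^*$, and this lets you show directly that the event is empty, without the paper's conditioning step.
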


\begin{proof}
    Since $\cC$ is generated by $G = \begin{bmatrix} I_k ~|~ A \end{bmatrix}$, it is checked by the matrix $H = \begin{bmatrix} -A^\top ~|~ I_r\end{bmatrix}$. Note that, for all $x \in \F_q^n$, $xH^\top=x^{(2)}-x^{(1)}A$, where $x = (x^{(1)},x^{(2)})$ with $x^{(1)} \in \F_q^k$ and $x^{(2)} \in \F_q^r$. Hence, 
    \begin{align}
        \Pr\left[\forall j \in [b], x_j \in \cC\right] &= \Pr\left[\forall j \in [b], x_jH^\top =0\right]  \nonumber \\
        &= \Pr\left[\forall j \in [b], x^{(2)}_j-x^{(1)}_jA =0\right] = \Pr\left[\forall j \in [b], x^{(1)}_jA =x^{(2)}_j\right] \label{eq:prob-with-transpose}
    \end{align}
    Let us first consider the case that $x^{(1)}_1,\dots,x^{(1)}_b$ are linearly independent. Then, it follows directly from the fact that $A$ is $t$-wise LUOF (and hence, $b$-wise LUOF, since $b \leq t$) that the vector $(x^{(1)}_1A^\top,\dots,x_b^{(1)}A^\top)$ is distributed uniformly over $\F_q^{br}$, and hence the probability it takes on the value $(x_1^{(2)},\dots,x_b^{(2)})$ is $q^{-rb}$. Hence, 
    \[
        \eqref{eq:prob-with-transpose} = \Pr\left[(x^{(1)}_1A^\top,\dots,x_b^{(1)}A^\top)=(x_1^{(2)},\dots,x_b^{(2)})\right] = q^{-rb} \ .
    \]
    Let us now consider the case that $x^{(1)}_1,\dots,x_b^{(1)}$ are linearly dependent. Fix $j^* \in [b]$ minimal such that $x_{j^*}^{(1)} \in \mathrm{span}\{x_1^{(1)},\dots,x_{j^*-1}^{(1)}\}$ (note that we could have $j^*=1$, where we've used the convention that $\mathrm{span}(\emptyset) =\{0\}$). Thus, there exist multipliers $\lambda_1,\dots,\lambda_{j^*-1} \in \F_q$ such that $x_{j^*}^{(1)} = \sum_{i=1}^{j^*-1} \lambda_i x_i^{(1)}$ (again, if $j^*=1$, this is just the empty sum, which has value $0$). We have
    \[
        \eqref{eq:prob-with-transpose} \leq \Prop{\forall j \in [j^*], x_j^{(1)}A^\top =x_j^{(2)}}
    \]
    We claim the above probability is $0$, which certainly establishes the required bound $\eqref{eq:prob-with-transpose} \leq q^{-rb}$. 
    
    To see this term is $0$, we write 
    \begin{align*}
        &\Prop{\forall j \in [j^*],~ x_j^{(1)}A^\top =x_j^{(2)}}\\ 
        &\qquad = \Prop{x_{j^*}^{(1)}A^\top =x_{j^*}^{(2)}\Big| \forall i < j^*, ~x_i^{(1)}A^\top =x_i^{(2)}} \cdot \Prop{\forall i < j^*, ~x_i^{(1)} A^\top =x_i^{(2)}}
    \end{align*}
    Note that the event $\forall i < j^*, ~x_i^{(1)}A^\top =x_i^{(2)}$ does indeed have nonzero probability: if $j^* \ge 2$, as the vectors $x_1^{(1)},\dots,x_{j^*-1}^{(1)}$ are linearly independent and $A^\top$ is $(j^*-1)$-LUOF, the above argument shows this event has probability $q^{-r(j^*-1)}$, while if $j^*=1$ we condition on the empty event, which has probability $1$. 
    
    Now, conditioned on the event $\forall i < j^*, ~x_i^{(1)}A^\top =x_i^{(2)}$, we have 
    \begin{align*}
        x_{j^*}^{(1)}A^\top  = \left(\sum_{i=1}^{j^*-1} \lambda_i x_i^{(1)}\right)A^\top  = \sum_{i=1}^{j^*-1}\lambda_i ~x_i^{(1)} A^\top = \sum_{i=1}^{j^*-1}\lambda_i x_i^{(2)} \ .
    \end{align*}
    Thus, if $x_{j^*}^{(1)}A^\top =x_{j^*}^{(2)}$, then also $\sum_{i=1}^{j^*-1}\lambda_i x_{i}^{(2)} = x_{j^*}^{(2)}$, which then means that 
    \[
        x_{j^*} = (x_{j^*}^{(1)},x_{j^*}^{(2)}) = \sum_{i=1}^{j^*-1}\lambda_i (x_{i}^{(1)},x_{i}^{(2)}) = \sum_{i=1}^{j^*-1}\lambda_i x_{i} \ .
    \]
    Since the $x_1,\dots,x_b$ were assumed linearly independent, this cannot happen. So the probability is $0$, as desired. 
\end{proof}

\section{Fast Bounded-Independence Functions}

In this section we extend previous results from~\cite{IKOS08,DI14} on hash functions and linear codes by considering a higher independence parameter $t$.

\subsection{Fast $t$-Wise Independent Hash Functions} \label{sec:t-wise-hash} 

In this section we prove that, with minor modifications, the IKOS construction~\cite{IKOS08} can yield $t$-wise independent hash functions. This generalized variant is still fast as long as $t$ is constant. We summarized the IKOS construction for pairwise independent hash functions ($t=2$) in \Cref{subsec:our-techniques}. We now briefly recall that the construction consisted first of an encoding step (determined by code with very good distance over a larger alphabet); then, constant-sized hash functions are applied to the coordinates of the encoding; lastly, an extraction step (implemented by the transpose of a generator matrix for a good code) is applied. 

To generalize this to $t$-wise independence, the first natural change that one must make is that the small, constant-sized hashes $h_1,\dots,h_m$ should now be sampled so as to be $t$-wise independent (where $m$ is the block-length of the first code), which we can naturally do (and, assuming $t=O(1)$, this will only affect the circuit-size by a constant factor). 

Now, recalling the proof of pairwise independence for the IKOS construction, the good distance of the first code guaranteed that, in many coordinates, the encodings of the two distinct inputs will be different. That is, if the code's distance was $1-\eps$, that exactly means that in at least a $1-\eps$ fraction of coordinates, the encodings will differ. 

For $t$-wise independence, we are given $t$ inputs $x_1,\dots,x_t$, and in order to use the $t$-wise independence of the little hashes $h_1,\dots,h_m$, we need many coordinates $\ell \in [m]$ in which \emph{all} the encodings differ. In this case, an averaging argument will establish this holds in at least a $1-\binom{t}2\cdot \eps$ fraction of coordinates. Thus, so long as $\eps>0$ is chosen small enough (which can be done at the cost of increasing the alphabet size for the initial code), the argument goes through.

We now formally prove the construction works. We will use the following codes from \cite{DI14}:

\begin{theorem}[{\cite[Theorem~2]{DI14}}] \label{thm:good-code}
    For every prime power $q$ and $0 < \delta < 1$ there exists $0<\rho<1$, $\beta \in \N$ and a family of $\F_q$-additive codes $\cD_k \leq (\F_q^\beta)^m$ with rate $R = \frac{k}{\beta m} \geq \rho$ and minimum $\beta$-distance $\geq \delta$. Recall this means that, for all $c \in \cD_k \setminus\{0\}$, we have $\wt_\beta(c) \geq \delta$.

    Furthermore, the encoding map $E_{\cD_k}$ can be computed by a uniform family of $O(k)$-size arithmetic circuits over $\F_q$. 
\end{theorem}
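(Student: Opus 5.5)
The plan is to build $\cD_k$ by \emph{distance amplification}: take a fast linear code over $\F_q$ with constant rate and some constant relative distance, and boost its distance to $\delta$ by regrouping symbols along a constant-degree expander. This is the approach of Alon--Bruck--Naor--Naor--Roth and Alon--Edmonds--Luby, and it is presumably how~\cite{DI14} proceed.

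\textbf{Base code.} I would start from a uniform family of $\F_q$-linear codes $\cC_0 \leq \F_q^{m}$ with dimension $k$, rate $\rho_0>0$ and relative distance $\delta_0>0$, both absolute constants depending only on $q$, whose encoders are linear arithmetic circuits of size $O(k)$. For $q=2$ this is Spielman's construction~\cite{Spielman}. For general $q$ the same expander-based construction goes through verbatim over $\F_q$, since the argument only uses linearity and unique-neighbor expansion. I would take this base code as given, with a short remark, rather than reprove it.

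\textbf{Amplification.} Fix an explicit $d$-regular bipartite graph $\Gamma$ on $[m]\times[m]$ whose second singular value satisfies $\lambda \le 2\sqrt d$ (or any $\lambda = O(\sqrt d)$), for example the double cover of an explicit Ramanujan or Margulis-type family. Define the encoding of $x$ by first computing $c=E_{\cC_0}(x)\in\F_q^m$. The $v$-th output symbol, for $v\in[m]$, is the vector $(c_u)_{u\in N(v)} \in \F_q^{d}$. So $\beta=d$, and the resulting code is $\F_q$-additive over $\F_q^\beta$. Its rate is $k/(\beta m)=\rho_0/d$, which is a constant $\rho$ once $d$ is fixed. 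The encoder adds only $dm=O(k)$ wires of copying on top of the base encoder, so it has size $O(k)$.

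\textbf{Distance.} Let $c\neq 0$ and let $S=\mathrm{supp}(c)$, so $|S|\ge\delta_0 m$. Let $T$ be the set of right vertices whose output symbol is zero; these are exactly the $v$ with $N(v)\cap S=\emptyset$. Then $e(S,T)=0$, and the expander mixing lemma gives
\[
\frac{d|S||T|}{m} \le \lambda\sqrt{|S||T|}, \quad\text{hence}\quad \frac{|T|}{m}\le \frac{\lambda^2}{d^2}\cdot\frac{m}{|S|}\le \frac{4}{d\,\delta_0}.
\]
Choosing $d \ge 4/(\delta_0(1-\delta))$ therefore gives $\wt_\beta(E(x)) = 1-|T|/m \ge \delta$, as required.

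\textbf{Main obstacle.} The main difficulty is uniformity and the availability of block lengths. Explicit expander families with the required spectral gap exist only for certain sizes $m$, and the base code also has restricted lengths. I would handle this by padding. Pick the smallest admissible graph size $m' \ge m$ within a constant factor of $m$, which explicit families (e.g.\ Margulis/Gabber--Galil graphs on $\mathbb{Z}_n^2$, possibly after squaring or powering to reach the needed $\lambda/d$) allow. Then append $m'-m$ zero coordinates to the base codeword. Padding with zeros does not shrink $S$, so the relative base distance drops only by the constant factor $m/m'$; this is absorbed by enlarging $d$. The rate drops by at most the same constant factor, which is absorbed into $\rho$. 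Finally, the graph must be computable in $\poly(k)$ time, which holds for these explicit families, so the circuit family is uniform.
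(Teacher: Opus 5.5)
The paper does not prove this statement. It is imported from \cite{DI14}, and the paper's only comment on the underlying construction is \Cref{rmk:eps-t-hash}, which says that $\beta = O(1/\eps^3)$ (for distance $1-\eps$) is dictated by the degree of explicit lossless expanders.

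Your reconstruction is a correct, self-contained route to the statement. Regrouping a fast $\F_q$-linear base code of constant rate $\rho_0$ and distance $\delta_0$ along a $d$-regular bipartite graph preserves dimension and linear-size encodability. A right vertex carries the zero symbol exactly when it has no neighbour in $S=\mathrm{supp}(c)$. The expander mixing lemma then bounds the fraction of such vertices by $(\lambda/d)^2/\delta_0$.

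Compared with the lossless-expander instantiation the paper alludes to, your spectral route gives a one-line distance analysis. With Ramanujan graphs (non-bipartite ones, if you pass to the double cover) it also gives $\beta = d = O(1/(\delta_0(1-\delta)))$, linear rather than cubic in $1/(1-\delta)$.

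What your route does not address is the base code, which you take as given. That is acceptable for a cited ingredient. However, your reason that Spielman's construction transfers to $\F_q$ ``verbatim'' is too quick. Unique-neighbour expansion is a lossless-type property that spectral expansion does not guarantee. You therefore need either explicit lossless (or unique-neighbour) expanders, or Spielman's Tanner-code-based error-reduction codes with $\F_q$-linear inner codes. This is where the real work over general $\F_q$ lies.

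Two small repairs are needed. First, Margulis/Gabber--Galil graphs and their powers do not satisfy $\lambda = O(\sqrt d)$; they only drive $\lambda/d \to 0$. That is all your bound actually uses: you need $(\lambda/d)^2 \le \delta_0'(1-\delta)$, where $\delta_0' = \delta_0 m/m'$ is the padded base distance. The price is that $d$ becomes polynomial in $1/(1-\delta)$ with a worse exponent. Second, to cover every $k$ you must also zero-pad \emph{messages} up to an admissible base-code dimension, as in \Cref{rem:decreasing-dimension}, not only pad codewords up to an admissible graph size. Both paddings cost only constant factors in rate and distance.
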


\begin{remark} \label{rem:decreasing-dimension}
    \Cref{thm:good-code} above promises codes of rate at least some value $\rho$; however, in some cases we will in fact require codes of a smaller rate, while preserving the same distance and running time. We can easily arrange for this by, say, zeroing out some coordinates from the messages. E.g., if $\cD_k$ has dimension $k$ and we'd instead like a code with dimension $k'<k$, we just append $k-k'$ $0$'s to the message $m \in \F_q^{k'}$ prior to encoding via $E_{\cD_k}$.
\end{remark}

\begin{theorem}\label{thm:t-wise-hash}
    Let $k,r\in \N$ be growing parameters, fix $t \in \N$, and let $q$ be a prime power. There exists an arithmetic circuit $C_{k,r}:\F_q^k \to \F_q^r$ of size $O(k+r)$ with $O(k+r)$ uniformly random field elements such that, over the uniformly random field elements, the circuit implements a $t$-wise independent hash function of degree $(q-1)\log_q(t)$.
\end{theorem}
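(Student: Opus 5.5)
We follow the IKOS template described in \Cref{sec:t-wise-hash}, with three stages. First, encode the input $x \in \F_q^k$ with an $\F_q$-additive code $\cD \leq (\F_q^\beta)^m$ from \Cref{thm:good-code}, taking minimum $\beta$-distance $\delta \geq 1-\eps$ with $\eps$ small enough that $\binom t2 \eps \leq 1/2$. Use \Cref{rem:decreasing-dimension} to adjust the dimension, so that $\beta m = O(k)$; we also take $m$ large enough relative to $r$, as explained below. We may also enlarge $\beta$ so that $q^\beta \geq t$.

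Second, apply independent $t$-wise independent hash functions $h_1,\dots,h_m : \F_q^\beta \to \F_q^\beta$ from \Cref{lem:low-degree-hash} coordinatewise. Each has degree at most $(q-1)\log_q t$, uses $t\beta$ random field elements, and has constant circuit size since $t,\beta,q$ are constants.

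Third, apply a linear extractor $M \in \F_q^{\beta m \times r}$ to the concatenated vector $z=(h_1(c^{(1)}),\dots,h_m(c^{(m)}))$. Concretely, take $M = G_{\cD'}^\top$, where $G_{\cD'} \in \F_q^{r \times \beta m}$ generates an $\F_q$-linear code $\cD'$ of length $\beta m$ and dimension $r$ whose relative Hamming distance exceeds $1/2$. To get this, view a code from \Cref{thm:good-code} over $\F_q^{\beta'}$ with distance greater than $1/2$ as a code of length $\beta' m'$ over $\F_q$; its relative $\F_q$-distance is at least its $\beta'$-distance divided by $\beta'$. We therefore either pick parameters so that the concatenated bound already exceeds $1/2$ times $\beta m$ nonzero entries, or use the weaker requirement spelled out below, which only needs more than $\beta m/2$ nonzero entries among the coordinates. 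Padding handles the lengths, so $\beta m = \Theta(k+r)$. By \Cref{lem:fast_tranpose}, $z \mapsto zM$ is computable in size $O(k+r)$. The first and third stages are linear and the middle stage has degree at most $(q-1)\log_q t$, so the total degree is as claimed and the size is $O(k+r)$.

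For correctness, fix distinct $x_1,\dots,x_t$. The pairwise differences $E(x_a)-E(x_b)$ are nonzero codewords, so each pair collides on at most an $\eps$ fraction of blocks. A union bound then gives a set $S \subseteq [m]$ with $|S| \geq (1-\binom t2\eps)m \geq m/2$ on which all $t$ encodings are pairwise distinct. On the blocks in $S$, independence of the $h_\ell$ together with their $t$-wise independence makes the values $(h_\ell(E(x_j)^{(\ell)}))_{j \in [t],\,\ell \in S}$ jointly uniform, and independent of the values on blocks outside $S$. We prove uniformity of $(z_1M,\dots,z_tM)$ using \Cref{lem:inner-products-uniform}: for nonzero $(\xi_1,\dots,\xi_t) \in \F_q^{rt}$, $\sum_j \langle \xi_j, z_jM\rangle = \sum_j \langle \xi_j G_{\cD'}, z_j\rangle$. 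Some $\xi_j \neq 0$, so $\xi_jG_{\cD'}$ is a nonzero codeword of $\cD'$. Its support must meet the $\F_q$-coordinates of the blocks in $S$, because the complement of those blocks covers at most $\beta m/2$ coordinates and the codeword has more than $\beta m/2$ nonzero entries. The summand's dependence on a uniform, independent coordinate then makes the sum uniform.

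The main obstacle is this last step: choosing the extractor code so that its nonzero codewords have weight exceeding the number of $\F_q$-coordinates outside $S$. That is, we need the distance of $\cD'$ over $\F_q$ to exceed $\binom t2 \eps$ in relative terms while keeping it fast. We handle this by choosing $\eps$ smaller than half the relative $\F_q$-distance guaranteed for $\cD'$, which is a constant, rather than insisting on distance $1/2$.
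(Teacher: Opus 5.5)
Your argument is correct up to one slip, and it uses the paper's construction and main analysis. You have the same map $x\mapsto h(xG_1)G_2^\top$, the same reduction via \Cref{lem:inner-products-uniform} to a nonzero linear test, and the same pigeonhole step. That step intersects the at least $1-\binom t2\eps$ fraction of blocks on which all $t$ encodings are distinct with the support of a nonzero extractor-side codeword. Locating a good $\F_q$-coordinate rather than a good block is equivalent.

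The real difference is how you measure the extractor code's distance. The paper takes $\cC_2\le(\F_q^\beta)^m$ over the same alphabet and block length as $\cC_1$ and uses its $\beta$-distance. A nonzero $yG_2$ is then nonzero on at least $\delta_2 m$ blocks, so $\delta_2>\binom t2\eps$ directly yields a good block. Both codes can also be drawn from one family and shortened via \Cref{rem:decreasing-dimension}, so the lengths agree automatically. You instead flatten $\cD'$ to an $\F_q$-code, which loses a factor $\beta'$ in relative distance. This forces you to fix $\cD'$, with its own constant $\beta'$, before choosing $\eps$, and then to reconcile $\beta'm'$ with $\beta m$ by padding. In exchange you gain modularity: any fast $\F_q$-linear code of constant relative distance would serve as the extractor.

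To make your version precise:
\begin{itemize}
\item The condition is $\binom t2\eps<\delta'_{\F_q}$, where $\delta'_{\F_q}$ is the relative $\F_q$-distance of the zero-padded $\cD'$. Choosing ``$\eps$ smaller than half'' that distance is not enough once $t\ge 3$; take $\eps<\delta'_{\F_q}/\binom t2$.
\item Choose $\beta'm'\le\beta m$ with $\beta'm'\ge c\,\beta m$, for a constant $c$ depending only on $\cD'$'s family. You can do this by taking its dimension parameter proportional to $\beta m$. Then $\delta'_{\F_q}$ is a constant independent of $\eps$, so there is no circularity.
\item Drop the intermediate target of $\F_q$-relative distance above $1/2$. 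Flattening does not deliver it, and for $q=2$ it is impossible at positive rate by the Plotkin bound.
\end{itemize}
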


\begin{proof}

    We first describe the construction. Fix constants $0<\eps,\delta_2<1$ with $1-\binom t2 \cdot \eps +\delta_2>1$, and fix $\beta \in \N$ large enough so that the following hold. Let $\cC_1 \subseteq \left(\F_q^\beta\right)^m$ and $\cC_2 \subseteq \left(\F_q^\beta\right)^m$ be $\F_q$-additive codes of dimension $k$ and $r$, respectively, and with distance $1-\eps$ and $\delta_2$, respectively. Such codes are promised to exist by \Cref{thm:good-code} (along with \Cref{rem:decreasing-dimension} in order to reduce the codes' dimensions, if required). Let $G_1$ and $G_2$ denote generator matrices for $\cC_1$ and $\cC_2$, respectively. Let $C_1 : \F_q^k \to \F_q^{m \cdot \beta}$ be the (uniform) arithmetic circuit of size $O(m)$ implementing $G_1$, and let $C_2:\F_q^{m \cdot \beta} \to \F_q^r$ be the (uniform) arithmetic circuits of size $O(m)$ implementing $G_2^\top$. That such circuits exist follows from \Cref{thm:good-code} and \Cref{lem:fast_tranpose}.

    Lastly, the circuit samples $t \cdot \beta \cdot m= O(k+r)$ uniformly random symbols which we take to describe $m$ $t$-wise independent functions $h_1,\dots,h_m:\F_q^\beta \to \F_q^\beta$. We remark that, thanks to \Cref{prop:poly-t-wise-hash}, we can indeed sample such constant-sized $t$-wise hash functions. We imagine composing these functions $h_i$ into the map
    \[
        h : \left(\F_q^\beta\right)^m \to \left(\F_q^\beta\right)^m ~ , ~~~~ h\left(x^{(1)},\dots,x^{(m)}\right) = \left(h_1(x^{(1)}),\dots,h_m(x^{(m)})\right) \ .
    \]
    As each $h_\ell$ can be implemented by a uniform, randomized arithmetic circuit of size $O_{\beta,t}(1)$, $h$ can be implemented by a uniform, randomized arithmetic circuit of size $O(m)$; call this circuit $B$. Note that $B$ then requires $O(k+r)$ uniformly random field elements. Defining the circuit $C_{k,r}$ now as $C_2 \circ B \circ C_1$, it is immediate that it is uniform and of size $O(m) = O(k+r)$, with $O(k+r)$ uniformly random field elements. Additionally, by \Cref{lem:low-degree-hash} we can enforce that all the $h_\ell$'s have algebraic degree $\leq (q-1)\log_q(t)$, and so since the other steps are linear, the composite function also has algebraic degree $\leq (q-1)\log_q(t)$. 

   \begin{figure}[!h]
        \centering
        \begin{tikzpicture}[scale=1]
          \coordinate (A) at (0.00, 1.00);
          \coordinate (B) at (1.00, 0.00);
          \coordinate (C) at (9.00, -0.00);
          \coordinate (D) at (10.00, 1.00);
          \coordinate (E) at (0.00, 4.00);
          \coordinate (F) at (2.00, 5.00);
          \coordinate (G) at (8.00, 5.00);
          \coordinate (H) at (10.00, 4.00);
          \coordinate (I) at (5.00, 4.30);
          \coordinate (J) at (5.00, 0.30);
          \coordinate (K) at (0.50, 1.75);
          \coordinate (L) at (1.50, 1.75);
          \coordinate (M) at (2.50, 1.75);
          \coordinate (N) at (9.50, 1.75);
          \coordinate (O) at (-1.46, 5.50);
          \coordinate (P) at (-2.68, 4.25);
          \coordinate (W) at (9.50, 3.00);
          \coordinate (X) at (1.50, 3.00);
          \coordinate (Y) at (0.50, 3.00);
          \coordinate (Z) at (-1.62, 3.00);
          \coordinate (AA_1) at (-2.57, 0.25);
          \coordinate (AA) at (-1.89, 1.75);
          \coordinate (AB) at (-1.44, -1.00);
          \coordinate (Q) at (1.00, 3.50);
          \coordinate (R) at (1.00, 3.00);
          \coordinate (S) at (2.00, 3.50);
          \coordinate (T) at (2.00, 3.00);
          \coordinate (U) at (9.00, 3.50);
          \coordinate (V) at (9.00, 3.00);
          \coordinate (AC) at (-0.50, 6.50);
          \coordinate (AD) at (-0.53, -1.51);
        
          \draw (A) -- (B);
          \draw (B) -- (C);
          \draw (D) -- (C);
          \draw (A) -- (D);
          \draw (0,1.5) rectangle (1,2.5);
          \draw (1,1.5) rectangle (2,2.5);
          \draw (9,1.5) rectangle (10,2.5);
          \draw (E) -- (F);
          \draw (F) -- (G);
          \draw (F) -- (G);
          \draw (G) -- (H);
          \draw (E) -- (H);
          \draw (2,5.5) rectangle (8,6);
          \node[above] at (I) {$G_1$};
          \node[above] at (J) {$G_2^\top$};
          \node[above] at (K) {$h_1$};
          \node[above] at (L) {$h_2$};
          \node[above] at (M) {$\dots$};
          \node[above] at (N) {$h_m$};
          \node[above] at (O) {$x \in \mathbb{F}_q^k$};
          \draw (0,3) rectangle (10,3.5);
          \node[above] at (P) {$\cC_1 : \mathbb{F}_q^k \to \mathbb{F}_q^{m\cdot \beta} = \left(\mathbb{F}_q^\beta\right)^m$};
          \node[above] at (W) {$x^{(m)}$};
          \node[above] at (X) {$x^{(2)}$};
          \node[above] at (Y) {$x^{(1)}$};
          \node[above] at (Z) {$x^{(i)}\in \mathbb{F}_q^\beta$};
          \node[above] at (AA_1) {$ \cC_2^\top : \left(\mathbb{F}_q^\beta\right)^m \to \left(\mathbb{F}_q^\beta\right)^m$};
          \node[above] at (AA) {$ h_i : \mathbb{F}_q^\beta \to \mathbb{F}_q^\beta$};
          \draw (1,-1) rectangle (9,-0.5);
          \node[above] at (AB) {$y \in \mathbb{F}_q^r$};
          \draw (Q) -- (R);
          \draw (S) -- (T);
          \draw (U) -- (V);
          \draw[ultra thin] (AC) -- (AD);
        \end{tikzpicture}
        \caption{A schematic of the IKOS construction for $t$-wise independent hash functions. The functions $h_i$ are constant size $t$-wise independent hash functions which we sample. $\cC_1$ is a code with generator matrix $G_1$ and $\cC_2$ is a code with generator matrix $G_2$.}
        \label{fig:ikos_construction}
    \end{figure}

    Let us write $H:\F_q^{k} \to \F_q^r$ to denote the function defined by the circuit $C_{k,r}$. We refer the reader to \Cref{fig:ikos_construction} which gives a schematic of $H$. Recall that $H$ takes $O(k+r)$ uniformly random field elements. What we want to prove is that, over these random values (which sample the $m$ $t$-wise independent hash functions $h_1,\dots,h_m:\F_q^\beta \to \F_q^\beta$), $H$ is a $t$-wise independent hash function family. That is, given distinct input vectors $x_1,\dots,x_t \in \F_q^k$, over the randomness we have that
    \[
        z:=(H(x_1),\dots,H(x_t)) \sim \F_q^{rt} \ .
    \]
    To do this, by \Cref{lem:inner-products-uniform} it suffices to consider any $y \in \F_q^{tr} \setminus \{0\}$, and to prove that $\langle y,z\rangle \sim \F_q$. Write $y = (y_1,\dots,y_t)$ with each $y_i \in \F_q^r$, and let $h : \left(\F_q^{\beta}\right)^m \to \left(\F_q^{\beta}\right)^m$ be the function determined by sampling the $m$ individual $t$-wise independent hash function $h_i$. Then
    \begin{align}
        \langle z,y\rangle = \sum_{i=1}^t \left\langle y_i , H(x_i)\right\rangle = \sum_{i=1}^t \left\langle y_i , h(x_iG_1)G_2^\top\right\rangle = \sum_{i=1}^t \left\langle y_iG_2 , h(x_iG_1)\right\rangle \label{eq:t-wise-to-continue}\ .
    \end{align}
    Let $c_i = x_iG_1$ and $d_i = y_i G_2$, and note that $c_1,\dots,c_t$ are distinct codewords in $\cC_1$ and $d_1,\dots,d_t$ are codewords in $\cC_2$ which are not all $0$. Without loss of generality, assume $d_1 \neq 0$. Write $c_i = (c_i^{(1)},\dots,c_i^{(m)})$ and $d_i = (d_i^{(1)},\dots,d_i^{(m)})$ with each $c_i^{(\ell)},d_i^{(\ell)} \in \F_q^\beta$. We can now write 
    \[
        \eqref{eq:t-wise-to-continue} = \sum_{\ell=1}^m \sum_{i=1}^t\left\langle d_i^{(\ell)}, h_\ell(c_i^{(\ell)})\right\rangle
    \]
    
    We claim there exists $\ell^* \in [m]$ for which 
    \begin{itemize}
        \item[(a)] $d_1^{(\ell^*)} \neq 0$, and
        \item[(b)] the vectors $c_1^{(\ell^*)},\dots,c_t^{(\ell^*)} \in \F_q^\beta$ are \emph{distinct}. 
    \end{itemize}
    We establish this claim now. Since $\cC_2$ has distance $\geq \delta_2$, condition (a) holds for at least a $\delta_2$ fraction of $\ell \in [m]$. As for the second condition, consider the $\binom t2$ codewords $c_{ij} = c_i - c_j$ for $1 \leq i < j \leq t$. Since $\cC_1$ has distance $1-\eps$, we have that $c_{ij}^{(\ell)} \neq 0$ for at least a $1-\eps$ fraction of $\ell \in [m]$. Thus, we have $c_{ij}^{(\ell)}=0$ for at most an $\eps$-fraction of $\ell \in [m]$, implying $c_{ij}^{(\ell)}=0$ for \emph{some} $1\leq i < j \leq t$ for at most a $\binom t2 \cdot \eps$ fraction of $\ell \in [m]$. Therefore, condition (b) holds for at least a $1-\binom t2\cdot \eps$ fraction of $\ell \in [m]$. Since $1-\binom t2 \cdot \eps + \delta_2>1$ by assumption, by the pigeonhole principle some $\ell \in [m]$ must exist satisfying conditions (a) and (b).

    So, fix such an $\ell^*$. Since the $h_1,\dots,h_m$ are sampled independently, if we prove that 
    $\sum_{i=1}^t\left\langle d_i^{(\ell^*)}, h_{\ell^*}(c_i^{(\ell^*)})\right\rangle\sim\F_q$, then additionally $\sum_{\ell=1}^m \sum_{i=1}^t\left\langle d_i^{(\ell)}, h_\ell(c_i^{(\ell)})\right\rangle \sim \F_q$, as required. To this end, without loss of generality assume there is $1 \leq s\leq t$ such that $d_i^{(\ell)}\neq 0$ iff $i \leq s$. So then 
    \[
        \sum_{i=1}^t\left\langle d_i^{(\ell^*)}, h_\ell(c_i^{(\ell^*)})\right\rangle = \sum_{i=1}^s\left\langle d_i^{(\ell^*)}, h_{\ell^*}(c_i^{(\ell^*)})\right\rangle \ .
    \]
    Since the vectors $c_1^{(\ell^*)},\dots,c_s^{(\ell^*)}$ are distinct (and $s\leq t$), the $t$-wise independence of $h_{\ell^*}$ guarantees the values $h_{\ell^*}(c_1^{(\ell^*)}),\dots,h_{\ell^*}(c_s^{(\ell^*)})$ are independent and uniform, which guarantees that $\sum_{i=1}^s\left\langle d_i^{(\ell^*)}, h_{\ell^*}(c_i^{(\ell^*)})\right\rangle \sim \F_q$, as required. 
\end{proof}

\begin{remark} \label{rmk:eps-t-hash}
    In \Cref{thm:t-wise-hash}, we argued that we can achieve $O(k+r)$ circuit size for constant values of $t$. Naturally, the constant hidden in the big-$O$ notation depends on the choice of $t$. An inspection of the proof shows that the dependence is roughly $\Theta(t^7)$. Firstly, in order to guarantee $1-\binom{t}{2} \cdot \eps$ is not too small, we require $\eps = O(1/t^2)$. The codes promised by \Cref{thm:good-code} can be verified to require $\beta = O(1/\eps^3)$ (this itself stems from using the best-known explicit constructions of lossless expander graphs, whose degree are of this order). Thus, in the end we require $\beta = \Omega(t^6)$, and since the total size is about $t \cdot \beta$, we get $\Theta(t^7)$. In particular, with $t = \poly(\log (k+r))$ we could still achieve size $\tilde O(k+r)$. This means that when $t=\log^{o(1)} k$, we get better asymptotic size than an FFT-based implementation of a hash function using degree-$t$ polynomials.
\end{remark}

\begin{remark}
    We remark that having smaller algebraic degree is not possible, at least in the case $q=2$. Here, we use the following standard fact (see, e.g., \cite[Section~3.1]{HHL19}): if $f:\F_2^\beta \to \F_2$ is a polynomial map with algebraic degree at most $d$ and $d<\beta$, then for any $x_0,x_1,\dots,x_{d+1} \in \F_2^\beta$, we have $\sum_{S \subseteq [d+1]}f\left(x_0+\sum_{i\in S}x_i\right)=0$. That is: the sum over any $(d+1)$-dimensional affine space is $0$.
    
    Now, suppose one had $\cH$ supported on maps $G = (G_1,\dots,G_\beta):\F_2^\beta \to \F_2^\beta$, where each $G_j$ is of degree $d$ with $2^{d+1}\leq t$. Then, choose $x_1,\dots,x_{d+1} \in \F_2^\beta$ linearly independent (this is possible since $d<\beta$, i.e., $d+1\leq \beta$), and $x_0$ arbitrarily; say, $x_0=0$. The linear independence implies the values $\sum_{i\in S}x_i$ are distinct for over $S \subseteq [d+1]$. Then we know that, no matter what $G$ is sampled, we have
    \begin{align*}
        \sum_{S \subseteq [d+1]}G\left(\sum_{i \in S}x_i\right) &= \sum_{S \subseteq [d+1]} \left(G_1\left(\sum_{i \in S}x_i\right),\dots,G_\beta\left(\sum_{i \in S}x_i\right)\right)\\ 
        &= \left(\sum_{S \subseteq [d+1]}G_1\left(\sum_{i \in S}x_i\right),\dots,\sum_{S \subseteq [d+1]}G_\beta\left(\sum_{i \in S}x_i\right)\right) = (0,\dots,0) = 0 \ .
    \end{align*}
    Thus, we can find $2^{d+1}\leq t$ distinct input values such that, after revealing the hash value on the first $2^{d+1}-1$ of them, the last value is determined (as the sum of the previous values). So the distribution cannot define a $t$-wise independent hash function family. 
\end{remark}

\subsection{Fast Bounded-Independence LUOF} \label{sec:t-luof} 
Recall the definitions of a linear uniform output family (LUOF) (\Cref{sec:first_dual}) and a $t$-wise LUOF (\Cref{sec:second_dual}). The $t$-wise LUOF can be thought of as a linear analogue of $t$-wise independent hash functions. Of course, when restricting to a linear function, one cannot hope to achieve a $t$-wise independent hash function, as any distinct yet linearly dependent inputs will be mapped to linearly dependent outputs (and hence not stochastically independent). This motivated us to define a $t$-LUOF as any matrix which maps $t$ distinct \textit{linearly independent} non-zero inputs to uniformly random and independent outputs. 

Our goal in this subsection is to give a construction for a fast $t$-wise LUOF. Recall that we required this for the fast list-decodable codes with fast list-decodable duals of \Cref{sec:second_dual}. Akin to how we showed that the fast $t$-wise independent hash function by generalizing the IKOS construction for fast pairwise independent hash function (assuming one uses a code with large enough distance), we will generalize the fast LUOF construction of Druk and Ishai \cite{DI14} to give a fast $t$-LUOF. Here also this will rely on the codes from \Cref{thm:good-code} having sufficiently large distance. We remark that the encoding circuits for the codes we construct only have linear size if $q = O(1)$ (see \Cref{rmk:eps-t-LUOF}).

\begin{theorem} \label{thm:t-LUOF}
    Let $k,r\in \N$ be growing parameters, fix $t \in \N$, and let $q$ be a prime power. There exists an arithmetic circuit $C_{k,r}:\F_q^k \to \F_q^r$ of size $O(k+r)$ with $O(k+r)$ uniformly random field elements such that, over the $O(k+r)$ uniformly random field elements, the circuit implements a $t$-LUOF.
\end{theorem}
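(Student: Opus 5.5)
We mirror the proof of \Cref{thm:t-wise-hash}, replacing the inner $t$-wise hash functions by small uniformly random matrices, as in the Druk--Ishai construction. Fix constants $0<\eps,\delta_2<1$ with $1-q^{t}\eps+\delta_2>1$. Using \Cref{thm:good-code} and \Cref{rem:decreasing-dimension}, fix $\beta$ and $\F_q$-additive codes $\cC_1,\cC_2\le(\F_q^\beta)^m$ of dimensions $k$ and $r$, with $\beta$-distance $1-\eps$ and $\delta_2$, and generator matrices $G_1,G_2$. Sample independent uniformly random matrices $M_1,\dots,M_m\in\F_q^{\beta\times\beta}$, and set $M=\mathrm{diag}(M_1,\dots,M_m)$. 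The construction is $A:=G_1MG_2^\top$, i.e.\ $x\mapsto xG_1MG_2^\top$.

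Efficiency follows exactly as before. $xG_1$ is fast by \Cref{thm:good-code}. Applying $M$ costs $O(m\beta^2)=O(m)$ and uses $O(m)$ random elements. Multiplication by $G_2^\top$ is fast by \Cref{lem:fast_tranpose}. The whole circuit is linear in $x$.

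For the $t$-LUOF property, fix linearly independent $x_1,\dots,x_t$ and let $z=(x_1A,\dots,x_tA)$. By \Cref{lem:inner-products-uniform} it suffices to show $\langle y,z\rangle\sim\F_q$ for every nonzero $y=(y_1,\dots,y_t)$. Set $c_i=x_iG_1$ and $d_i=y_iG_2$. Then
\[
\langle y,z\rangle=\sum_{\ell=1}^m\sum_{i=1}^t\big\langle d_i^{(\ell)},c_i^{(\ell)}M_\ell\big\rangle .
\]
Since $G_1$ is injective, the $c_i$ are linearly independent. Since $G_2$ is injective, some $d_{i_0}\neq 0$.

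The key combinatorial step, and the main obstacle, is to find a coordinate $\ell^*$ satisfying two conditions:
\begin{itemize}
\item[(a)] $d_{i_0}^{(\ell^*)}\ne0$;
\item[(b)] $c_1^{(\ell^*)},\dots,c_t^{(\ell^*)}$ are linearly independent in $\F_q^\beta$.
\end{itemize}
For (b), I would use a union bound over all nonzero coefficient vectors $\lambda\in\F_q^t$, at most $q^t$ of them. Each combination $\sum_i\lambda_ic_i$ is a nonzero codeword of $\cC_1$, so it vanishes on at most an $\eps$-fraction of coordinates. Hence (b) fails on at most a $q^t\eps$-fraction of coordinates; this is why we need $\eps<q^{-t}\delta_2$ rather than $\eps\approx t^{-2}$. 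Condition (a) holds on at least a $\delta_2$-fraction of coordinates, so pigeonhole yields $\ell^*$ satisfying both.

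Now $\langle d_i^{(\ell^*)},c_i^{(\ell^*)}M_{\ell^*}\rangle=\langle c_i^{(\ell^*)}{}^\top d_i^{(\ell^*)},M_{\ell^*}\rangle$, an inner product with $M_{\ell^*}$ viewed as a vector in $\F_q^{\beta^2}$. So the $\ell^*$ term equals $\langle N,M_{\ell^*}\rangle$, where
\[
N=\sum_i c_i^{(\ell^*)}{}^\top d_i^{(\ell^*)}.
\]
Complete $c_1^{(\ell^*)},\dots,c_t^{(\ell^*)}$ to a basis of $\F_q^\beta$ and take the dual basis vector $u$ with $\langle u,c_i^{(\ell^*)}\rangle=[i=i_0]$. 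Then $uN=d_{i_0}^{(\ell^*)}\ne0$, so $N\ne0$. Because $M_{\ell^*}$ is uniform, this term is uniform. Because the $M_\ell$ are independent, adding the other terms preserves uniformity, which completes the argument.
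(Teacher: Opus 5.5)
Your proposal is correct and follows the paper's proof essentially step for step. It uses the same construction $A=G_1FG_2^\top$ with independent uniformly random $\beta\times\beta$ blocks, the same reduction via \Cref{lem:inner-products-uniform}, and the same union bound over the nonzero $\lambda\in\F_q^t$ to find a coordinate $\ell^*$ where some $d_i^{(\ell^*)}\neq 0$ and the $c_i^{(\ell^*)}$ are linearly independent. The only difference is cosmetic and comes at the last step. You write the $\ell^*$-term as a Frobenius inner product $\langle N, M_{\ell^*}\rangle$ and show $N\neq 0$ via a dual-basis vector, whereas the paper uses that a uniformly random matrix maps linearly independent vectors to independent uniform vectors and then applies the XOR lemma; both arguments are valid.
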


\begin{proof}
     We first provide the construction, which is very similar to that given in the proof of \Cref{thm:t-wise-hash}. Fix $0<\eps,\delta_2<1$ with $1-q^t\cdot \eps+\delta_2>1$. Fix $\beta \in \N$ with $\beta \geq t$ large enough so that the following hold. We first choose codes $\cC_1,\cC_2$ exactly as in \Cref{thm:t-wise-hash}, which we now recall for the reader's convenience. 

    Let $\cC_1 \subseteq \left(\F_q^\beta\right)^m$ and $\cC_2 \subseteq \left(\F_q^\beta\right)^m$ be $\F_q$-additive codes of dimension $k$ and $r$, respectively, and with distance $1-\eps$ and $\delta_2$, respectively. Such codes are promised to exist by \Cref{thm:good-code} (along with \Cref{rem:decreasing-dimension} in order to reduce the codes' dimensions, if required). Let $G_1$ and $G_2$ denote generator matrices for $\cC_1$ and $\cC_2$, respectively. Additionally, from \Cref{thm:good-code} and \Cref{lem:fast_tranpose} it holds that $G_1$ and $G_2^\top$ can both be implemented by uniform arithmetic circuits of size $O(m) = O(k+r)$; call these circuits $C_1$ and $C_2$, respectively.

    Lastly, the circuit samples $m \cdot \beta^2 = O(r+k)$ uniformly random symbols which we take to describe $m$ random matrices $F_1,\dots,F_m \in \F_q^{\beta \times\beta}$. Let $F \in \F_q^{\beta m \times \beta m}$ denote the block-diagonal matrix with the $F_i$'s as blocks, that is,
    \[
        F = \begin{bmatrix} F_1 & 0 & \cdots & 0 \\
        0 & F_2 & \cdots & 0 \\
        \vdots & \vdots & \ddots & \vdots\\
        0 & 0 & \cdots & F_m 
        \end{bmatrix} \ .
    \]
    Observe that the map $\left(\F_q^\beta\right)^{ m} \to \left(\F_q^{\beta}\right)^m$ obtained by mapping $((x^{(1)},\dots,x^{(m)}) \mapsto (x^{(1)}F_1,\dots,x^{(m)}F_m)$ can be implemented by a uniform randomized arithmetic circuit of size $O(m \cdot \beta^2) = O(r+k)$ with the $F_i$'s as above; let $B$ denote this circuit. The final circuit $C_{k,r}$ is then defined as $C_2 \circ B \circ C_1$. That they have size $O(k+r)$ follows from the given construction. The circuit samples $O(r+k)$ uniformly random field elements to implement $F$.

  \begin{figure}[!h]
        \centering
        \begin{tikzpicture}[scale=1]
          \coordinate (A) at (0.00, 1.00);
          \coordinate (B) at (1.00, 0.00);
          \coordinate (C) at (9.00, -0.00);
          \coordinate (D) at (10.00, 1.00);
          \coordinate (E) at (0.00, 4.00);
          \coordinate (F) at (2.00, 5.00);
          \coordinate (G) at (8.00, 5.00);
          \coordinate (H) at (10.00, 4.00);
          \coordinate (I) at (5.00, 4.30);
          \coordinate (J) at (5.00, 0.30);
          \coordinate (K) at (0.50, 1.75);
          \coordinate (L) at (1.50, 1.75);
          \coordinate (M) at (2.50, 1.75);
          \coordinate (N) at (9.50, 1.75);
          \coordinate (O) at (-1.46, 5.50);
          \coordinate (P) at (-2.68, 4.25);
          \coordinate (W) at (9.50, 3.00);
          \coordinate (X) at (1.50, 3.00);
          \coordinate (Y) at (0.50, 3.00);
          \coordinate (Z) at (-1.62, 3.00);
          \coordinate (AA_1) at (-2.57, 0.25);
          \coordinate (AA) at (-1.89, 1.75);
          \coordinate (AB) at (-1.44, -1.00);
          \coordinate (Q) at (1.00, 3.50);
          \coordinate (R) at (1.00, 3.00);
          \coordinate (S) at (2.00, 3.50);
          \coordinate (T) at (2.00, 3.00);
          \coordinate (U) at (9.00, 3.50);
          \coordinate (V) at (9.00, 3.00);
          \coordinate (AC) at (-0.50, 6.50);
          \coordinate (AD) at (-0.53, -1.51);
        
          \draw (A) -- (B);
          \draw (B) -- (C);
          \draw (D) -- (C);
          \draw (A) -- (D);
          \draw (0,1.5) rectangle (1,2.5);
          \draw (1,1.5) rectangle (2,2.5);
          \draw (9,1.5) rectangle (10,2.5);
          \draw (E) -- (F);
          \draw (F) -- (G);
          \draw (F) -- (G);
          \draw (G) -- (H);
          \draw (E) -- (H);
          \draw (2,5.5) rectangle (8,6);
          \node[above] at (I) {$G_1$};
          \node[above] at (J) {$G_2^\top$};
          \node[above] at (K) {$F_1$};
          \node[above] at (L) {$F_2$};
          \node[above] at (M) {$\dots$};
          \node[above] at (N) {$F_m$};
          \node[above] at (O) {$x \in \mathbb{F}_q^k$};
          \draw (0,3) rectangle (10,3.5);
          \node[above] at (P) {$\cC_1 : \mathbb{F}_q^k \to \mathbb{F}_q^{m\cdot \beta} = \left(\mathbb{F}_q^\beta\right)^m$};
          \node[above] at (W) {$x^{(m)}$};
          \node[above] at (X) {$x^{(2)}$};
          \node[above] at (Y) {$x^{(1)}$};
          \node[above] at (Z) {$x^{(i)}\in \mathbb{F}_q^\beta$};
          \node[above] at (AA_1) {$\cC_2^\top : \left(\mathbb{F}_q^\beta\right)^m \to \left(\mathbb{F}_q^\beta\right)^m$};
          \node[above] at (AA) {$ F_i : \mathbb{F}_q^\beta \to \mathbb{F}_q^\beta$};
          \draw (1,-1) rectangle (9,-0.5);
          \node[above] at (AB) {$y \in \mathbb{F}_q^r$};
          \draw (Q) -- (R);
          \draw (S) -- (T);
          \draw (U) -- (V);
          \draw[ultra thin] (AC) -- (AD);
        \end{tikzpicture}
        \caption{A schematic of the IKOS construction for $t$-linear uniform output families ($t$-LUOF). The functions $F_i$ are constant size random matrices which we sample. $\cC_1$ is a code with generator matrix $G_1$ and $\cC_2$ is a code with generator matrix $G_2$.}
        \label{fig:ikos_construction_2}
    \end{figure}

    Let us write $A = G_1 F G_2^\top$ to denote the function defined by the circuit $C_{k,r}$. We refer the reader to \Cref{fig:ikos_construction_2} which gives a schematic of $A$. What we want to prove is that, over the random field elements sampled by the circuit to implement $F$, $A$ is a $t$-LUOF. Let $x_1,\dots,x_t \in \F_q^k$ be linearly independent. By \Cref{lem:inner-products-uniform}, it suffices to consider a nonzero $y \in \F_q^{tr}$, and prove that $\langle y,(Ax_1,\dots,Ax_t)\rangle \sim \F_q$ over the $O(k+r)$ uniformly random field elements. Write $y = (y_1,\dots,y_t)$ with each $y_i \in \F_q^r$. Then
    \[
        \langle y,(x_1A,\dots,x_tA)=\sum_{i=1}^t \left\langle y_i, x_iG_1 F G_2^\top\right\rangle = \sum_{i=1}^t \left\langle y_iG_2, x_iG_1 F\right\rangle\ .
    \]
    Put $d_i := y_iG_2$ and $c_i := x_i G_1$ for all $i \in [t]$. Note that the vectors $c_1,\dots,c_t \in \cC_1$, and that they are all linearly independent. We additionally have that $d_1,\dots,d_t \in \cC_2$, and they are not all $0$. Without loss of generality, assume that $d_1 \neq 0$. For each $i \in [t]$, write $d_i = (d_i^{(1)},\dots,d_i^{(m)})$ and $c_i = (c_i^{(1)},\dots,c_i^{(m)})$ with each $d_i^{(\ell)},c_i^{(\ell)}\in\F_q^\beta$. Then
    \[
        \sum_{i=1}^t \left\langle y_iG_2, x_iG_1 F\right\rangle = \sum_{\ell=1}^m \sum_{i=1}^t \left\langle d_i^{(\ell)}, c_i^{(\ell)} F_\ell\right\rangle
    \]
    We claim there exists an $\ell^* \in [m]$ for which:
    \begin{itemize}
        \item[(a)] $d_1^{(\ell^*)} \neq 0$, and
        \item[(b)] the vectors $c_1^{(\ell^*)},\dots,c_t^{(\ell^*)} \in \F_q^\beta$ are \emph{linearly independent}.
    \end{itemize}
    To see this, since $\cC_2$ has distance $\geq \delta_2$, condition (a) holds for at least a $\delta_2$ fraction of $\ell \in [m]$. Considering now condition (b), for each $\lambda \in \F_q^t\setminus \{0\}$, we have that $c_\lambda = \sum_{i=1}^t \lambda_i c_i$ is a nonzero codeword (that they are nonzero uses the assumption that the $c_i$'s are linearly independent). Thus, for at least a $1-\eps$ fraction of $\ell \in [m]$, we have $c_\lambda^{(\ell)} \neq 0$, or in other words, for at most an $\eps$ fraction of $\ell \in [m]$ we have $c_\lambda^{(\ell)} = 0$. By a union bound, the fraction of $\ell \in [m]$ for which $c_\lambda^{(\ell)}=0$ for \emph{some} $\lambda \in \F_q^t\setminus\{0\}$ is at most $q^t \cdot \eps$. Hence, condition (b) holds for at least a $1-q^t \cdot \eps$ fraction of $\ell \in [m]$. Since we assumed $1-q^t \cdot \eps + \delta_2 > 1$, we find that the required $\ell^* \in [m]$ exists.

    Fix one such $\ell^* \in [m]$. Since the $F_1,\dots,F_m$ are independent, if we prove that $\sum_{i=1}^t \left\langle d_i^{(\ell^*)}, c_i^{(\ell^*)} F_{\ell^*}\right\rangle \sim \F_q$, then also $\sum_{\ell=1}^m \sum_{i=1}^t \left\langle d_i^{(\ell)}, c_i^{(\ell)} F_\ell\right\rangle \sim \F_q$, as desired. To prove $\sum_{i=1}^t \left\langle d_i^{(\ell^*)}, c_i^{(\ell^*)} F_{\ell^*}\right\rangle \sim \F_q$, suppose without loss of generality that we have $d_1^{(\ell^*)},\dots,d_s^{(\ell^*)}\neq0$ and $d^{(\ell^*)}_{s+1}=\cdots=d_t^{(\ell^*)}=0$. Then
    \[
        \sum_{i=1}^t \left\langle d_i^{(\ell^*)}, c_i^{(\ell^*)} F_{\ell^*}\right\rangle = \sum_{i=1}^s \left\langle d_i^{(\ell^*)}, c_i^{(\ell^*)} F_{\ell^*}\right\rangle \ .
    \]
    Now, since the $c_1^{(\ell^*)},\dots,c_s^{(\ell^*)}$ are linearly independent, over the randomness of $F_{\ell^*}$, the vectors $c_1^{(\ell^*)}F_{\ell^*},\dots,c_s^{(\ell^*)}F_{\ell^*}$ are independent and uniform over $\F_q^\beta$. Therefore, applying \Cref{lem:inner-products-uniform}, the inner products $\left\langle d_i^{(\ell^*)}, c_i^{(\ell^*)} F_{\ell^*}\right\rangle$ are independent and uniform for $i=1,\dots,s$, so the sum $\sum_{i=1}^s \left\langle d_i^{(\ell^*)}, c_i^{(\ell^*)} F_{\ell^*}\right\rangle$ is uniform over $\F_q$, as desired. 
\end{proof}

\begin{remark}\label{rmk:eps-t-LUOF}
	Again, as in \Cref{rmk:eps-t-hash} we can roughly say that since we require $\eps < q^{-t}$ and $\beta > 1/\eps^3$, we at the very least require $\beta > q^{3t}$, showing that the constant in the big-$O$ notation for the circuits in \Cref{thm:t-LUOF} is at least $q^{O(t)}$. In particular, this forces both $q$ and $t$ to be constant to obtain a meaningful bound. 
\end{remark}

\section{Cryptographic Applications}
\label{sec:apps}

In this section we discuss several cryptographic applications of fast bounded-independence functions, most of which also require fast duals.

\subsection{Fast Information-Theoretic MPC}

We start by applying asymptotically good fast codes with fast duals towards secure multiparty computation (MPC) in the client-server model. For constant-size functions, our protocol is perfectly secure against almost 1/2 of the $n$ servers, and has total circuit complexity of $O(n)$. This should be contrasted with similar protocols based on algebraic MPC-friendly codes, such as Reed-Solomon codes, whose circuit complexity is (at least) quasi-linear in $n$.

\begin{theorem}[Near-optimal perfect MPC for constant-size functions] \label{thm:mpc}
Let $m$ be a constant number of clients, $X$ and $Y$ finite input and output domains, and $\eps>0$ an arbitrarily small constant. For every (finite) $f:X^m \to Y^m$ and $\eps>0$, there is an $m$-client, $n$-server protocol for $f$ with the following features.
\begin{itemize}
    \item The protocol has perfect (information-theoretic) security against a passive adversary corrupting (at most) one client and $t=(1/2-\eps)\cdot n$ of the $n$ servers. 
    \item The total computational complexity of all parties, measured by Boolean circuit size, is $O(n)$. 
\end{itemize}
\end{theorem}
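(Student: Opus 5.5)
We plan to instantiate a standard BGW/CCD-style passive protocol in the client-server model, replacing Shamir secret sharing by the linear secret-sharing scheme induced by the fast code pair of \Cref{thm:first_dual}. Since $f$ is a constant-size function with a constant number of clients, we first fix a constant-size arithmetic circuit over a constant field $\F_q$ (say $q=2$, or a small extension when a multiplicative structure is needed) computing $f$. The protocol then proceeds gate by gate. Each client shares its inputs among the $n$ servers. The servers handle addition gates locally and handle each of the $O(1)$ multiplication gates via a secure multiplication step. Finally, the servers send output shares to the clients, who reconstruct. Since there are only $O(1)$ gates, the total complexity is $O(1)$ times the cost of sharing, reconstructing, and multiplying once, so it suffices that each of these costs $O(n)$.

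\textbf{Sharing.} To share a secret $s\in\F_q$, a client picks uniform randomness $\rho$ and computes a codeword of $\cC$ (with rate $R$ slightly below $1/2$) from the systematic encoding of $(s,\rho)$. It gives one non-systematic coordinate to each server, following the Massey/CDM paradigm~\cite{massey1995some,CDM00}. The privacy threshold of such a scheme is governed by the dual distance, and its reconstruction threshold by the primal distance. By \Cref{thm:first_dual}, both codes are $\eps$-close to GV with probability $1-2q^{-\eps n}$. Sampling is done once in a setup phase; for perfect security we instead fix a good choice of randomness, which exists by the probabilistic argument and can be hardwired since the circuit family is uniform up to the choice of seed. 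Encoding, and by \Cref{lem:fast_tranpose} the dual encoding used for reconstruction and checking, both cost $O(n)$.

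\textbf{Multiplication.} For the multiplication step we use the \cite{CDM00} approach. Rather than relying on multiplicative codes, which our random-like codes are not, we use the ``dual code'' trick: each server locally multiplies its two shares, re-shares the product, and the servers take a fixed public linear combination of the re-shared values, namely a recombination vector taken from the dual code $\cC^\perp$. This vector can be applied in $O(n)$ because $\cC^\perp$ is fast. Each server's re-sharing costs $O(n)$, which would give $O(n^2)$ in total. To avoid this, we let each server share its local product among a constant-size committee, or we use a king/dealer variant with packed resharing, so that the total cost remains $O(n)$. I expect this step to be the main obstacle. Achieving threshold $(1/2-\eps)n$ requires the componentwise square of the code to stay outside the reach of $t$ corruptions, which for random-like codes is not directly available. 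I would first try a two-server-subset model, where the parties jointly hold additive shares of the product, and otherwise fall back on a client-assisted degree reduction: a client (which sees no server views beyond $t$) receives masked products, re-encodes with the fast encoder, and redistributes.

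\textbf{Security and cost.} Perfect passive security follows by a simulator that samples the corrupted servers' shares uniformly. This is justified by the fact that any $t\le(1/2-\eps)n$ coordinates of a sharing are uniform whenever the dual distance exceeds $t/n$, which holds near the GV bound at rate about $1/2$. The final output reconstruction uses erasure decoding via the fast parity-check $H$, again in $O(n)$. Summing over the constantly many gates gives total Boolean circuit size $O(n)$.
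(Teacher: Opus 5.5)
Your multiplication step, which you yourself flag as the main obstacle, is a genuine gap, and none of the proposed fixes closes it.

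The orthogonality trick works only if one factor is shared under $\cC$ and the other under $\cC^\perp$. In that case $\sum_{i\ge 1}c_id_i=-c_0d_0$, so the local products form only an $n$-out-of-$n$ additive sharing of the product, and the recombination is a plain sum, not a vector of $\cC^\perp$. If instead both factors were $\cC$-shared, you would need a recombination vector in $(\cC^{*2})^\perp$, and for random-like codes of rate near $1/2$ one expects $\cC^{*2}=\F_q^{n+1}$.

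The local products are not private on their own. Continuing gate by gate also requires converting them back into $\cC$- and $\cC^\perp$-sharings, which by per-server resharing costs $\Theta(n^2)$. Your proposed repairs do not work as stated:
\begin{itemize}
\item Resharing to constant-size committees is insecure. An adversary holding $(1/2-\eps)n$ servers can take over many entire committees and learn $c_id_i$ for many honest $i$; over $\F_2$, $c_id_i=1$ reveals $c_i=d_i=1$. It thus sees far more coordinates than the privacy threshold allows.
\item The king/packed and ``client-assisted'' variants are left unspecified. In particular you never say who supplies the masks: if the client receiving the masked products also knows the masks, it learns the products.
\end{itemize}

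The missing idea is that the servers never evaluate the circuit at all. The paper uses the code pair only to implement a single OLE between two clients:
\begin{itemize}
\item Receiver shares $x$ as $c\in\cC$ with $c_0=x$.
\item Sender shares $y$ as $d\in\cC^\perp$ with $d_0=y$, and also sends an additive sharing $e$ of $z$.
\item Server $i$ returns $w_i=e_i-c_id_i$, and Receiver outputs $\sum_i w_i=xy+z$ by orthogonality.
\end{itemize}
The masks come from a different client than the one who sees the $w_i$'s. Since at most one client is corrupted, the $w_i$'s are uniform subject to their sum. Privacy of $x$ and of $y$ against $t$ servers follows from $\dist(\cC^\perp)>t+1$ and $\dist(\cC)>t+1$, respectively.

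A general constant-size $f$ then follows from OLE-completeness at the client level. The clients additively share their inputs among themselves, and each multiplication gate costs $O(m^2)$ OLEs. The total is $O(1)$ OLEs of cost $O(n)$ each, with no server-to-server resharing.

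Separately, your threshold claim fails for $q=2$. Codes at the GV bound with rate $1/2$ have relative distance only $h_2^{-1}(1/2)\approx 0.11$. Reaching $(1/2-\eps)n$ requires either taking $q$ a sufficiently large constant depending on $\eps$, or applying virtualization-based threshold amplification, as the paper does.
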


The above result is essentially the best one could hope for in the case of constant-size functions.
First, the security threshold is nearly optimal, since a similar protocol with security threshold $t>n/2$ would imply information-theoretic oblivious transfer. Second, the circuit complexity is asymptotically optimal, since if a client sends messages to $o(n)$ servers, the adversary can corrupt all of these servers and learn the client's input. (In the perfect security case, this holds even in random access models where we only charge for messages actually sent.) 

Protocols in the above client-server model can be used as a building block for other cryptographic applications. For example, the above result can be used to obtain an {\em oblivious transfer combiner} with similar near-optimality features using the MPC-based approach from~\cite{HIKN08}.

\paragraph{An OLE protocol.} We will start with the simpler special case of (1) $m=2$ clients, (2) the function $f={\sf OLE}$ that takes one bit $x$ from a receiver client, two bits $(y,z)$ from a sender client and delivers $xy+z$ (over $\F$) to the receiver, and (3) sub-optimal but constant fractional security threshold $t=\Omega(n)$. This case already captures the core difficulty of the problem. We will later outline how to bootstrap from this case to general functions and near-optimal security threshold using standard techniques. In the following $\F$ can be any finite field, but the case $\F=\F_2$ is sufficient for our purposes.

\begin{proposition}
\label{prop:ole}
Let $f: \F \times \F^2 \to \F$ be the OLE functionality, which takes $x\in\F$ from a Receiver client and $y,z\in\F$ from a Sender client, and delivers $f(x, (y,z)) = xy+z$ to Receiver. Let $\mathcal{C} \le \mathbb{F}^{n+1}$ and its dual $\mathcal{C}^\perp \le \mathbb{F}^{n+1}$ be linear codes with coordinates indexed $0, 1, \dots, n$. Assume the following:
\begin{enumerate}
    \item $\mathcal{C}$ has minimum distance $\dist> t + 1$.
    \item $\mathcal{C}^\perp$ has minimum distance $\dist^\perp> t + 1$.
    \item Both $\mathcal{C}$ and $\mathcal{C}^\perp$ have encoder circuits, $E$ and $E^\perp$, of size $s$.
\end{enumerate}
Then, there exists a perfectly secure protocol that computes $f$ against any passive adversary corrupting up to $t$ servers and at most one client. Furthermore, the total arithmetic circuit size of all parties is $O(n+s)$.
\end{proposition}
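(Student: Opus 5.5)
The plan is to use the classical dual-code multiplication trick (in the spirit of~\cite{massey1995some,CDM00}). The Receiver secret-shares $x$ with $\cC$, the Sender secret-shares $y$ with $\cC^\perp$, and the servers multiply their shares locally. Duality then makes the sum of all products over coordinates $0,\dots,n$ vanish. Concretely, for $c \in \cC$ and $d\in\cC^\perp$ we have $\sum_{i=0}^n c_id_i=0$, so $c_0d_0 = -\sum_{i=1}^n c_id_i$. An additive masking of $z$ then hides everything except $xy+z$ from the Receiver.

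The protocol I would analyze is the following.
\begin{enumerate}
\item Fix once and for all a codeword $g\in\cC$ with $g_0=1$ and a codeword $g'\in\cC^\perp$ with $g'_0=1$. These exist since $\dist^\perp>1$ means coordinate $0$ is not identically zero on $\cC$, and symmetrically for $\cC^\perp$.
\item The Receiver samples a uniform codeword $c'=E(\text{random})$ and sets $c=c'+(x-c'_0)g$. This makes $c$ uniform among codewords of $\cC$ with $c_0=x$. It sends $c_i$ to server $i$ for $i\in[n]$.
\item The Sender likewise produces $d\in\cC^\perp$ with $d_0=y$ using $E^\perp$ and $g'$, and sends $d_i$ to server $i$. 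It also samples uniform $w_1,\dots,w_n\in\F$ with $\sum_{i\ge1}w_i=-z$ and sends $w_i$ to server $i$.
\item Server $i$ sends $u_i=c_id_i+w_i$ to the Receiver.
\item The Receiver outputs $-\sum_{i\ge1}u_i = xy+z$.
\end{enumerate}
Correctness is the one-line duality identity above. The cost is two encoder calls of size $s$, plus $O(n)$ gates for the corrections, products, masks and final sum. This gives $O(n+s)$ in total.

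For security, the key lemma I would prove first is the standard privacy criterion. Sharing a secret at coordinate $0$ via a uniformly random codeword of a linear code $\mathcal{D}$ is perfectly private against a set $T\subseteq[n]$ if and only if no codeword of $\mathcal{D}^\perp$ is supported inside $T\cup\{0\}$ with nonzero $0$-th entry. To see this, view $(c_0,c_T)$ as a uniform element of a subspace, namely the projection of $\mathcal{D}$. Then $c_0$ is independent of $c_T$ exactly when the linear functional $c\mapsto c_0$ does not lie in the span of the functionals $c\mapsto c_i$, $i\in T$, on $\mathcal{D}$. That in turn happens exactly when there is no dual codeword of the stated shape. Given $|T|\le t$, such a dual codeword would have weight at most $t+1$. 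So privacy of $x$ follows from $\dist^\perp>t+1$, and privacy of $y$ from $\dist>t+1$, using $(\cC^\perp)^\perp=\cC$.

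Then I would simulate each corruption pattern.
\begin{itemize}
\item Only servers $T$, or Sender plus $T$: the view is $(c_T,d_T,w_T)$ plus the Sender's own inputs. Here $c_T$ is independent of $x$ by the lemma, and $w_T$ is uniform because $|T|<n$.
\item Receiver plus $T$: the view is $c$, $d_T$, $w_T$ and all $u_i$. Here $d_T$ is independent of $y$ by the lemma. Conditioned on everything else, the masks $w_i$ for $i\notin T$ are uniform subject to a single linear constraint. Hence $(u_i)_{i\notin T}$ is uniform subject to $\sum_i u_i=-(xy+z)$, which the simulator can sample from the output alone.
\end{itemize}
The main obstacle I expect is the Receiver-plus-servers case. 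There one must check carefully that conditioning on $d_T$ and on $c$ does not bias $\sum_{i\notin T}c_id_i$ in a way not already captured by the output. The point is that the fresh masks $w_i$ for $i\notin T$ absorb everything except their total, so only $xy+z$ leaks. A smaller point to handle is making the sampling of $c$ and $d$ fit within size $O(s+n)$, which the fixed-codeword correction via $g$ and $g'$ achieves.
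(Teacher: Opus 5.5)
Your proposal is correct and follows essentially the same route as the paper. Both use the same dual-code OLE protocol (up to sign conventions on the masks), the same privacy criterion via dual codewords supported on $T\cup\{0\}$, and the same case-by-case simulation in which the fresh masks on honest servers leave only $\sum_i u_i$, i.e.\ the output, visible to a corrupted Receiver. The one minor difference is how a uniform codeword with prescribed $0$-th coordinate is sampled: you add $(x-c'_0)g$ to a uniform codeword, for a fixed $g\in\cC$ with $g_0=1$, whereas the paper solves for one message symbol $m_k$ before encoding; both cost $O(n+s)$.
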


To prove the above proposition, we start with a lemma that applies the well-known connection between linear error correcting codes and secret sharing~\cite{massey1995some} while respecting the circuit complexity of the encoder. See Appendix~\ref{app:sharing} for a proof.

\begin{lemma}[Secret sharing from linear codes]
\label{lem:sharing}
Let $\mathcal{C} \le \mathbb{F}^{n+1}$ be a linear code with coordinates indexed $0, 1, \dots, n$ and dual distance $\dist^{\perp} \ge 2$. Let $E: \mathbb{F}^k \to \mathbb{F}^{n+1}$ be an arithmetic encoder circuit for $\mathcal{C}$ of size $s$. Then, the secret sharing scheme that shares a secret $\sigma \in \mathbb{F}$ into $n$ shares by sampling a uniformly random codeword $c \in \mathcal{C}$ subject to $c_0 = \sigma$ and dropping $c_0$ satisfies the following:
\begin{enumerate}
    \item Perfect privacy: Any subset of shares $T \subset \{1, \dots, n\}$ of size $|T| \le \dist^{\perp} - 2$ perfectly hides $\sigma$.
    \item Linear complexity: The generation of the shares from $\sigma$ can be computed by a randomized arithmetic circuit of size at most $s + n$.
\end{enumerate}
\end{lemma}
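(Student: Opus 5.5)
We will prove the two items of \Cref{lem:sharing} separately, starting from the standard description of the scheme in terms of the encoder. Since $E$ is linear, a uniformly random codeword is obtained as $c = E(u)$ for $u \sim \F^k$. The first step is to reduce the conditional sampling ``$c$ uniform in $\cC$ subject to $c_0=\sigma$'' to a linear affine shift. Because $\dist^\perp \ge 2$, the vector $e_0$ is not in $\cC^\perp$, so the coordinate map $c\mapsto c_0$ is a nonzero, hence surjective, linear functional on $\cC$. Fix once and for all a codeword $w\in\cC$ with $w_0 = 1$. Then the sharing is
\[
c \;=\; E(u) - E(u)_0\cdot w + \sigma\cdot w ,\qquad u\sim\F^k .
\]
This is uniform on the affine subspace $\{c\in\cC: c_0=\sigma\}$: the map $c\mapsto c - c_0 w$ is the projection of $\cC$ onto $\cC\cap\{c_0=0\}$ along $w$, it sends uniform to uniform, and the shift by $\sigma w$ is a bijection onto the fiber.

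\textbf{Complexity.} To compute the shares $c_1,\dots,c_n$, evaluate $E$ on $k\le n+1$ random field elements, which costs $s$. Then compute $(\sigma - E(u)_0)$ with one gate. Finally add the scalar multiple $(\sigma-E(u)_0)\,w_i$ to each coordinate $i\in[n]$; since $w$ is a fixed vector, this is at most $n$ scalar-multiply-and-add operations. Counting wires carefully, this gives size $s+O(n)$. To get exactly $s+n$, we can precompose: writing $\sigma' = \sigma - E(u)_0$ only requires one subtraction, and the additions $c_i = E(u)_i + \sigma' w_i$ can be charged as one gate per output. I expect that fiddling with the exact constant is the only annoyance here; if needed, I would state it as $O(s+n)$, which suffices for \Cref{prop:ole}.

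\textbf{Privacy.} Fix $T\subseteq[n]$ with $|T|\le \dist^\perp-2$, and set $T' = T\cup\{0\}$, so that $|T'|\le \dist^\perp-1$. The key fact is that a uniform codeword of $\cC$ restricted to any set $S$ of size less than $\dist^\perp$ is uniform on $\F^S$. To see this, apply \Cref{lem:inner-products-uniform}. For nonzero $\xi$ supported on $S$, the functional $c\mapsto\langle\xi,c\rangle$ is nonzero on $\cC$, since otherwise $\xi\in\cC^\perp$ with weight below $\dist^\perp$. Hence $\langle \xi, c\rangle$ is uniform on $\F$. Applying this fact to $S = T'$ shows that $(c_0, c_T)$ is uniform on $\F^{T'}$. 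Therefore, conditioned on $c_0=\sigma$, the share vector $c_T$ is uniform on $\F^T$ for every $\sigma$, and so its distribution is independent of the secret. This is perfect privacy. The main step is this dual-distance uniformity argument; the rest is bookkeeping.
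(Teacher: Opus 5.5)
Your privacy argument is correct but takes a different route from the paper. The paper argues by contradiction. If the shares in $T$ leak anything, then $c_0=\sum_{i\in T}a_ic_i$ on all of $\mathcal{C}$; this uses the standard fact, asserted there without proof, that leakage in a linear scheme implies linear reconstruction. That identity yields a dual codeword $h$ with $h_0=1$ and weight at most $|T|+1\le\dist^\perp-1$. You instead use \Cref{lem:inner-products-uniform} to show that $(c_0,c_T)$ is uniform on $\F^{T\cup\{0\}}$. This rules out every nonzero dual codeword supported on $T\cup\{0\}$, not only those with $h_0\neq 0$. Your route is self-contained and gives the stronger conclusion that $c_T$ is exactly uniform on $\F^T$ for every $\sigma$. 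That uniformity is what the paper actually invokes in Case~2 of the protocol analysis, when it simulates $\{d_i\}_{i\in T}$ as uniform. The paper's route is shorter and is the classical Massey-style argument.

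For the sampler, you correct on the output side, $c=E(u)+(\sigma-E(u)_0)w$. The paper corrects on the message side: it fixes a pivot $M_{0,k}\neq 0$ (which exists because $(1,0,\dots,0)\notin\mathcal{C}^\perp$), samples $m_1,\dots,m_{k-1}$ uniformly, solves for $m_k$, and runs $E$ once. Both give the uniform distribution on $\{c\in\mathcal{C}:c_0=\sigma\}$. However, your overhead is tied to the $n$ shares rather than to $k$, so you do not reach the stated bound $s+n$. Even counting each $E(u)_i+\sigma' w_i$ as one gate you get $s+n+1$; for the repetition code the only admissible $w$ is the all-ones vector. Counting scalar multiplications separately, you get roughly $s+2n+1$. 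The paper's count $s+k\le s+n$ uses $k\le n$, which holds because $\mathcal{C}$ is a proper subspace. That count is valid when scalar multiplications are free (e.g.\ over $\F_2$) and is $s+O(k)$ in general. Since \Cref{prop:ole} only uses $O(n+s)$, your bound suffices for the application. To match the lemma as stated, though, you should move the correction to the message side as the paper does.
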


We are now ready to describe the protocol establishing~\Cref{prop:ole}, using a simple client-server instance of MPC  from ``multiplicative'' secret sharing~\cite{CDM00}.

\begin{protocol}[Information-Theoretic OLE over $\mathbb{F}$]\label{protocol}
\ 
\begin{enumerate}
    \item \textbf{Setup:} Client 1 (Receiver) holds input $x \in \mathbb{F}$. Client 2 (Sender) holds inputs $y, z \in \mathbb{F}$. There are $n$ servers. A passive adversary may corrupt at most one client and $t$ servers.

    \item \textbf{Receiver input sharing:}
    \begin{enumerate}
        \item Receiver samples a uniformly random codeword $c \in \mathcal{C}$ subject to the constraint $c_0 = x$. This is done using~\Cref{lem:sharing}.
        \item  Receiver sends share $c_i$ to Server $i$ (for $i = 1, \dots, n$).
    \end{enumerate}
    
    \item \textbf{Sender input sharing:}
    \begin{enumerate}
        \item Sender samples a uniformly random codeword $d \in \mathcal{C}^\perp$ subject to the constraint $d_0 = y$. This too is done using~\Cref{lem:sharing}.
        \item Sender samples a uniformly random blinding vector $e = (e_1, \dots, e_n) \in \mathbb{F}^n$ subject to the strict additive constraint $\sum_{i=1}^n e_i = z$.
        \item Sender sends shares $d_i$ and $e_i$ to Server $i$ (for $i = 1, \dots, n$).
    \end{enumerate}
    
    \item \textbf{Server computation:}
    \begin{enumerate}
        \item Server $i$ receives $c_i$ from Receiver and $(d_i, e_i)$ from Sender.
        \item Server $i$ locally computes $w_i = e_i - c_i \cdot d_i$.
        \item Server $i$ sends $w_i$ back to  Receiver.
    \end{enumerate}
    
    \item \textbf{Output decoding:}
    \begin{enumerate}
        \item Receiver computes the final output as $\text{out} = \sum_{i=1}^n w_i$.
    \end{enumerate}
\end{enumerate}
\end{protocol}

See Appendix~\ref{app:protocol} for analysis of~\Cref{protocol}.

\paragraph{From OLE to general constant-size functions.} To extend the above protocol from two-party OLE to a  protocol with $m=O(1)$ clients for any constant-size function $f$, we can use the known completeness of OLE (an arithmetic variant of OT) for secure computation of Boolean or arithmetic circuits~\cite{GMW,IPS09}. At a high level, the clients start by additively sharing their inputs among each other. Then, each addition gate is processed by having the clients locally add their shares, and each multiplication gate is processed via $O(m^2)$ invocations of the OLE protocol to additively share each cross-term $a_ib_j$ between the client holding $a_i$ and the client holding $b_j$. The protocol concludes by having the clients exchange their output shares. When the field $\F$ is of constant size, the Boolean circuit complexity of the protocol is $O(n)$ for any constant-size circuit.

\paragraph{Achieving near-optimal security.} The security threshold $t$ of the protocol is close to the smaller minimal distance of the primal code and its dual, namely $\min(\dist,\dist^\perp)$. One approach for getting it arbitrarily close to $1/2$ is to let $\F$ be an arbitrarily large constant, in which case \Cref{thm:first_dual} gives a pair of codes with relative distance $1/2-\epsilon$. Another approach is to start with a protocol that has an arbitrarily small (but constant) security threshold, say by using $\F=\F_2$, and apply a general threshold amplification technique based on virtualization~\cite{bracha1987randomized,OstrovskyRV94,damgard2008scalable}.

\paragraph{Putting everything together.} With the above extensions of~\Cref{prop:ole}, we can use our constructions of fast dual codes (cf.~\Cref{thm:first_dual}) to obtain the main result of this section captured by~\Cref{thm:mpc}. This construction is semi-explicit in the sense that there is a PPT algorithm that generates circuits implementing the parties with $negl(n)$ failure probability.

\subsection{Fast Encrypted Matrix-Vector Product}

Switching to the computational security setting, a recent application of fast dual codes with random-like properties was recently given in the context of computing on encrypted data~\cite{BenhamoudaCHIKM25}. Concretely, an {\em encrypted matrix-vector product} (EMVP) protocol enables a client to encode a matrix $M$ using a secret code $\cal C$ and then compute an unbounded number of matrix-vector products $Mq_i$ by encrypting each query vector $q_i$ using the dual code ${\cal C}^\perp$. Linear algebra attacks are prevented by adding a suitable type of noise. Under a variant of the Learning Subspace with Noise (LSN) assumption~\cite{DodisKL09,CIMR}, the protocol hides both $M$ and $q_i$ from a server who stores the encrypted matrix and processes the encrypted queries.

Using our new families of fast dual codes, we obtain the first candidate EMVP protocols in which both the matrix encoding and the query encoding have asymptotically optimal circuit complexity. The recent construction of fast dual codes from~\cite{BR26} is limited to $\F_2$ and has a non-negligible failure probability.

\subsection{Fast and Conservative Substitutes for Random Linear Codes}
\label{sec:substitute}

The previous two applications inherently relied on having a fast code coupled with a fast dual. A broader use case, which does not require duality, is providing a fast substitute to a random linear code in the context of applications that rely on the hardness of ``noisy linear algebra.'' Our new construction of fast $t$-LUOF (\Cref{sec:t-luof}) can serve as a more conservative candidate than the previous 1-LUOF construction from~\cite{DI14}.

In more detail, in most cryptographic applications of noisy linear algebra, one typically considers the minimal distance of the code or its dual as a heuristic measure of hardness. Thus, it is perhaps not clear why the $t$-local similarity provided by $t$-LUOF (for $t\geq 2$) is interesting. To illustrate this, consider code-based constructions that rely on the LPN assumption,\footnote{What we will give below is often called \emph{dual}-LPN, which is equivalent to the more standard variant.} where one assumes that the distributions
\[
    (A,Ae) ~\text{ and }~ (A,b)
\]
are computationally indistinguishable, where $e$ is a random sparse vector and $b$ is a uniformly random vector (in this case, one should think of $A \in \F_q^{k \times r}$ with $r>k$). This use of minimal distance as a proxy for conjectured LPN hardness is common in the recent line of works on pseudorandom correlation generators (PCGs)~\cite{BoyleCGI18,boyle2019efficient}. Indeed, minimal distance is sufficient for provably defeating any attack that fits under a natural ``linear attack'' framework~\cite{CouteauRR21}.

However, it is arguably a more conservative to use an $A$ that shares more nuanced combinatorial properties with random linear codes, such as list-decodability/-recoverability. This is exactly what the $t$-LUOF property guarantees. 

\subsection{MPC-Friendly $t$-Wise Hash Functions}

Finally, when distributing the computation of a function $f$ between two or more parties, either via classical MPC techniques~\cite{Yao,GMW,BGW,CCD} or via fully homomorphic encryption~\cite{Gentry09}, the cost of the protocol depends on both the circuit size and the algebraic degree. For $t=O(1)$, our constructions of fast $t$-wise independent hash functions from~\Cref{sec:t-wise-hash} {\em simultaneously} optimize both asymptotic size and degree. 

There are many algorithmic applications that benefit from $t$-wise independent hashing for constant $t>2$. Two well-known examples are the use of 4-wise independence in sketching~\cite{alon1999space} and 5-wise independence in linear probing~\cite{pagh2007linear}.

\bibliography{refs}

\appendix

\section{Deferred Proofs}

\subsection{Proof of~\Cref{lem:sharing}}
\label{app:sharing}

\begin{proof}
We prove each property separately.

\paragraph{Perfect privacy.}
Suppose for the sake of contradiction that a set of shares $T$ with $|T| \le \dist^{\perp} - 2$ leaks information about the secret $\sigma$. In a linear secret sharing scheme, this implies that the secret can be reconstructed via a linear combination of the shares in $T$. Thus, there exist constants $\{a_i\}_{i \in T}$ over $\mathbb{F}$ such that for all valid codewords $c \in \mathcal{C}$:
\[ c_0 = \sum_{i \in T} a_i c_i \]
Rearranging, we get the following linear constraint:
\[ c_0 - \sum_{i \in T} a_i c_i = 0 \]
By definition, any such a constraint corresponds to a vector in the dual code $\mathcal{C}^{\perp}$. Specifically, it defines a dual codeword $h \in \mathcal{C}^{\perp}$ where $h_0 = 1$, $h_i = -a_i$ for $i \in T$, and $h_j = 0$ for all $j \notin T \cup \{0\}$.

Because $h_0 \neq 0$, $h$ is a non-zero dual codeword. Its Hamming weight is at most $1 + |T|$. Given our assumption that $|T| \le \dist^{\perp} - 2$, the weight of $h$ is bounded by:
\[ wt(h) \le 1 + |T| \le 1 + (\dist^{\perp} - 2) = \dist^{\perp} - 1 \]
This directly contradicts the premise that the dual distance of $\mathcal{C}$ is $\dist^{\perp}$, which requires all non-zero dual codewords to have a weight of at least $\dist^{\perp}$. Therefore, no such linear combination exists, and the shares in $T$ are perfectly independent of $\sigma$.

\paragraph{Circuit size.} Using column vector notation, encoder circuit $E$ implements a linear map $c = M m$, where $m \in \mathbb{F}^k$ is a message vector and $M \in \mathbb{F}^{(n+1) \times k}$ is the generator matrix of $\mathcal{C}$. The first coordinate of the resulting codeword is:
\[ c_0 = \sum_{j=1}^k M_{0,j} m_j \]
Because $\dist^{\perp} \ge 2$, the vector $(1, 0, \dots, 0)$ is strictly excluded from $\mathcal{C}^{\perp}$. Consequently, $c_0$ is not identically zero for all codewords, meaning there is at least one index $j^*$ where the matrix entry $M_{0,j^*} \neq 0$. Without loss of generality, assume $M_{0,k} \neq 0$.

To enforce the condition $c_0 = \sigma$ and uniformly sample the remaining degrees of freedom, the arithmetic sharing circuit performs the following steps:
\begin{enumerate}
    \item Takes $\sigma$ and $k-1$ uniformly random field elements $m_1, \dots, m_{k-1} \in \mathbb{F}$ as input.
    \item Computes the dependent message symbol $m_k$ algebraically:
    \[ m_k = M_{0,k}^{-1} \left( \sigma - \sum_{j=1}^{k-1} M_{0,j} m_j \right) \]
    Because the coefficients $M_{0,j}$ are fixed constants, computing $m_k$ requires at most $k-1$ field additions/subtractions and $1$ scalar multiplication. This step adds exactly $k$ arithmetic gates.
    \item Feeds $m = (m_1, \dots, m_k)$ into the original encoder circuit $E$, which requires $s$ gates.
\end{enumerate}
The total size of the new sharing circuit is $s + k$. Since $\mathcal{C}$ is a proper subspace of $\mathbb{F}^{n+1}$ (implied by the dual distance strictly greater than 1), its dimension $k$ is bounded by $k \le n$. Therefore, the total arithmetic circuit size is strictly bounded by $s + n$.
\end{proof}

\subsection{Analysis of~\Cref{protocol}}
\label{app:protocol}

We separately analyze correctness, circuit complexity, and security.

\paragraph{Perfect correctness.}
Receiver computes $\text{out} = \sum_{i=1}^n w_i = \sum_{i=1}^n (e_i - c_i d_i)$.
By linearity, this separates into:
\[ \text{out} = \sum_{i=1}^n e_i - \sum_{i=1}^n c_i d_i \]
Because $c \in \mathcal{C}$ and $d \in \mathcal{C}^\perp$, their inner product over all $n+1$ coordinates is exactly zero: $\sum_{i=0}^n c_i d_i = 0$. This implies $\sum_{i=1}^n c_i d_i = -c_0 d_0 = -xy$.
Furthermore, Sender constructed the blinding vector such that $\sum_{i=1}^n e_i = z$. Substituting these values yields:
\[ \text{out} = z - (-xy) = xy + z \]
as required.

\paragraph{Circuit complexity.} Using~\Cref{lem:sharing}, sampling $c \in \mathcal{C}$ given $c_0=x$ using the size-$s$ encoder $E$ requires an arithmetic circuit of size at most $s+n$. Symmetrically, sampling $d \in \mathcal{C}^\perp$ requires size at most $s+n$. Sampling the additive shares $e$ requires generating $n-1$ random elements and $n-1$ subtractions, which takes $O(n)$ gates. Each of the $n$ servers performs exactly one multiplication and one subtraction ($O(1)$ per server, $O(n)$ total). The Receiver performs $n-1$ additions to reconstruct the output ($O(n)$ gates). Summing these costs yields a total arithmetic circuit size of $O(n+s)$.

\paragraph{Perfect security.} We show that the adversary's view can be perfectly simulated. Let $T \subset \{1, \dots, n\}$ be the set of corrupted servers, with $|T| \le t$. If only the servers are corrupted, security follows immediately from~\Cref{lem:sharing}. We consider the cases where one client is corrupted, and assume wlog that $|T|=t$.

\vspace{0.5em}
\noindent\textit{Case 1: Sender and $t$ servers are corrupted.} The real view consists of $y, z$, the codeword $d$, the blinding vector $e$, and the shares $\{c_i\}_{i \in T}$. The privacy of the Receiver's input depends on the dual distance of $\mathcal{C}$, which is $\dist^\perp$. Because $\dist^\perp > t + 1$, we have $|T| \le t \le d^\perp - 2$. By~\Cref{lem:sharing}, the projection of $c$ onto the coordinates perfectly hides $x$.

\vspace{0.5em}
\noindent\textit{Case 2: Receiver and $t$ servers are corrupted.}
The real view consists of $x$, the codeword $c$, the output $xy+z$, the shares $\{d_i, e_i\}_{i \in T}$ from  Sender, and the responses $\{w_i\}_{i=1}^n$. The privacy of Sender's input $y$ depends on the dual distance of $\mathcal{C}^\perp$, which is the minimum distance of the primal code $\mathcal{C}$, denoted as $\dist$. 
\begin{enumerate}
    \item \textbf{Simulate Sender's shares:} Because $\dist > t + 1$, we have $|T| \le t \le \dist - 2$. The true projection of $d$ onto $T$ is uniformly distributed over $\mathbb{F}^{|T|}$. The simulator samples $\{d_i\}_{i \in T}$ uniformly at random. It also samples $\{e_i\}_{i \in T}$ uniformly at random.
    \item \textbf{Simulate corrupted server responses:} For $i \in T$, the simulator computes $w_i = e_i - c_i d_i$.
    \item \textbf{Simulate honest server responses:} The final sum must evaluate to the legitimate output $xy+z$. The sum of the uncorrupted responses is deterministically constrained: $\sum_{i \notin T} w_i = (xy+z) - \sum_{i \in T} w_i$. The simulator samples the remaining $n - |T|$ values of $w_i$ uniformly at random subject to this sum.
\end{enumerate}
In the real protocol, the uncorrupted blinding shares $\{e_i\}_{i \notin T}$ are uniformly random field elements subject only to the constraint $\sum_{i \notin T} e_i = z - \sum_{i \in T} e_i$. Because $w_i = e_i - c_i d_i$, the $\{e_i\}_{i \notin T}$ values act as an information-theoretic one-time pad perfectly masking the values of $c_i d_i$. The constraint on the visible responses is:
\[ \sum_{i \notin T} w_i = \sum_{i \notin T} (e_i - c_i d_i) = \left( z - \sum_{i \in T} e_i \right) - \left( -xy - \sum_{i \in T} c_i d_i \right) = (xy+z) - \sum_{i \in T} (e_i - c_i d_i) = (xy+z) - \sum_{i \in T} w_i \]
The real variables $\{w_i\}_{i \notin T}$ are uniformly distributed subject exactly to this equation, perfectly matching the simulator. The adversary extracts no additional information about $y$ or $z$.

\end{document}